\def \WithSectionEight {}
\theoremstyle{plain}
\newtheorem{lemma}{Lemma}
\newtheorem{theorem}{Theorem}
\newtheorem{corollary}{Corollary}
\theoremstyle{definition}
\newtheorem{definition}{Definition}
\newtheorem{example}{Example}
\newtheorem{assumption}{Assumption}
\theoremstyle{remark}
\newtheorem{remark}{Remark}
\renewenvironment{definition}
{\begin{olddefinition}}
{\hfill\ensuremath{\square}\end{olddefinition}}
\renewenvironment{example}
{\begin{oldexample}}
{\hfill\ensuremath{\square}\end{oldexample}}
\renewenvironment{remark}
{\begin{oldremark}}
{\hfill\ensuremath{\square}\end{oldremark}}
\newenvironment{ienumerate}
	{\begin{enumerate}}
	{\end{enumerate}}
\newenvironment{oneenumerate}
	{\begin{enumerate}}
	{\end{enumerate}}
	\newcommand{\sectionEight}[1]{#1}
	\newcommand{\sectionEight}[1]{}
\definecolor{USPNsable}{HTML}{ad947e}
\crefname{line}{\text{line}}{\text{lines}} %
\tikzstyle{pta}=[auto, >=stealth'] %
\tikzstyle{every node}=[initial text=]
\tikzstyle{location}=[rounded rectangle, minimum size=12pt, draw=black, fill=blue!10, inner sep=2pt]
\tikzstyle{invariant}=[draw=black, dotted, inner sep=1pt, node distance=0] %
\tikzstyle{final}=[double, fill=blue!50]
\tikzstyle{locationdup}=[location, fill=green!10, inner sep=2pt]
\tikzstyle{edgedup}=[very thick] %
\tikzstyle{beliefautomaton}=[scale= .7, every node/.style={scale=0.8}, >=stealth'] %
\tikzstyle{rct}=[thick, draw=purple!75!black, fill=purple!20!white]
\tikzstyle{private}=[fill=yellow,thick]
\definecolor{coloract}{rgb}{0.50, 0.70, 0.30}
\definecolor{colorclock}{rgb}{0.4, 0.4, 1}
\definecolor{colordisc}{rgb}{1, 0, 1}
\definecolor{colorloc}{rgb}{0.4, 0.4, 0.65}
\definecolor{colorparam}{rgb}{1, 0.6, 0.0}
\newcommand{\styleact}[1]{\ensuremath{\textcolor{coloract}{{\mathit{#1}}}}}
\newcommand{\styleclock}[1]{\ensuremath{\textcolor{colorclock}{{#1}}}}
\newcommand{\textstyleact}[1]{\ensuremath{\mathit{#1}}}
\newcommand{\textstyleclock}[1]{\ensuremath{\mathit{#1}}}
\newcommand{\textstyleloc}[1]{#1}
\newcommand{\textstyleparam}[1]{\ensuremath{{#1}}}
\newcommand{\textstyleedge}[1]{\ensuremath{\mathsf{e}_{#1}}}
\newcommand{\eg}{e.g.,\xspace}
\newcommand{\ie}{i.e.,\xspace}
\newcommand{\st}{s.t.\xspace}
\newcommand{\defProblem}[3]
{%

	\smallskip

	\noindent\fcolorbox{black}{USPNsable!20}{
	\begin{minipage}{.95\columnwidth}
		\textbf{#1:}\\
		\textsc{Input}: #2\\
		\textsc{Problem}: #3
	\end{minipage}
}

	\smallskip

}
\newcommand{\game}{\mathcal{G}}
\newcommand{\gameStateSet}{Q}
\newcommand{\gameInitState}{q_0}
\newcommand{\gameTransitions}{\delta^{\game}}
\newcommand{\gameEvent}{\Sigma}
\newcommand{\gameGood}{G}
\newcommand{\Finitelyvarying}{Finitely-varying}
\newcommand{\finitelyvarying}{finitely-varying}
\newcommand{\NonFinitelyvarying}{Non-finitely-varying}
\newcommand{\nonFinitelyvarying}{non-finitely-varying}
\newcommand{\badBelief}{leaking}
\newcommand{\BadBelief}{Leaking}
\newcommand{\activated}{\ensuremath{\mathfrak{E}}}
\newcommand{\translabelany}{\dagger}
\newcommand{\styleAutomaton}[1]{\ensuremath{\mathcal{#1}}}
\newcommand{\styleSetRegion}[1]{\ensuremath{\mathsf{#1}}}
\newcommand{\styleBigSet}[1]{\ensuremath{\mathbb{#1}}}
\newcommand{\styleSetBelief}[1]{\ensuremath{\mathfrak{#1}}}
\newcommand{\regset}[1]{\styleSetRegion{R}_{#1}}
\newcommand{\regaut}[1]{\styleAutomaton{R}_{#1}}
\newcommand{\clockeq}{\approx}
\newcommand{\constantmax}[1]{c_{#1}}
\newcommand{\class}[1]{[#1]}
\newcommand{\beliefcontrolset}[2]{\styleBigSet{I}_{#1}^{#2}}
\newcommand{\beliefencountset}[2]{\styleBigSet{E}_{#1}^{#2}}
\newcommand{\finalclass}[1]{\styleSetRegion{R}^F_{#1}}
\newcommand{\secret}[1]{\styleSetRegion{Private}_{#1}}
\newcommand{\notsecret}[1]{\styleSetRegion{Public}_{#1}}
\newcommand{\globaltime}{t}
\newcommand{\compatible}{\ensuremath{\strategy}\mbox{-compatible}}
\newcommand{\compatiblearg}[1]{\ensuremath{#1}\mbox{-compatible}}
\newcommand{\beliefAutomaton}{belief automaton} %
\newcommand{\BeliefAutomaton}{Belief automaton} %
\newcommand{\beliefNoArg}{\ensuremath{\styleSetBelief{b}}}
\newcommand{\belief}[1]{\ensuremath{\beliefNoArg_{#1}}}
\newcommand{\beliefInitState}{\ensuremath{\bot}}
\newcommand{\BeliefAut}[1]{\ensuremath{\styleAutomaton{B}_{#1}}}
\newcommand{\strategy}{\ensuremath{\sigma}}
\newcommand{\substrat}{\ensuremath{\nu}}
\newcommand{\bStrategy}{\beliefNoArg{}-strategy}
\newcommand{\beliefcontrol}[2]{\beliefNoArg_{#1}^{#2}}
\newcommand{\BeliefAutState}[1]{\ensuremath{\styleSetBelief{S}^{\BeliefAut{#1}}}}
\newcommand{\BeliefAutActions}[1]{\styleSetBelief{A}^{\BeliefAut{#1}}}
\newcommand{\BeliefAutTransitions}[1]{\styleSetBelief{d}^{\BeliefAut{#1}}}
\newcommand{\BeliefControlAut}[2]{\BeliefAut{#1}^{#2}}
\newcommand{\BeliefControlAutState}[2]{\ensuremath{\styleSetBelief{S}^{\BeliefControlAut{#1}{#2}}}}
\newcommand{\BeliefControlAutActions}[2]{\ensuremath{\styleSetBelief{A}^{\BeliefControlAut{#1}{#2}}}}
\newcommand{\BeliefControlAutTransitions}[2]{\ensuremath{\styleSetBelief{d}^{\BeliefControlAut{#1}{#2}}}}
\newcommand{\intervalBelief}{interval belief}
\newcommand{\intervalBeliefs}{interval beliefs}
\newcommand{\shrink}{\ensuremath{\mathit{shrink}}}
\newcommand{\fract}[1]{\ensuremath{\textnormal{fr}\left(#1\right)}}
\newcommand{\intpart}[1]{\ensuremath{\lfloor#1\rfloor}}
\newcommand{\uppart}[1]{\ensuremath{\lceil#1\rceil}}
\newcommand{\assign}{\leftarrow}
\newcommand{\checkUseMacro}[1]{#1}
\newcommand{\set}[1]{\ensuremath{\left\{#1\right\}}}
\newcommand{\setN}{\ensuremath{\mathbb{N}}}
\newcommand{\setQ}{\ensuremath{\mathbb{Q}}}
\newcommand{\setQgeqzero}{\ensuremath{\setQ_{\geq 0}}}
\newcommand{\setR}{\ensuremath{\mathbb{R}}}
\newcommand{\setRgeqzero}{\ensuremath{\setR_{\geq 0}}}
\newcommand{\setRpositif}{\ensuremath{\setR_{>0}}}
\newcommand{\setRnegatif}{\ensuremath{\setR_{<0}}}
\newcommand{\setZ}{\ensuremath{\mathbb{Z}}}
\newcommand{\interval}{\ensuremath{\mathcal{I}}}
\newcommand{\subinterval}{\ensuremath{\iota}}
\newcommand{\init}{\ensuremath{0}}
\newcommand{\priv}{\ensuremath{{\mathit{priv}}}}
\newcommand{\final}{\ensuremath{f}}
\newcommand{\admits}{\ensuremath{\vdash}}
\newcommand{\satisfies}{\ensuremath{\models}}
\newcommand{\clock}{\ensuremath{\textstyleclock{x}}}
\newcommand{\clocky}{\ensuremath{\textstyleclock{y}}}
\newcommand{\clocki}[1]{\ensuremath{\textstyleclock{\clock_{#1}}}}
\newcommand{\ClockCard}{H} %
\newcommand{\clockval}{\ensuremath{\mu}}\nomenclature[C]{\clockval}{Clock valuation} %
\newcommand{\ClockSet}{\ensuremath{\mathbb{X}}} %
\newcommand{\ClocksZero}{\ensuremath{\vec{0}}}\nomenclature[C]{\ClocksZero}{Clock valuation assigning $0$ to all clocks}
\newcommand{\resets}{\ensuremath{R}}
\newcommand{\reset}[2]{\ensuremath{[#1]_{#2}}}
\newcommand{\clockextra}{\ensuremath{\textstyleclock{z}}}
\newcommand{\TA}{\ensuremath{\checkUseMacro{\styleAutomaton{A}}}}
\newcommand{\TAprivextend}{\ensuremath{(\ActionSet, \LocSet, \locinit, \locpriv, \LocFinalSet, \ClockSet, \invariant, \EdgeSet)}}
\newcommand{\TAdup}{\ensuremath{\mathcal{A}^{\mathit{dup}}}}
\newcommand{\TAdupi}[1]{\mathcal{A}^{dup}_{#1}}
\newcommand{\TAi}[1]{\TA_{#1}}
\newcommand{\exTAopaque}{\ensuremath{\TA_{\mathit{opaque}}}}
\newcommand{\exTAopaquebis}{\ensuremath{\TA'_{\mathit{opaque}}}}
\newcommand{\exTANFV}{\ensuremath{\TA_{\mathit{nfv}}}}
\newcommand{\exTAcounterex}{\ensuremath{\TA_{\mathit{counterex}}}}
\newcommand{\action}{\ensuremath{\textstyleact{a}}}%
\newcommand{\ActionSet}{\ensuremath{\Sigma}}%
\newcommand{\ActionSetC}{\ensuremath{\ActionSet_c}}
\newcommand{\ActionSetU}{\ensuremath{\ActionSet_u}}
\newcommand{\constraint}{\ensuremath{C}}
\newcommand{\true}{\ensuremath{\mathit{true}}}
\newcommand{\edge}{\ensuremath{\checkUseMacro{\mathsf{e}}}}\nomenclature[P]{\edge}{An edge}
\newcommand{\edgei}[1]{\ensuremath{\checkUseMacro{\edge_{#1}}}}
\newcommand{\edgeicolor}[1]{\ensuremath{\textcolor{purple}{\edgei{#1}}}}
\newcommand{\EdgeSet}{\ensuremath{E}}%
\newcommand{\edgedec}{(\loc, \guard, \action, \resets, \loc')}
\newcommand{\guard}{\ensuremath{g}}
\newcommand{\SeqTransitions}{\ensuremath{v}}
\newcommand{\invariant}{\ensuremath{I}}
\newcommand{\invariantof}[1]{\invariant(#1)}
\newcommand{\invariantofdup}[1]{\invariant'(#1)}
\newcommand{\loc}{\ensuremath{\textstyleloc{\ell}}}\nomenclature[P]{\loc}{A location} %
\newcommand{\loci}[1]{\ensuremath{\textstyleloc{\loc_{#1}}}}
\newcommand{\locdup}{\loc^\textstyleloc{p}}
\newcommand{\locdupi}[1]{\loc_{#1}^\textstyleloc{p}}
\newcommand{\locinit}{\ensuremath{\textstyleloc{\loc_\init}}}
\newcommand{\locfinal}{\ensuremath{\textstyleloc{\loc_\final}}}
\newcommand{\locfinalpriv}{{\textstyleloc{\loc}}^\textstyleloc{p}_{\textstyleloc{\final}}}
\newcommand{\locpriv}{\ensuremath{\textstyleloc{\loc_\priv}}}
\newcommand{\locduppriv}{\ensuremath{\textstyleloc{\loc_\priv^\textstyleloc{p}}}}
\newcommand{\LocSet}{\ensuremath{L}}\nomenclature[P]{\LocSet}{Set of locations} %
\newcommand{\LocFinalSet}{F}
\newcommand{\run}{\checkUseMacro{\rho}} %
\newcommand{\duration}{\ensuremath{\mathit{dur}}} %
\newcommand{\runduration}[1]{\ensuremath{\checkUseMacro{\duration}(#1)}}
\newcommand{\region}{\ensuremath{r}}
\newcommand{\regionclass}[1]{\ensuremath{\left[#1\right]}}
\newcommand{\silentaction}{\ensuremath{\varepsilon}}
\newcommand{\translabelinstant}{0}
\newcommand{\translabelsame}{0^+}
\newcommand{\translabelchge}{1}
\newcommand{\RegAutTransitions}{\delta^{\styleSetRegion{R}}}
\newcommand{\RegAutActions}{\ActionSet^{\styleSetRegion{R}}}
\newcommand{\RegAutTransitionWith}[1]{\xrightarrow{#1}_{\styleSetRegion{R}}}
\newcommand{\metaStrategy}{meta-\mbox{strategy}}
\newcommand{\MetaStrategy}{Meta-strategy}
\newcommand{\metaStrategies}{meta-strategies}
\newcommand{\MetaStrategies}{Meta-strategies}
\newcommand{\metastrat}{\ensuremath{\phi}}
\renewcommand{\next}[1]{\ensuremath{\mathit{next}_{#1}}}
\newcommand{\TTS}{\ensuremath{\mathit{TTS}}}
\newcommand{\TTSStates}{\ensuremath{S}}
\newcommand{\TTSstate}{\ensuremath{s}}
\newcommand{\TTSinit}{\ensuremath{\TTSstate_{\init}}}
\newcommand{\TTStransition}{\ensuremath{\delta}}
\newcommand{\TTStransitionControl}{\ensuremath{\delta^\strategy}} %
\newcommand{\semantics}[1]{\ensuremath{\TTS_{#1}}}
\newcommand{\semanticsextend}{\ensuremath{\left(\TTSStates, \TTSinit, \ActionSet \cup \setRgeqzero, \TTStransition \right)}}
\newcommand{\semanticscontrolextend}{\ensuremath{\left(\TTSStates, \TTSinit, \ActionSet \cup \setRgeqzero, \TTStransitionControl \right)}}
\newcommand{\transitionWith}[1]{\stackrel{#1}{\mapsto}}
\newcommand{\transitionWithControl}[1]{\stackrel{#1}{\mapsto}_{\sigma}}
\newcommand{\transitionLabel}[1]{\xrightarrow{#1}}
\newcommand{\timeEdge}[1]{ \paramd_{#1}, \edge_{#1}}
\newcommand{\transitionControlLabel}[2]{\transitionLabel{\timeEdge{#1}}_{#2}}
\newcommand{\laststate}{\ensuremath{\mathit{last}}}
\newcommand{\paramd}{\textstyleparam{\ensuremath{d}}}
\newcommand{\expiringBound}{\checkUseMacro{\ensuremath{\Delta}}}\nomenclature[P]{\expiringBound}{Expiring bound considered in \vref{sec:opacity:ICECCS}}
\newcommand{\setBound}{\ensuremath{\mathcal{D}}}\nomenclature[P]{\setBound}{Set of expiring bounds} %
\newcommand{\PrivVisit}[1]{\ensuremath{\mathit{Visit}^{\mathit{priv}}(#1)}}
\newcommand{\PubVisit}[1]{\ensuremath{\mathit{Visit}^{\mathit{pub}}(#1)}}
\newcommand{\PrivVisitStrat}[2]{\ensuremath{\mathit{Visit}^{\mathit{priv}}_{#2}(#1)}}
\newcommand{\PubVisitStrat}[2]{\ensuremath{\mathit{Visit}^{\mathit{pub}}_{#2}(#1)}}
\newcommand{\PrivDurVisit}[1]{\ensuremath{\mathit{DVisit}^\mathit{priv}(#1)}}
\newcommand{\PubDurVisit}[1]{\ensuremath{D\mathit{Visit}^{\mathit{pub}}(#1)}}
\newcommand{\PrivDurVisitStrat}[2]{\ensuremath{\mathit{DVisit}^\mathit{priv}_{#2}(#1)}}
\newcommand{\PubDurVisitStrat}[2]{\ensuremath{D\mathit{Visit}^{\mathit{pub}}_{#2}(#1)}}
\newcommand{\TAtext}{TA}
\newcommand{\TAstext}{TAs}
\newcommand{\opaqueText}{ET-opaque}
\newcommand{\opacityText}{ET-opacity}
\newcommand{\existentialOpaqueText}{\checkUseMacro{\ensuremath{\exists}-ET-opaque}}
\newcommand{\existentialOpacityText}{\checkUseMacro{\ensuremath{\exists}-ET-opacity}}
\newcommand{\ExistentialOpacityText}{\existentialOpacityText{}}
\newcommand{\weakOpaqueText}{\checkUseMacro{weakly ET-opaque}}
\newcommand{\weakOpacityText}{\checkUseMacro{weak ET-opacity}}
\newcommand{\fullOpaqueText}{\checkUseMacro{fully ET-opaque}}
\newcommand{\fullOpacityText}{\checkUseMacro{full ET-opacity}}
\newcommand{\WeakOpacityText}{\checkUseMacro{Weak ET-opacity}}
\newcommand{\FullOpacityText}{\checkUseMacro{Full ET-opacity}}
\newcommand{\almostOpacityText}{almost \opacityText{}}
\newcommand{\closedOpacityText}{closed \opacityText{}}
\newcommand{\almostOpaqueText}{almost \opaqueText{}}
\newcommand{\AlmostFullOpacityText}{Almost \fullOpacityText{}}
\newcommand{\almostFullOpacityText}{almost \fullOpacityText{}}
\newcommand{\almostFullOpaqueText}{almost \fullOpaqueText{}}
\newcommand{\ClosedFullOpacityText}{Closed \fullOpacityText{}}
\newcommand{\closedFullOpacityText}{closed \fullOpacityText{}}
\newcommand{\closedFullOpaqueText}{closed \fullOpaqueText{}}
\newcommand{\open}[1]{\llparenthesis #1 \rrparenthesis}
\newcommand{\closed}[1]{\llbracket #1 \rrbracket}
\newcommand{\ComplexityFont}[1]{{\sffamily\upshape #1}}
\newcommand{\NEXPTIME}{\ComplexityFont{NEXPTIME}\xspace}
\newcommand{\twoEXPTIME}{\ComplexityFont{2EXPTIME}\xspace}
\newcommand{\NLOGSPACE}{\ComplexityFont{NLOGSPACE}\xspace}
\newcommand{\EXPSPACE}{\ComplexityFont{EXPSPACE}\xspace}
\newcommand{\NEXPSPACE}{\ComplexityFont{NEXPSPACE}\xspace}
\newcommand{\counterOne}{\ensuremath{C_1}}
\newcommand{\counterTwo}{\ensuremath{C_2}}
\newcommand{\counteri}{\ensuremath{C}}
\newcommand{\twoCM}{\ensuremath{\mathcal{M}}}
\newcommand{\twoCMcommand}{\ensuremath{c}}
\newcommand{\orcidID}[1]{\href{https://orcid.org/#1}{\includegraphics[height=1em]{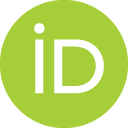}}}
\newcommand{\homepage}[1]{\href{#1}{\color{gray}\faHome}}
\begin{document}

\title[Execution-time opacity control for timed automata]{Execution-time opacity control for timed automata}

\author[1,2]{\fnm{\'Etienne} \sur{Andr\'e}}%
\equalcont{These authors contributed equally to this work.}

\author[3]{\fnm{Marie} \sur{Duflot}}%
\equalcont{These authors contributed equally to this work.}

\author[4]{\fnm{Laetitia} \sur{Laversa}}%
\equalcont{These authors contributed equally to this work.}

\author[3]{\fnm{Engel} \sur{Lefaucheux}}%
\equalcont{These authors contributed equally to this work.}

\affil[1]{\orgdiv{CNRS, Laboratoire d’Informatique de Paris Nord, LIPN}, \orgname{Université~Sorbonne~Paris~Nord}, \orgaddress{\street{Av.\ Jean-Baptiste Clément}, \city{Villetaneuse}, \postcode{F-93430}, %
	\country{France}}}

\affil[2]{\orgname{Institut Universitaire de France (IUF)}, \orgaddress{\country{France}}}

\affil[3]{\orgdiv{\orgname{Université de Lorraine,
CNRS, Inria, LORIA}, \country{France}}}

\affil[4]{Université Paris Cité, CNRS, IRIF, F-75013, Paris, France}

\abstract{%
	Timing leaks in timed automata (TA) can occur whenever an attacker is able to deduce a secret by observing some timed behaviour.
	In execution-time opacity, the attacker aims at deducing whether a private location was visited, by observing only the execution time.
	In earlier work, it was shown that it can be decided whether a TA is opaque in this setting.
	In this work, we address control, and investigate whether a TA can be controlled by a strategy at runtime to ensure opacity, by enabling or disabling some controllable actions over time.
	We first show that, in general, it is undecidable to determine whether such a strategy exists.
	Second, we show that deciding whether a meta-strategy ensuring opacity exists can be done in \EXPSPACE{}. Such a meta-strategy is a set of strategies allowing an arbitrarily large---yet finite---number of strategy changes per time unit, and with only weak ordering relations between such changes.
	Our method is constructive, in the sense that we can exhibit such a meta-strategy.
	We also extend our method to the case of weak opacity, when it is harmless that the attacker deduces that the private location was \emph{not} visited.
	\sectionEight{%
	Finally, we consider a variant where the attacker cannot have an infinite precision in its observations.
	}

	}

\keywords{timed automata, opacity, side-channel attacks, timed control}

\maketitle{}
\section{Introduction}

In order to infer sensitive information, side-channels attacks \cite{Stan10} exploit various observable aspects of a system rather than directly exploiting its computational processes; such observable aspects can include power consumption, electromagnetic emissions, or time.
In particular, by observing subtle differences in timing, attackers can infer valuable information about the internal state of the system.
For example, in~\cite{CHSJX22}, a timing attack vulnerability is identified in the Chinese public key cryptography standard; the authors show how the most significant zero-bit leakage obtained from the execution time allows to extract the secret key.

Timing attacks such as timing leaks often depend on the precise duration of operations, which finite-state automata cannot model.
Timed automata~\cite{AD94} (\TAstext{}), on the other hand, incorporate explicit clocks and timing constraints, making them essential for analysing and detecting vulnerabilities related to timing information.
\TAstext{} are a powerful formalism to reason about real-time systems mixing timing constraints and concurrency.
Timing leaks can occur whenever an attacker is able to deduce a secret by observing some (timed) behaviour of a~TA.

\subsection{Related works}

\paragraph{Opacity in timed automata}
Franck~Cassez proposed in~\cite{Cassez09} a first definition of \emph{timed} opacity for \TAstext{}: the system is opaque when an attacker can never deduce whether some secret sequence of actions (possibly with timestamps) was performed, by only observing a given set of observable actions together with their timestamp.
It is then proved in~\cite{Cassez09} that it is undecidable whether a \TAtext{} is opaque.
	This notably relates to the undecidability of timed language inclusion for \TAstext{}~\cite{AD94}.
	The undecidability of opacity is strong: it holds even for the restricted class of \emph{event-recording automata}~\cite{AFH99}---a subclass of~\TAstext{} for which language inclusion is actually decidable.

The aforementioned negative result leaves hope only if the definition or the setting is changed, which was done in four main lines of work.

First, in~\cite{WZ18,WZA18}, the input model is simplified to \emph{real-time automata}~\cite{Dima01}, a restricted formalism compared to \TAstext{}: real-time automata can be seen as \TAstext{} with a single clock, reset at each transition.
\cite{LLHL22}~works on constant-time labelled automata, a subclass of real-time automata where events occur at constant values.
In this setting, initial-state opacity (``according to the observations, what was the initial state?'')\ and current-state opacity (``according to the observations, what is the current location?'')\ become decidable.
In~\cite{Zhang24}, Zhang studies labelled real-timed automata (a subclass of labelled \TAstext{}); in this setting, state-based (at the initial time, the current time, etc.)\ opacity is proved to be decidable by extending the observer (that is, the classical powerset construction) from finite automata to labelled real-timed automata.

Second, in~\cite{AGWZH24,ADL24}, the opacity was studied in the setting of Cassez' definition, but with restrictions in the model: one-clock automata, one-action automata, or over discrete time.
Similarly, in~\cite{KKG24}, discrete-time automata with several clocks are considered and transformed into tick automata in order to verify the current-state opacity.
The discrete time setting yields decidability, while restricting the number of actions to~1 preserves undecidability; for a single clock, decidability can only be envisioned without silent actions~\cite{ADL24} (allowing silent actions or allowing two clocks immediately leads to undecidability).

Third, in~\cite{AEYM21}, the authors consider a \emph{time-bounded} notion of the opacity of~\cite{Cassez09}, where the attacker has to disclose the secret before a deadline, using a partial observability.
	This can be seen as a secrecy with an \emph{expiration date}.
	The rationale is that retrieving a secret ``too late'' is useless; this is understandable, \eg{} when the secret is the value in a cache; if the cache has been overwritten since, then knowing the secret is probably useless in most situations.
	In addition, the analysis is carried over a time-bounded horizon; this means there are two time bounds in~\cite{AEYM21}: one for the secret expiration date, and one for the bounded-time execution of the system.
Deciding opacity in this setting is shown to be decidable for~TAs.

Fourth, in~\cite{ALLMS23}, an alternative definition to Cassez' opacity is proposed, by studying \opacityText{} (execution-time opacity): the attacker has only access to the \emph{execution time} of the system, as opposed to Cassez' partial observations where some events (with their timestamps) are observable.
The goal for the attacker is to deduce whether a special secret location was visited, by observing only the execution time.
In that case, most problems for TAs become decidable, including some problems when introducing an expiration date~\cite{ALM23} (see \cite{ALLMS23} for a survey).
Our current work fits in this \opacityText{} context, with the additional goal to control the system.

\paragraph{Opacity in other formalisms}
Different variants of opacity are also studied for other types of systems, such as stochastic systems.
In this case, we can quantify the probability that a system is opaque~\cite{BMS15}.
In particular, opacity can be related to the bandwidth of a language~\cite{JDA22,ADDI23} used to encode information by the delay necessary to produce it.

\paragraph{Non-interference in timed automata}
Several works address non-interference for \TAstext{}.
	In this context, actions are either high-level or low-level, and only low-level actions are observable.
	A \TAtext{} satisfies non-interference whenever its behaviour in absence of high-level actions is equivalent to the observation of its behaviour when high-level actions occur.
Different notions of equivalence (\eg{} bisimulation) can be considered for this property.
Several papers \cite{BDST02,BT03,AK20} present some decidability results,
while control is considered in~\cite{BCLR15}.

General security problems for \TAstext{} are surveyed in~\cite{AA23survey}.

\paragraph{Control}
A preliminary version of control for \opacityText{} in TAs was considered in~\cite{ABLM22}, but only untimed, \ie{} the actions could only be enabled or disabled once and for all, thus severely restricting the possibilities to render the system \opaqueText{}.
In addition, \cite{ALLMS23} considers \emph{parametric} versions of the opacity problems, in which timing parameters~\cite{AHV93} can be used in order to make the system \opaqueText{}.
This parametric analysis was then used for the analysis of C~code~\cite{ABCFLMRS25}.
Our notion of control is orthogonal to parameter synthesis, as another way to ensure the system becomes \opaqueText{}.

Controller synthesis can be described and solved thanks to game theory; finding a strategy for a controller can be equivalent to computing a winning strategy in a corresponding game. Several game models have been considered, as timed games that can be used to solve synthesis problem on \TAstext{}.
In this context, \cite{AMPS98} aims to restrict the transition relation in order to satisfy certain properties, while \cite{JT07} completes this result, minimizing the execution time, and \cite{BFM15} studies the reachability with robust strategies only.

\subsection{Contributions}

In this work, we aim at tuning a system to make it \opaqueText{}, by \emph{controlling} it at runtime.

Our attacker model is as follows: the attacker has a knowledge of the system model, but can only observe the execution time.
This can correspond to an attacker observing the energy consumption of a device, clearly denoting the execution time of a program or process;
or to an attacker observing communications over a shared network, with an observable message acknowledging the end of execution.
The attacker aims at deducing---only by observing the execution time---whether a special secret location was visited.

As usual, we consider that the system actions are partitioned between controllable and uncontrollable.
Our controller relies on the following notion of strategy: at each timestamp, the strategy enables only a subset of the controllable actions.

We mainly consider in this paper \fullOpacityText{}, \ie{} whenever the durations corresponding to executions visiting the secret location match the durations corresponding to executions \emph{not} visiting the secret location.
Our first contribution is to show that the \fullOpacityText{} strategy emptiness problem, \ie{} the emptiness of the set of strategies such that a TA is \fullOpaqueText{} with such a strategy, is undecidable.
Second, we move to a weaker version, that of \emph{\metaStrategies{}}, \ie{} sets of strategies that specify a finite number of strategy changes per time unit, but not the precise time at which those changes take place.
In that case, we show that not only the \fullOpacityText{} \metaStrategy{} emptiness problem is decidable in \EXPSPACE{}, but our approach is also constructive, in the sense that we can build such a controller (in \twoEXPTIME{}).
Our technique relies on a novel \emph{ad-hoc} construction inspired by the region automaton for~\TAstext{}.

Depending of the system, various degrees of opacity can be interesting,
and we therefore consider several variants introduced in~\cite{ALLMS23} as our third contribution:
in addition to \fullOpacityText{} (the durations corresponding to executions visiting the secret location match the durations corresponding to executions \emph{not} visiting the secret location),
we also consider \weakOpacityText{} (the durations corresponding to executions visiting the secret location are \emph{included} in the durations corresponding to executions not visiting the secret location),
and we briefly discuss \existentialOpacityText{} (in which we are simply interested in the \emph{existence} of one execution time for which opacity is ensured).
We show that, for both variants, the \metaStrategy{} emptiness problem is decidable.

Finally, and as a fourth contribution, we address the case when the attacker cannot have an infinite precision in its observations, \ie{} enhancing our method with a notion of robustness.

\paragraph{About this manuscript}
This manuscript is an extended version of the paper published in the proceedings of {SEFM}~2024~\cite{ADLL24}.
We list the enhancements (and differences) with respect to the conference version:
\begin{enumerate}
	\item We add the missing proofs of all results, and additional examples;
	\item We correct the previous construction from~\cite{ADLL24}, adding the notion of \metaStrategy{} and, on the one hand, we prove the undecidability of the general problem of the existence of a strategy making the system \fullOpaqueText{} while, on the other hand, we prove the decidability of the existence of a \metaStrategy{};
	\item We investigate control with respect to \weakOpacityText{} and \existentialOpacityText{};
	\sectionEight{%
		\item We investigate several robust definitions of \opacityText{} and add details and proofs for results already stated in~\cite{ADLL24};
	}
	\item We rename ``bad beliefs'' into ``\badBelief{} beliefs''.
\end{enumerate}

\subsection{Outline}

\cref{section:preliminaries} recalls the necessary material.
\cref{section:problem} defines the control problem for \opacityText{}.
\cref{section:undecidability} proves our main undecidability result.
We then introduce the core of our decidable approach:
\cref{section:beliefs} introduces the notion of belief automaton, while
\cref{section:solution} solves the \opacityText{} problems thanks to this notion.
\cref{section:weak-existential-opacity} then extends our approach to \emph{existential} and \emph{weak} opacity.
\sectionEight{%
The last contribution in \cref{section:robust-extensions} considers an attacker which cannot have an infinite precision in observing the execution time.
}
\cref{section:conclusion} highlights future works.

\section{Preliminaries}\label{section:preliminaries}

Let $\setN$, $\setZ$, $\setQgeqzero$, $\setRgeqzero$, $\setRpositif$, $\setRnegatif$
denote the sets of
non-negative integer numbers,
integer numbers,
non-negative rational numbers,
non-negative real numbers,
positive real numbers
and
negative real numbers respectively.

\paragraph{Clock constraints}
\emph{Clocks} are real-valued variables that all evolve over time at the same rate.
Throughout this paper, we assume a set~$\ClockSet = \{ \clocki{1}, \dots, \clocki{\ClockCard} \} $ of \emph{clocks}.
A \emph{clock valuation} is a function
$\clockval : \ClockSet \rightarrow \setRgeqzero^\ClockCard$, assigning a non-negative real value to each clock.
Given $\resets \subseteq \ClockSet$, we define the reset of a valuation~$\clockval$ with respect to $\resets$, denoted by $\reset{\clockval}{\resets}$, as follows: $\reset{\clockval}{\resets}(\clock) = 0 $ if $\clock \in \resets$, and  $\reset{\clockval}{\resets}(\clock) = \clockval(\clock) $ otherwise.
We write $\ClocksZero$ for the clock valuation assigning $0$ to all clocks.
Given a constant $\paramd \in \setRgeqzero$, $\clockval + \paramd$ denotes the valuation \st\ $(\clockval + \paramd)(\clock) = \clockval(\clock) + \paramd$, for all $\clock \in \ClockSet$.

We assume ${\bowtie} \in \{<, \leq, =, \geq, >\}$.
A \emph{constraint}~$\constraint$ is a conjunction of inequalities over $\ClockSet$ of the form
$\clock \bowtie \paramd$, with
$\paramd \in \setZ$.

	A table of the notations used throughout this paper is available in \cref{appendix:notations}.
\subsection{Timed automata}
\paragraph{Syntax of TAs}

We define timed automata as in~\cite{AD94}, with an extra private location as in~\cite{ALLMS23}, which encodes the secret that shall not be leaked.

\begin{definition}[Timed automaton]\label{def:TAs}
 A \emph{timed automaton} (TA) $\TA$ is a tuple $\TA = \TAprivextend$ where:
\begin{oneenumerate}%
\item $\ActionSet$ is a finite set of actions,
\item $\LocSet$ is a finite set of locations,
\item $\locinit \in \LocSet$ is the initial location,
\item $\locpriv \in \LocSet$ is the private location,
\item $\LocFinalSet \subseteq \LocSet \setminus \{ \locpriv \}$ is the set of final locations,
\item $\ClockSet = \{ \clocki{1}, \dots, \clocki{\ClockCard} \}$ is a finite set of clocks,
\item $\invariant$ is the invariant, assigning to every $\loc \in \LocSet$ a constraint $\invariantof{\loc}$,
\item $\EdgeSet$ is a finite set of edges $\edge = \edgedec$
where
$\loc, \loc' \in \LocSet$ are the source and target locations, $a \in \ActionSet \cup \{\silentaction\}$, where $\silentaction$ denotes the silent action, $\resets \subseteq \ClockSet$ is a set of clocks to be reset, and $\guard$ is a constraint over $\ClockSet$ (called \emph{guard}).
\end{oneenumerate}%
\end{definition}
\begin{example}\label{example:TA}
	\cref{fig:aut:non-strategy-opaque} depicts a TA $\TAi{1}$ with a single clock~$\clock$, where $\ActionSet = \{a, b, u\}$.
	The edge $\mathsf{e_1}$ between the initial location~$\locinit$ and the private location~$\locpriv$ is available only when the valuation of~$\clock$ equals~0.
	The edge $\mathsf{e_6}$ between $\locinit$ and $\loci{2}$ resets~$\clock$.

\end{example}

Since we are only interested in the (first) arrival time in a final location, the following assumption does not restrict our framework, but simplifies the subsequent definitions and results.

\begin{assumption}\label{assumption-urgent-final}
	We consider every final location as \emph{urgent} (where time cannot elapse): formally, there exists \mbox{$\clock \in \ClockSet$} such that, for all $\edgedec \in \EdgeSet, \loc' \in \LocFinalSet$, we have $\clock \in \resets$ and ``$x = 0$''$\ \in \invariantof{\loc'}$.
	Moreover, final locations cannot have outgoing transitions: formally, there is no \mbox{$\edgedec \in \EdgeSet$} s.t.\ $\loc \in \LocFinalSet$.
\end{assumption}
\paragraph{Semantics of TAs}
\begin{definition}[Semantics of a TA]\label{definition-TA-semantics}
	Let \mbox{$\TA = \TAprivextend$} be a \TAtext{}.
	The semantics of~$\TA$ is given by the timed transition system %
		\mbox{$\semantics{\TA} = \semanticsextend$}, with
	\begin{oneenumerate}%
		\item $\TTSStates = \big\{ (\loc, \clockval) \in \LocSet \times \setRgeqzero^\ClockCard \mid \clockval \models \invariantof{\loc} \big\}$,
		\item $\TTSinit = (\locinit, \ClocksZero) $,
		\item  $\TTStransition$ consists of the discrete and (continuous) delay transition relations:
		\begin{ienumerate}%
			\item discrete transitions: $(\loc,\clockval) \transitionWith{\edge} (\loc',\clockval')$,
			if $(\loc, \clockval) , (\loc',\clockval') \in \TTSStates$, and there exists $\edge = (\loc,\guard,\action,\resets,\loc') \in \EdgeSet$, such that $\clockval'= \reset{\clockval}{\resets}$, and $\clockval\models\guard$.
			\item delay transitions: $(\loc,\clockval) \transitionWith{\paramd} (\loc, \clockval+\paramd)$, with $\paramd \in \setRgeqzero$, if $\forall \paramd' \in [0, \paramd], (\loc, \clockval+\paramd') \in \TTSStates$.
		\end{ienumerate}%
	\end{oneenumerate}%
\end{definition}
We write $(\loc, \clockval) \transitionLabel{\paramd, \edge} (\loc', \clockval')$  for a combination of a delay and a discrete transitions when $\exists \clockval'' : (\loc, \clockval) \transitionWith{\paramd} (\loc, \clockval'') \transitionWith{\edge} (\loc', \clockval')$.

Given a TA $\TA$ with semantics $\semanticsextend$, a \emph{run} of~$\TA$ is a finite alternating sequence of states of $\semantics{\TA}$ and pairs of delays and edges starting from the initial state~$\TTSinit$
of the form $\TTSinit, (\timeEdge{0}), \TTSstate_1, \ldots, \TTSstate_n$ where for all $i < n, \edge_i \in \EdgeSet, \paramd_i \in \setRgeqzero$ and $\TTSstate_i \transitionLabel{\paramd_i, \edge_i} \TTSstate_{i+1}$.
The duration of a run $\run = \TTSinit, (\timeEdge{0}), \TTSstate_1, \ldots, (\timeEdge{n-1}), \TTSstate_n$ is $\runduration{\run} =
 \sum_{0 \leq i \leq n-1} \paramd_i$.\label{def:duration} We define $\laststate(\run) = \TTSstate_n$. %

\newcommand{\decalage}{3.5}
\newcommand{\yfactor}{.6}

\begin{figure*}[tb]
		\begin{subfigure}[b]{0.49\textwidth}
			\begin{tikzpicture}[pta, scale=1, every node/.style={scale=1}]
				\node[location, initial, initial text=] (q0) {$\locinit$};
				\node[location, private] at (3, 2*.8) (qpriv) {$\locpriv$};
				\node[location] at (3, 0) (q1) {$\loci{1}$};
				\node[location] at (2, -2*.8) (q2) {$\loci{2}$};
				\node[location] at (4, -2*.8) (q3) {$\loci{3}$} ;
				\node[location, final] at (6, 0) (qfin) {$\locfinal$};
				\path[->]
				(q0) edge[loop above] node[above left,align=center] {\shortstack{ ${\edgeicolor{3}}$ \\ $\styleclock{x} =1$ \\ $\styleact{u}$ \\ $\styleclock{x} \assign 0 $}} (q0)
					edge node[above, sloped, align=center] {\shortstack{ ${\edgeicolor{1}}$ \\ $\styleclock{x} = 0$}} node[below, sloped] {$\styleact{u}$} (qpriv)
					edge node[below, sloped,, pos=.7] {$\styleact{a}$} node[above, sloped, pos=.7] {${\edgeicolor{4}}$} (q1)
					edge node[below, sloped, align=center] {\shortstack {$\styleact{b}$ \\ $\styleclock{x} \assign 0$}} node[above, sloped] { ${\edgeicolor{6}}$ }(q2)
				(qpriv) edge node[above, sloped, align=center] {\shortstack{${\edgeicolor{2}}$ \\ $\styleclock{x} = 0$}} node[below, sloped] {$\styleact{u}$} (qfin)
				(q1) edge node[below, sloped, pos=.3] {$\styleact{b}$} node[above, sloped, pos=.3] {${\edgeicolor{5}}$} (qfin)
				(q2) edge node[above, sloped, align=center] {\shortstack{${\edgeicolor{7}}$ \\ $\styleclock{x} = 0$}} node[below, sloped] {$\styleact{a}$} (q3)
				(q3) edge node[below, sloped] {$\styleact{u}$} node[above,sloped] {${\edgeicolor{8}}$} (qfin)
				;
			\end{tikzpicture}
			\caption{TA $\TAi{1}$}
			\label{fig:aut:non-strategy-opaque}
		\end{subfigure}
		\hfill{}
		\begin{subfigure}[b]{0.49\textwidth}
			\begin{tikzpicture}[pta,  scale= 1, every node/.style={scale=1}]
				\node[location, initial, initial text=] at (0,0) (q0) {$\locinit$};
				\node[location, private] at (3, 2*\yfactor) (qpriv) {$\locpriv$};
				\node[location] at (3, 0) (q1) {$\loci{1}$};
				\node[location] at (2, -2*\yfactor) (q2) {$\loci{2}$};
				\node[location] at (4, -2*\yfactor) (q3) {$\loci{3}$} ;
				\node[location, final] at (6, 0) (qfin) {$\locfinal$};

				\node[locationdup] at (0,\decalage) (q0p) {$\locdupi{0}$};
				\node[locationdup] at (3, \decalage + 2*\yfactor) (qprivp) {$\locduppriv$};
				\node[locationdup] at (3, \decalage) (q1p) {$\locdupi{1}$};
				\node[locationdup] at (2, \decalage -2*\yfactor) (q2p) {$\locdupi{2}$};
				\node[locationdup] at (4, \decalage -2*\yfactor) (q3p) {$\locdupi{3}$} ;
				\node[locationdup, final] at (6, \decalage ) (qfinp) {$\locfinalpriv$};

				\path[->]
				(q0) edge[loop above] node[above left,align=center] {\shortstack{$\styleclock{x} =1$ \\ $\styleact{u}$ \\ $\styleclock{x} \assign 0 $}} (q0)
					edge node[above, sloped] {$\styleclock{x} = 0$} node[below, sloped] {$\styleact{u}$} (qpriv)
					edge node[above, sloped, pos=.7] {$\styleact{a}$} (q1)
					edge node[below, sloped] {$\styleclock{x} \assign 0$} node[above, sloped] {$\styleact{b}$}(q2)
				(q1) edge node[above, sloped, pos=.3] {$\styleact{b}$} (qfin)
				(q2) edge node[above, sloped] {$\styleclock{x} = 0$} node[below, sloped] {$\styleact{a}$} (q3)
				(q3) edge node[above, sloped] {$\styleact{u}$} (qfin)

				(qpriv) edge[edgedup,bend right] node[above, sloped] {$\styleclock{x} = 0$} node[below, sloped] {$\styleact{u}$} (qfinp)

				(q0p) edge[loop above] node[above left,align=center] {\shortstack{$\styleclock{x} =1$ \\ $\styleact{u}$ \\ $\styleclock{x} \assign 0 $}} (q0p)
					edge node[above, sloped] {$\styleclock{x} = 0$} node[below, sloped] {$\styleact{u}$} (qprivp)
					edge node[above, sloped, pos=.7] {$\styleact{a}$} (q1p)
					edge node[below, sloped] {$\styleclock{x} \assign 0$} node[above, sloped] {$\styleact{b}$}(q2p)
				(q1p) edge node[above, sloped, pos=.3] {$\styleact{b}$} (qfinp)
				(q2p) edge node[above, sloped] {$\styleclock{x} = 0$} node[below, sloped] {$\styleact{a}$} (q3p)
				(q3p) edge node[above, sloped] {$\styleact{u}$} (qfinp)
				(qprivp) edge node[above, sloped] {$\styleclock{x} = 0$} node[below, sloped] {$\styleact{u}$} (qfinp)
				;
			\end{tikzpicture}
		\caption{Duplicated version of~$\TAi{1}$}
		\label{fig:aut:non-strategy-opaque:duplicated}
	\end{subfigure}
	\caption{A TA and its duplicated version (introduced in \cref{section:beliefs})}
	\label{fig:aut:non-strategy-opaque:full-figure}
\end{figure*}

\paragraph{Extra clock}

We will need an extra clock~$\clockextra$ that will help us later to keep track of the elapsed absolute time.
This clock is reset exactly every 1~time unit, and therefore each reset corresponds to a ``tick'' of the absolute time.
(Note that its actual value remains in $[0,1]$ and therefore always matches the fractional part of the absolute time.)
In all subsequent region constructions, we assume the existence of $\clockextra \in \ClockSet$.
For each location $\loc$, we add the constraint ``$\clockextra \leq 1$'' to~$\invariantof{\loc}$, and we add a self-loop edge  $(\loc, \clockextra = 1, \silentaction, \{\clockextra\}, \loc)$.

\subsection{Regions}\label{subsection:regions}

Given a TA~$\TA$, for a clock~$\clock_i$, we denote by~$\constantmax{i}$ the largest constant to which~$\clock_i$ is compared within the guards and invariants
 of~$\TA$:
 formally, $\constantmax{i} = \max_j\{ \paramd_j \mid  x_i \bowtie \paramd_j \text{ appears in a guard or invariant of }\TA \}$.
Given a clock valuation $\clockval$ and a clock~$\clock_i$,
$\intpart{\clockval(\clock_i)}$ (resp.\ $\fract{\clockval(\clock_i)}$) denotes the integral (resp.\ fractional) part of $\clockval(\clock_i)$.

We now recall the equivalence relation between clock valuations.
\begin{definition}[Equivalence relation~\cite{AD94}]
	Two clocks valuations $\clockval,\clockval'$ are \emph{equivalent}, denoted by $\clockval \clockeq \clockval'$, when the following three conditions hold for any clocks $\clock_i, \clock_j \in \ClockSet$:
	\begin{enumerate}
		\item $\intpart{\clockval(\clocki{i})} = \intpart{\clockval'(\clocki{i})}$ or $\clockval(\clocki{i}) > \constantmax{i}$ and $\clockval'(\clocki{i}) > \constantmax{i}$; \label{item:equiv-int}
		\item if  $\clockval(\clocki{i}) \leq \constantmax{i}$ and $\clockval(\clocki{j}) \leq \constantmax{j}$: $\fract{\clockval(\clock_i)} \leq \fract{\clockval(\clock_j)}$ iff $\fract{\clockval'(\clock_i)} \leq \fract{\clockval'(\clock_j)}$; and\label{item:equiv-order}
		\item if  $\clockval(\clocki{i}) \leq \constantmax{i}$: $\fract{\clockval(\clock_i)} = 0$ iff $\fract{\clockval'(\clock_i)} = 0$.\label{item:equiv-zero}
	\end{enumerate}
\end{definition}
In other words, two valuations are equivalent when, for a given clock,
	the integral part is the same in both valuations or greater than the largest constant in both valuations (\cref{item:equiv-int}),
	for any two clocks, the fractional parts are in the same order in the two valuations (\cref{item:equiv-order}),
	and
	the fractional part is zero in both valuations or neither (\cref{item:equiv-zero}).

The equivalence relation $\clockeq$ is extended to the states of $\semantics{\TA}$: given two states $\TTSstate = (\loc, \clockval), \TTSstate' = (\loc', \clockval')$ of $\semantics{\TA}$, we write $\TTSstate \clockeq \TTSstate'$ iff $\loc = \loc'$ and $\clockval \clockeq \clockval'$.
We denote by $\class{\TTSstate}$ and call \emph{region} the equivalence class of a state~$\TTSstate$ for~$\clockeq$.
Then, $\TTSstate' \in \class{\TTSstate}$ when $\TTSstate \clockeq \TTSstate'$.
The set of all regions of~$\TA$ is denoted~$\regset{\TA}$.
A region $\region = \class{(\loc, \clockval)}$ is \emph{final} whenever $\loc \in \LocFinalSet$.
The set of final regions is denoted by~$\finalclass{\TA}$.
A region~$\region$ is \emph{reachable} when there exists a run~$\run$ such that $\laststate(\run) \in \region$.

\paragraph{Region automaton}

We now define a region automaton %
inspired by
\cite[Proposition~5.3]{BDR08} with two-component labels on the transitions
: the first component indicates how the fractional part of
$\clockextra$ evolved,
with symbol~``$\translabelinstant$'' corresponding to an absence of change,
symbol~``$\translabelsame$'' corresponding to a change remaining in $(0,1)$ and
symbol~``$\translabelchge$'' corresponding to a change where the fractional part either
starts or ends at $0$.
The second component is either $\silentaction$ if the transition represents time elapsing
in the TA, or provides the action that labels the corresponding discrete transition in the TA.

Given a state $\TTSstate = (\loc, \clockval)$, and $\paramd \in \setRgeqzero$, we write $\TTSstate + \paramd $ to denote $(\loc, \clockval + \paramd)$.
Given two regions $r$ and $r'$, we write $\region \cup \region'$ for $ \set{\TTSstate \mid \TTSstate \in \region \mbox{ or } \TTSstate \in \region'}$.
\begin{definition}[Labelled Region Automaton]
	For a given TA~$\TA$,
	the labelled region automaton~$\regaut{\TA}$ is given by the tuple
	$(\regset{\TA}, \RegAutActions, \RegAutTransitions)$ where:
	\begin{enumerate}%
		\item $\regset{\TA}$ is the set of states,
		\item $\RegAutActions = \{\translabelinstant, \translabelsame, \translabelchge\} \times (\ActionSet \cup \set{\silentaction})$,
		\item  given two regions $\region, \region' \in \regset{\TA}$ and $\zeta \in \RegAutActions$, we have $(\region,\zeta,\region') \in \RegAutTransitions$ if there exist $s =(\loc, \clockval) \in \region$ and $ s'=(\loc', \clockval') \in \region'$ such that one of the following holds:
		\begin{enumerate}%
			\item $\zeta = (\translabelinstant, \action)$ and
			$(\loc, \clockval) \transitionWith{\edge} (\loc', \clockval') \in \TTStransition $ in $\TTS_{\TA}$
			with $\edge = (\loc, \guard, \action, \resets, \loc')$
			for some $\guard$ and~$\resets$;
		\item $\zeta = (\translabelsame,\silentaction)$ and  $\exists \paramd \in \setRpositif$ such that
		\begin{oneenumerate}
			\item $\TTSstate \transitionWith{\paramd} \TTSstate'$,
			\item $\forall  0< \paramd' < \paramd$,
			$\TTSstate+d' \in r \cup r'$ and
			\item $\fract{\clockval(\clockextra)} \neq 0$ and $\fract{\clockval'(\clockextra)} \neq 0$;\footnote{%
				Condition~(ii) ensures that we only move from one region to the ``next'' one (no intermediate region), and condition~(iii) adds that we stay in the same region for~$\clockextra$ (changing the region for~$\clockextra$ is handled in item~(c)).}
		\end{oneenumerate}
			\item $\zeta = (\translabelchge, \silentaction)$ and
			 $\exists \paramd \in \setRpositif$ such that
 \begin{oneenumerate}
	\item $\TTSstate \transitionWith{\paramd} \TTSstate'$,
	\item $\forall 0< \paramd' < \paramd$, $\TTSstate+\paramd' \in \region \cup \region'$, and
	\item $\fract{\clockval'(\clockextra)}=0$  iff $\fract{\clockval(\clockextra)}\neq 0$.
 \end{oneenumerate}
		\end{enumerate}%
	\end{enumerate}%
\end{definition}

We write $\region \RegAutTransitionWith{\zeta}  \region'$ for $(\region,\zeta,\region') \in \RegAutTransitions$.

In the remainder, we will refer to labelled region automata as region automata to alleviate notation.

\subsection{Execution-time opacity of a TA}

Let us now recall from~\cite{ALLMS23} the notions of private and public runs.
\paragraph{Durations}
Given a \TAtext{}~$\TA$ and a run~$\run$, we say that $\locpriv$ is \emph{visited on the way to a final location in~$\run$} when $\run$ is of the form
\((\loci{0}, \clockval_0), (\paramd_0, \edgei{0}), (\loci{1}, \clockval_1), \ldots, (\loci{m}, \clockval_m),$ $(\paramd_m, \edgei{m}), \ldots, (\loci{n}, \clockval_n)\)
\noindent{}for some~$m,n \in \setN$ such that $\loci{m} = \locpriv$ and $\loci{n} \in \LocFinalSet$.
We denote by $\PrivVisit{\TA}$ the set of those runs, and refer to them as \emph{private} runs.
We denote by $\PrivDurVisit{\TA}$ the set of all the durations of these runs.

Conversely, we say that
$\locpriv$ is \emph{avoided on the way to a final location in~$\run$}
when $\run$ is of the form
\((\loci{0}, \clockval_0), (\paramd_0, \edgei{0}), (\loci{1}, \clockval_1), \ldots, (\loci{n}, \clockval_n )\)
\noindent{}with $\loci{n} \in \LocFinalSet$ and $\forall 0 \leq i < n, \loci{i} \neq \locpriv$.
We denote the set of those runs by~$\PubVisit{\TA}$, referring to them as \emph{public} runs,
and by $\PubDurVisit{\TA}$ the set of all the durations of these public runs.

These concepts can be seen as the set of execution times from the initial location~$\locinit$ to a final location while
visiting (resp.\ not visiting) the private location~$\locpriv$.

\begin{example}\label{example:private-public}
	Consider the following two runs of the TA~$\TAi{1}$ in \cref{fig:aut:non-strategy-opaque}.
	Note that we use $(\locinit, \cdot)$ as a shortcut for $(\locinit, \clockval)$ such that $\clockval(\clock) = \cdot$.
\begin{align*}
	\run_1 = &~(\locinit, 0), (1, \edge_3), (\locinit, 0), (0, \edge_1), (\locpriv, 0), (0, \edge_2), \\ &(\locfinal, 0) \\
	\run_2 = &~(\locinit, 0), (0.1, \edge_6), (\loci{2}, 0), (0, \edge_7), (\loci{3}, 0),\\ & (0.8, \edge_8), (\locfinal, 0.8)
\end{align*}
Run $\run_1 \in \PrivVisit{\TAi{1}}$ is a private run, and $\runduration{\run_1}= 1 \in \PrivDurVisit{\TAi{1}}$.
Run $\run_2  \in \PubVisit{\TAi{1}}$ is a public run with $\runduration{\run_2} = 0.9 \in \PubDurVisit{\TAi{1}}$.%
\end{example}
\begin{definition}[\FullOpacityText{}]\label{def:full-weak-opacity}
	A \TAtext{}~$\TA$ is \emph{\fullOpaqueText{}} when $\PrivDurVisit{\TA} = \PubDurVisit{\TA}$.
\end{definition}

That is, if for any run of duration~$\paramd$ reaching a final location after visiting~$\locpriv$, there exists another run of the same duration reaching a final location but not visiting~$\locpriv$, and vice versa, then the TA is \fullOpaqueText{}.
\begin{example}\label{example:full-weak-opacity}
	Consider again $\TAi{1}$ in \cref{fig:aut:non-strategy-opaque}.
	Each time $\clock$ equals~$1$, we can reset it via~$\mathsf{e_3}$, and take the edges $\mathsf{e_1}$ and~$\mathsf{e_2}$ instantaneously.
	It results that $\PrivDurVisit{\TAi{1}} = \setN$.
	We have seen in \cref{example:private-public} that $0.9 \in \PubDurVisit{\TAi{1}}$.
	So $\PrivDurVisit{\TAi{1}} \neq \PubDurVisit{\TAi{1}}$ and $\TAi{1}$ is not \fullOpaqueText{}.
\end{example}
\section{Problem: Controlling TA to achieve \opacityText{}}\label{section:problem}
	Let us formally define the main problem addressed in this work.
We assume $\ActionSet = \ActionSetC \uplus \ActionSetU$ where $\ActionSetC$ (resp.\ $\ActionSetU$) denotes controllable (resp.\ uncontrollable) actions.
The uncontrollable actions are always available, whereas the controllable actions can be enabled and disabled at runtime. 

The controller has a \emph{strategy}, \ie{} a function $\strategy : \setRgeqzero \rightarrow 2^{\ActionSetC}$
which associates to each time a subset of~$\ActionSetC$, denoting that these actions are enabled, while other controllable actions are disabled.

We define the semantics of a controlled TA as follows.
Compared to \cref{definition-TA-semantics}, we also add to the states the absolute time.
\begin{definition}[Semantics of a controlled TA]
	Given a TA \mbox{$\TA = \TAprivextend$} and a strategy $\strategy : \setRgeqzero \rightarrow 2^{\ActionSetC}$, the \emph{semantics of the TA~$\TA$ controlled by strategy~$\strategy$} is given by
	\mbox{$\semanticscontrolextend$ with}
	\begin{enumerate}
		\item $\TTSStates = \big\{ (\loc, \clockval, \globaltime) \in \LocSet \times \setRgeqzero^{\ClockCard} \times \setRgeqzero \mid \clockval \models \invariantof{\loc} \big\}$,
		\item $\TTSinit = (\locinit, \ClocksZero, 0) $,
		\item  $\TTStransitionControl$ consists of the discrete and (continuous) delay transition relation:
		\begin{enumerate}
			\item discrete transitions: $
			(\loc,\clockval, \globaltime) \transitionWithControl{\edge} (\loc',\clockval', \globaltime)$,
 			if $(\loc, \clockval, \globaltime) , (\loc',\clockval', \globaltime) \in \TTSStates$ and there exists $\edge = (\loc,\guard,\action,\resets,\loc') \in \EdgeSet$ such that $\clockval'= \reset{\clockval}{\resets}$,  $\clockval\models \guard$, and $\action \in \strategy(\tau) \cup \ActionSetU$
 			(that is, $\action$ is either enabled by the strategy at time~$\tau$, or uncontrollable);
			\item delay transitions: $(\loc,\clockval, \globaltime) \transitionWithControl{\paramd} (\loc, \clockval+\paramd, \globaltime+\paramd)$, with $\paramd \in \setRgeqzero$, if
			$\forall  \paramd' \in
			\setRpositif$ such that $\paramd' < \paramd$,
			$(\loc, \clockval+\paramd', \globaltime + \paramd') \in \TTSStates$.
		\end{enumerate}
	\end{enumerate}
\end{definition}
We write $(\loc, \clockval, t) \xrightarrow{\paramd,e}_{\strategy} (\loc', \clockval', t')$ for a combination of a delay and a discrete transitions when
$\exists \clockval''$ such that $ (\loc, \clockval, t) \transitionWithControl{\paramd} (\loc, \clockval'', t') \transitionWithControl{\edge} (\loc', \clockval', t')$.

A run $\run =  (\loc_0, \clockval_0), (\timeEdge{0}), \ldots, (\loc_n, \clockval_n)$ is \emph{\compatible} when, $\forall 0 \leq i < n$, it holds that $(\loc_i, \clockval_i, \sum_{j < i} \paramd_j) \transitionControlLabel{i}{\strategy} (\loc_{i+1}, \clockval_{i+1}, \sum_{j \leq i} \paramd_j)$.

We let
$\PrivVisitStrat{\TA}{\strategy}$ the set of private and \compatible{} runs,
$\PubVisitStrat{\TA}{\strategy}$ the set of public and \compatible{} runs,
$\PrivDurVisitStrat{\TA}{\strategy}$ the set of durations of private and  \compatible{} runs, and
$\PubDurVisitStrat{\TA}{\strategy}$ the set of durations of public and \compatible{}  runs.

\begin{definition}[\FullOpacityText{} with strategy]\label{def:opacity-strategy}
	Given a strategy~$\strategy$, a TA~$\TA$ is \emph{\fullOpaqueText{} with~$\strategy$} whenever $\PrivDurVisitStrat{\TA}{\strategy} = \PubDurVisitStrat{\TA}{\strategy}$.
\end{definition}
\begin{figure}
	{\centering
		\begin{tikzpicture}[ pta,  scale=.9, every node/.style={scale=1}]
			\node[location, initial, initial text =] (q0) {$\locinit$};
			\node[location, private] at (2.5,2) (qpriv) {$\locpriv$};f
			\node[location, final] at (2.5,-2) (qfin) {$\locfinal$};
			\node[invariant] at (q0.north) [yshift=.5em] (invariant) {$\styleclock{x} \leq 1$};

			\path[->] (q0) edge[loop below] node[below,align=center] {\shortstack{ ${\edgeicolor{1}}$ \\ $\styleclock{x} = 1$ \\$\styleact{u}$ \\$\styleclock{x} \assign 0 $}} (q0)
			edge node[above, sloped, align=center] { \shortstack{${\edgeicolor{2}}$ \\ $\styleclock{x} = 0$}} node[below, sloped] {$\styleact{u}$} (qpriv)
			edge node[below, sloped] {$\styleact{a}$} node[above, sloped] {${\edgeicolor{4}}$} (qfin)
			(qpriv) edge node[right, align=left] {\shortstack{${\edgeicolor{3}}$\\ $\styleclock{x} = 0$}} node[left] {$\styleact{u}$} (qfin);

		\end{tikzpicture}

	}
	\caption{TA $\exTAopaque$}
	\label{fig:aut:strategy-opaque}
\end{figure}
\begin{example}[\opaqueText{} TA]\label{example:opaque-TA}
	Consider the TA~$\exTAopaque$ in \cref{fig:aut:strategy-opaque}.
	Assume $\ActionSetU = \{u\}$ and $\ActionSetC = \{a\}$.
	First consider the strategy~$\strategy_1$ such that
	$\forall \tau \in \setRgeqzero , \strategy_1(\tau) = \{ a \}$, \ie{} $a$ is allowed anytime.
	We have $\PrivDurVisitStrat{\TA}{\strategy_1} = \setN$ while $\PubDurVisitStrat{\TA}{\strategy_1} = \setRgeqzero$.
	Therefore $\PrivDurVisitStrat{\TA}{\strategy_1} \neq \PubDurVisitStrat{\TA}{\strategy_1}$, and hence $\exTAopaque$ is not \fullOpaqueText{} with~$\strategy_1$.

	Now consider the strategy~$\strategy_2$ such that
	\[\strategy_2(\tau) = \left\{
    \begin{array}{ll}
        \{ a \} & \mbox{if } \tau \in \setN \\
        \emptyset & \mbox{otherwise.}
    \end{array}
\right.
	\]

	We now have $\PrivDurVisitStrat{\TA}{\strategy_2} = \PubDurVisitStrat{\TA}{\strategy_2} = \setN$, hence $\exTAopaque$ is \fullOpaqueText{} with~$\strategy_2$.
\end{example}
\begin{example}[Non-\opaqueText{} TA]\label{example:non-strategy-opaque}%
	There is no strategy such that TA~$\TAi{1}$ in \cref{fig:aut:non-strategy-opaque} is \fullOpaqueText{}.
	Recall that $\ActionSetU = \{u\}$ and $\ActionSetC = \{a, b\}$.
	The transitions $\mathsf{e_1}$ and~$\mathsf{e_2}$ are uncontrollable, so we can reach~$\locfinal$ at any integer time along a run visiting~$\locpriv$.
	However the set of public durations will either not contain~$0$, or contain~$\setRgeqzero$.
	Indeed, in order to reach $\locfinal$ with a public run, the system must take transitions associated to both actions $a$ and~$b$.
	As a consequence, to contain the duration~$0$, the strategy must allow $\{a, b\}$ at time~$0$.
	Hence, $\loc_3$ can be reached at time~$0$, and as the next transition~$\mathsf{e_8}$ is uncontrollable, it can be taken at any time.
	Thus, the set of public durations is~$\setRgeqzero$.
\end{example}
\paragraph*{\Finitelyvarying{} strategies}
In the following, to match the fact that a \metaStrategy{} has a finite number of strategy changes between two integer time instants, we only consider strategies that behave in a ``reasonable'' way.
We thus only consider \emph{\finitelyvarying{} strategies}, in which the number of changes are finite for any closed time interval.

Indeed, we can assume that a controller cannot change infinitely frequently its strategy in a finite time: it is unrealistic to consider, in a bounded interval, neither a system that can perform an infinite number of actions, nor a controller that can make an infinite number of choices.
\Finitelyvarying{} strategies are reminiscent of non-Zeno behaviours, in the sense that finitely-varying strategies have a finite number of strategy changes in every bounded interval.

Formally:

\begin{definition}[\Finitelyvarying{} strategy]\label{definition:finitelyvarying}
	A strategy~$\strategy$ is \emph{\finitelyvarying{}} whenever, for any closed time interval~$\interval$,
	there is a finite partition $\subinterval_1, \ldots, \subinterval_n$ of~$\interval$ such that
	each $\subinterval_i$ is an interval within which $\strategy$ makes the same choice.
	That is, for all $\tau_1, \tau_2 \in \subinterval_i$, $1 \leq i \leq n$, $\strategy(\tau_1) = \strategy(\tau_2)$.
\end{definition}
\begin{example}[\NonFinitelyvarying{} strategy]\label{ex:non-reasonable}
	Let $\exTANFV$ be the automaton in \cref{fig:aut:rationel}, with a global invariant $x,y \leq 1$.
	Let $\ActionSetU = \{u\}$ and $\ActionSetC = \{a, b\}$.
	Since transition $\mathsf{e}_4$ is uncontrollable, for any strategy~\strategy, $(0,1) \subseteq \PubDurVisitStrat{\TA}{\strategy}$. To ensure a \fullOpaqueText{} system, all those durations need to be also enabled through the private state. This implies that, for every $\alpha < 1$, both $a$ and $b$ should be enabled at a time instant before $\alpha$. Now if we consider the bottom path, if $a$ and $b$ are enabled simultaneously, then state $\loc_3$ is reachable and it is possible to reach the final state when $x =1$. Then we need to prevent $b$ to be enabled at the same time as $a$. 
	We can build a strategy~$\strategy$ to make~$\exTANFV$ \fullOpaqueText{} with~$\strategy$.
	This strategy (illustrated in red boxes in \cref{fig:aut:rationel}) is defined by
	\[\strategy(\tau) = \left\{
    \begin{array}{ll}
        \{ a \} & \mbox{if } \tau \in \setQgeqzero \\
        \{ b \} & \mbox{if } \tau \notin \setQgeqzero \\
    \end{array}
\right.
	\]

	With strategy $\strategy$, it means that the order in which actions $a$ and $b$ can be performed is not fixed, while preventing them from being performed at the same time. This is only possible with a \nonFinitelyvarying{} strategy, where the number of changes is infinite in a given interval. 

\begin{figure}[tb]
	\begin{center}
		\begin{tikzpicture}
		\begin{scope}[pta,  scale= 1, every node/.style={scale=1}]
			\node[style={rounded corners,draw=blue!80,fill=red!20,inner sep = 4pt},xshift=0ex,yshift=0pt,text width=1.3cm,draw] (st1) at (0,2) {$\styleclock{x} \in \setQgeqzero$};
			\node[style={rounded corners,draw=blue!80,fill=red!20,inner sep = 4pt},xshift=0ex,yshift=0pt,text width=1.3cm,draw] (st2) at (3,3) {$\styleclock{x} \notin \setQgeqzero$};
			\draw[->, blue!80] (.5,1) -- (st1.south);
			\draw[->, blue!80] (3,2) -- (st2.south);
			\node[location, initial, initial text =] (q0) {$\locinit$};
			\node[location, private] at (2,1.5) (qp) {$\locpriv$};
			\node[location] at (4,1.5) (q1) {$\loci{1}$};
			\node[location, final] at (6,0) (qf) {$\locfinal$};
			\node[location] at (2,-1.5) (q2) {$\loci{2}$};
			\node[location] at (4,-1.5) (q3) {$\loci{3}$};
			\node[invariant] at (5.5,3) {$x,y \leq 1$}; 
			\path[->]  (q0) edge node[below, sloped] {$\styleact{a}$} node[above, sloped] {\shortstack{${\edgeicolor{1}}$ \\ $\styleclock{x} > 0$}} (qp)
			(qp) edge node[above] {${\edgeicolor{2}}$} node[below] {$\styleact{b}$}  (q1)
			(q1) edge node[above, sloped]  {\shortstack{${\edgeicolor{3}}$ \\ $\styleclock{x} < 1$}}node[below, sloped] {$\styleact{u}$} (qf)
			(q0) edge node[above, sloped]  {${\edgeicolor{5}}$} node[below, sloped]{\shortstack{$\styleact{a}$ \\ $\styleclock{y} \assign 0$}} (q2)
			(q2) edge node[above, sloped]  {\shortstack{${\edgeicolor{6}}$ \\ $\styleclock{y} = 0$}} node[below, sloped]{$\styleact{b}$}(q3)
			(q3) edge node[above, sloped] {\shortstack{${\edgeicolor{7}}$ \\ $\styleclock{x} = 1$}} node[below, sloped] {$\styleact{u}$} (qf)
			(q0) edge node[above, sloped] {\shortstack{${\edgeicolor{4}}$ \\ $0<\styleclock{x}<1$}} node[below, sloped] {$\styleact{u}$} (qf);
		\end{scope}
	\end{tikzpicture}
	\caption{Automaton $\exTANFV$}
	\label{fig:aut:rationel}

\end{center}

\end{figure}
\end{example}
\paragraph*{\MetaStrategies}
As defined previously, strategies have infinite precision in determining when to activate or deactivate controllable actions.
While it might seem reasonable to assume precise knowledge of when a clock tick occurs, in practice, achieving such infinite precision between integer time points is not feasible.
Therefore, the idea is to group together all strategies that enable the same actions at exact integer times and permit subsets of these actions---maintaining the same order---between those times.

\begin{definition}[\MetaStrategy{}]\label{definition:metastrategy}
	A \metaStrategy{} $\metastrat{}$ is a partial function on integer bounded intervals, such that:
	\begin{itemize}
		\item for all $n \in \setN$,  $\metastrat([n,n]) \in 2^{\ActionSetC}$,
		\item for all $n \in \setN$, $\metastrat((n,n+1))  \in (2^{\ActionSetC})^*$
	\end{itemize}

\end{definition}

In other words, a \metaStrategy{} is a function which associates to each integer time a set~$\substrat \in 2^{\ActionSetC}$ of enabled actions, and to each open interval between two consecutive integers a finite sequence $(\substrat_1, \ldots, \substrat_m) \in (2^{\ActionSetC})^*$ giving the order in which sets of actions are allowed in the system.

\begin{definition}[Ordered partition]
	Given an interval~$\interval$, we call \emph{ordered partition of~$\interval$} a finite sequence of disjoint intervals $\subinterval_1, \dots, \subinterval_n$ such that:
	\begin{enumerate}
		\item $\bigcup_{j=1}^{n}\subinterval_j = \interval$,
		\item $\subinterval_1$ is left open, $\subinterval_n$ is right open,
		\item each $\subinterval_j$ is non empty,
			and
		\item for all $j \in \{1, \dots ,n-1\}$, the right boundary of $\subinterval_j$ is the left boundary of $\subinterval_{j+1}$.
	\end{enumerate}
\end{definition}

\begin{example}
The sequence $(0,0.2], (0.2,0.42),$ $[0.42, 0.42], (0.42,0.83], (0.83,1)$ is an ordered partition of the interval $\interval = (0,1)$.
\end{example}

\begin{definition}[\MetaStrategy{} satisfaction]
	A strategy $\strategy{}$ is said to \emph{satisfy} a \metaStrategy{} $\metastrat{}$, denoted $\strategy \satisfies \metastrat$, when:
	\begin{itemize}
		\item for all $n \in \setN$, $\strategy(n)= \metastrat([n,n])$,
		\item for all  $n \in \setN$, denoting $\interval = (n,n+1)$ and $\metastrat(\interval) = \substrat_1, \ldots, \substrat_m$ %
		 there is an ordered partition  $\subinterval_1,  \ldots, \subinterval_m$ of~$\interval$ such that for all $\tau \in \subinterval_i$, $1 \leq i \leq m$,  $\strategy(\tau) = \substrat_i$.
	\end{itemize}

\end{definition}

The set of private (or public) durations for a \metaStrategy{} is defined as the union of private (or public) durations of all the strategies it represents:
$\PrivDurVisitStrat{\TA}{\metastrat} = \{\tau \mid \exists \strategy \text{ s.t.\ } \strategy \satisfies \metastrat \text{ and } \tau \in \PrivDurVisitStrat{\TA}{\strategy}\}$ and $\PubDurVisitStrat{\TA}{\metastrat} = \{\tau \mid \exists \strategy \text{ s.t.\ } \strategy \satisfies \metastrat \text{ and } \tau \in \PubDurVisitStrat{\TA}{\strategy}\}$.

\begin{definition}[\FullOpacityText{} with a \metaStrategy{}]\label{def:opacity-metastrategy}
	Given a \metaStrategy{} $\metastrat$, a TA~$\TA$ is \emph{\fullOpaqueText{} with~$\metastrat$} whenever
	$\PrivDurVisitStrat{\TA}{\metastrat} = \PubDurVisitStrat{\TA}{\metastrat}$.
\end{definition}

\cref{definition:metastrategy,definition:finitelyvarying} immediately give a correspondence between \finitelyvarying{} strategies and \metaStrategies{}:

\begin{lemma}
	For any \finitelyvarying{} strategy $\strategy{}$, there exists a unique \metaStrategy{} $\metastrat{}$ such that $\strategy \satisfies \metastrat$.

	For any \metaStrategy{} $\metastrat{}$, there exists a (\finitelyvarying{}) strategy~$\strategy{}$ such that $\strategy \satisfies \metastrat$.
\end{lemma}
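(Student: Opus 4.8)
My plan is to treat the two directions separately; both follow by directly unfolding \cref{definition:finitelyvarying,definition:metastrategy}, the only subtlety being the choice of a \emph{canonical} ordered partition. For the first direction, I start from a \finitelyvarying{} strategy $\strategy$ and define $\metastrat$ as follows. Satisfaction forces $\metastrat([n,n]) := \strategy(n)$ for every $n \in \setN$. For the open interval $(n,n+1)$, I apply \cref{definition:finitelyvarying} to the \emph{closed} interval $[n,n+1]$ to obtain a finite partition of $[n,n+1]$ into sub-intervals on which $\strategy$ is constant; dropping the endpoints $n$ and $n+1$ leaves a finite partition of $(n,n+1)$ into sub-intervals of constancy, and \emph{merging} adjacent sub-intervals carrying the same value yields an ordered partition $\subinterval_1, \dots, \subinterval_m$ of $(n,n+1)$ whose successive values $\substrat_i$ satisfy $\substrat_i \neq \substrat_{i+1}$. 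I set $\metastrat((n,n+1)) := (\substrat_1, \dots, \substrat_m)$. This $\metastrat$ has the required type, and the partition $\subinterval_1, \dots, \subinterval_m$ --- whose first piece is left-open and last piece is right-open simply because $(n,n+1)$ is open --- witnesses $\strategy \satisfies \metastrat$.

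Uniqueness is then immediate: the integer values of any satisfying meta-strategy are forced, and on each $(n,n+1)$ the finitely many points at which $\strategy$ changes value depend only on $\strategy$, so the maximal intervals of constancy, read left to right, determine the sequence $(\substrat_1, \dots, \substrat_m)$ (with no consecutive repetitions) uniquely. For the second direction, given a \metaStrategy{} $\metastrat$, I build a witness strategy $\strategy$ by setting $\strategy(n) := \metastrat([n,n])$, and, for $(n,n+1)$ with $\metastrat((n,n+1)) = \substrat_1, \dots, \substrat_m$, slicing $(n,n+1)$ into the $m$ consecutive pieces $\subinterval_i := \bigl(n+\tfrac{i-1}{m},\, n+\tfrac{i}{m}\bigr]$ for $i<m$ and $\subinterval_m := \bigl(n+\tfrac{m-1}{m},\, n+1\bigr)$, with $\strategy$ taking value $\substrat_i$ on $\subinterval_i$. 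Every real number lies in exactly one integer singleton or one such $\subinterval_i$, so $\strategy$ is totally defined; the $\subinterval_i$ form an ordered partition, so $\strategy \satisfies \metastrat$; and $\strategy$ is \finitelyvarying{}, since any closed interval meets finitely many integers and finitely many unit intervals, each contributing finitely many pieces.

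The constructions themselves are routine, so the step I expect to require the most care is the uniqueness assertion, which hinges on reading a \metaStrategy{} in \emph{reduced} form: I would make the merging step above explicit and record that $\metastrat((n,n+1))$ is always taken with $\substrat_i \neq \substrat_{i+1}$, since otherwise one strategy would satisfy several meta-strategies differing only by stuttering. A minor secondary point, worth noting in passing, is that a meta-strategy assigning the empty sequence to some $(n,n+1)$ admits no ordered partition and hence no satisfying strategy, so the second direction tacitly assumes each $\metastrat((n,n+1))$ is non-empty.
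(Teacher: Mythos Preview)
Your proof is correct and in fact more careful than the paper's, which offers no argument at all: the lemma is simply stated as an immediate consequence of \cref{definition:finitelyvarying,definition:metastrategy}. Your explicit constructions (maximal intervals of constancy for the first direction, equispaced slicing for the second) are exactly the natural ones, and your verification that the slicing yields an ordered partition in the sense of the paper is accurate.

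More importantly, you have put your finger on two genuine imprecisions in the lemma as stated. First, uniqueness fails verbatim: since \cref{definition:metastrategy} allows $\metastrat((n,n+1))\in(2^{\ActionSetC})^*$ with no constraint on consecutive entries, a strategy constant on $(n,n+1)$ satisfies both the length-$1$ sequence and any stuttered version of it. Your fix---reading meta-strategies in reduced form with $\substrat_i\neq\substrat_{i+1}$---is the right repair, and the paper silently relies on it. Second, your remark that the empty sequence on some $(n,n+1)$ admits no ordered partition (each $\subinterval_j$ must be non-empty) is also correct; the second direction indeed needs $\metastrat((n,n+1))$ non-empty. The paper does not flag either point, so your proposal is not just a proof but a useful clarification of what the lemma should actually say.
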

\paragraph*{Problems}\label{paragraph:problems}

In this paper, we are interested in several \opacityText{} control problems, \ie{} related to a \mbox{(meta-)}strategy making the corresponding controlled TA~$\TA$ \opaqueText{}.

The first problem we are interested in will be the \emph{existence} of a strategy enforcing the \opacityText{}.

\defProblem
	{\FullOpacityText{} strategy emptiness problem}
	{A TA $\TA$}
	{Decide the emptiness of the set of strategies~$\strategy$ such that $\TA$ is \fullOpaqueText{} with~$\strategy$.}

Because we will show that this problem is undecidable (\cref{theorem:undecidability}), even when restricted to \finitelyvarying{} strategies, we define a variant of the setting, and consider the existence of \metaStrategies{} instead.
We will consider both the \emph{existence} of a \metaStrategy{} enforcing the \opacityText{}, or the \emph{synthesis} of such a \metaStrategy{}.

\defProblem
	{\FullOpacityText{} \metaStrategy{} emptiness problem}
	{A TA $\TA$}
	{Decide the emptiness of the set of \metaStrategies{} $\metastrat$ such that $\TA$ is \fullOpaqueText{} with~$\metastrat$.}
\defProblem
	{\FullOpacityText{} \metaStrategy{} synthesis problem}
	{A TA $\TA$}
	{Synthesize a \metaStrategy{} $\metastrat$ such that $\TA$ is \fullOpaqueText{} with~$\metastrat$.}
\section{Undecidability of the \fullOpacityText{} strategy emptiness problem}\label{section:undecidability}

In this section, we show that the \fullOpacityText{} strategy  emptiness problem is undecidable, vindicating the reliance on meta-strategies.
To do so, we will use a reduction from the termination of the two-counter Minsky machine problem---which is known to be undecidable~\cite{Minsky67}.

Let us start with a quick recall of Minsky machines.
A two-counter Minsky machine~$\twoCM$ is described by two counters $\counterOne$ and~$\counterTwo$, as well as a sequence of commands $\twoCMcommand_0,\dots,\twoCMcommand_m$ where $\twoCMcommand_0$ is the
starting command, $\twoCMcommand_m$ ends the run of the system, and every command $\twoCMcommand_0$ to~$\twoCMcommand_{m-1}$ is of one of the following three types:
\begin{itemize}
\item increment counter $\counteri \in \{\counterOne,\counterTwo\}$, move to the next command
\item decrement counter $\counteri \in \{\counterOne,\counterTwo\}$, move to the next command (note that the machine must be designed so that this command cannot occur when counter~$\counteri$ is equal to~0)
\item if counter $\counteri \in \{\counterOne,\counterTwo\}$ is equal to~0, move to command~$\twoCMcommand_k$, otherwise move to command~$\twoCMcommand_j$.
\end{itemize}

In summary, the machine goes through a list of commands starting with~$\twoCMcommand_0$, incrementing, decrementing counters, or testing whether a counter is equal to~0 in order to select the new command to jump to---and it terminates whenever it reaches~$\twoCMcommand_m$.
The \emph{termination problem} for two-counter Minsky machines consists in deciding, given a machine~$\twoCM$, whether $\twoCM$ terminates.

\begin{theorem}\label{theorem:undecidability}
The \fullOpacityText{} strategy emptiness problem is undecidable.
\end{theorem}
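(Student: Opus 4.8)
The plan is to reduce the (undecidable) halting problem for two‑counter Minsky machines to the \fullOpacityText{} strategy emptiness problem: from a machine~$\twoCM$ I would build a TA~$\TAi{\twoCM}$, together with a partition $\ActionSet = \ActionSetC \uplus \ActionSetU$, such that the set of strategies~$\strategy$ for which $\TAi{\twoCM}$ is \fullOpaqueText{} with~$\strategy$ is \emph{empty} if and only if $\twoCM$ \emph{halts} (equivalently: some — even \finitelyvarying{} — strategy makes it \fullOpaqueText{} iff $\twoCM$ does not halt). The intuition is that a strategy is a function of continuous time and can therefore encode an \emph{infinite} object, namely a faithful run of~$\twoCM$, whereas a bare TA cannot (its reachability being decidable); the TA is then only required to \emph{verify}, step by step, that the strategy is indeed encoding a correct run.

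The TA $\TAi{\twoCM}$ has two branches leaving~$\locinit$ by uncontrollable edges. The \emph{public branch} uses only uncontrollable (and silent) actions together with the compulsory ``$\clockextra = 1$'' self‑loops and the invariant ``$\clockextra \leq 1$'', and reaches a final location exactly when ``$\clockextra = 0$''; by the urgency of final locations (\cref{assumption-urgent-final}) this yields $\PubDurVisitStrat{\TAi{\twoCM}}{\strategy} = \setN$ for \emph{every} strategy~$\strategy$, and no other $\locpriv$‑avoiding run reaches a final location. The \emph{private branch} goes $\locinit \to \locpriv \to$ (simulation gadget): during each time interval $[i, i+1)$ the gadget reads off, from the on/off pattern of the controllable actions dictated by~$\strategy$, a purported configuration~$i$ of~$\twoCM$ (control command, plus the two counter values encoded exponentially as clock values so as to fit within bounded clocks and integer‑valued guards), and it checks that configuration~$i{+}1$ follows from configuration~$i$ by the rules of~$\twoCM$, using the standard increment/decrement/zero‑test clock constructions whose ``later checks'' are here realised by checkpoint edges that the strategy must time correctly. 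For every~$i$ such that the configurations supplied in $[0,1), \dots, [i{-}1,i)$ form a verified consistent prefix in which no command is the halting command, the gadget offers, at absolute time~$i$, an edge into a final location; conversely every private run reaches a final location only at an integer time and only after such a verified prefix. Hence $\PrivDurVisitStrat{\TAi{\twoCM}}{\strategy}$ is always an initial segment of~$\setN$, and it is all of~$\setN$ precisely when~$\strategy$ encodes a correct infinite run of~$\twoCM$.

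Correctness then follows. If $\twoCM$ does not halt, its unique run is infinite and never reaches the halting command; the strategy~$\strategy$ encoding it uses only finitely many switches per unit interval, so it is \finitelyvarying{}, and $\PrivDurVisitStrat{\TAi{\twoCM}}{\strategy} = \setN = \PubDurVisitStrat{\TAi{\twoCM}}{\strategy}$, i.e.\ $\TAi{\twoCM}$ is \fullOpaqueText{} with~$\strategy$. Conversely, if $\twoCM$ halts at step~$N$, then by determinism of~$\twoCM$ and soundness of the verifier any verified consistent prefix is a prefix of the true run, so no strategy produces an exit time exceeding~$N$; thus $\PrivDurVisitStrat{\TAi{\twoCM}}{\strategy} \subseteq \{0, \dots, N\} \subsetneq \setN = \PubDurVisitStrat{\TAi{\twoCM}}{\strategy}$ for every~$\strategy$, and the set of opacifying strategies is empty. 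Since halting is undecidable, deciding this emptiness is undecidable; as the witness in the non‑halting case is \finitelyvarying{}, the same reduction shows undecidability already when strategies are restricted to be \finitelyvarying{}.

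The crux — and the step I expect to be the main obstacle — is the design of the simulation gadget so that it is simultaneously \emph{sound} (any deviation from a faithful encoding, or any attempt to continue past the halting command, is detected, so that no spurious exit time — non‑integer, or an integer past the verified prefix — ever appears), \emph{complete} (a faithful strategy is never blocked), and \emph{tight} (final locations are reached only at integer times). Because guards may only compare clocks to integers, the counters must be encoded exponentially, and the usual increment/decrement gadgets — whose correctness ordinarily rests on a later event in a timed word — have to be rephrased as checkpoint edges whose timing the strategy supplies; making these interact correctly with the mandatory ``$\clockextra = 1$'' self‑loops and with \cref{assumption-urgent-final}, and confining all nondeterminism and uncontrollable edges inside the private branch so that they cannot by themselves generate a run of the wrong duration, is where essentially all of the work lies.
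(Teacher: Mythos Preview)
Your reduction has a structural gap that collapses it to a decidable problem. Because you fix $\PubDurVisitStrat{\TAi{\twoCM}}{\strategy} = \setN$ for \emph{every} strategy, \fullOpacityText{} with~$\strategy$ becomes equivalent to $\PrivDurVisitStrat{\TAi{\twoCM}}{\strategy} = \setN$. But the private-duration set is \emph{monotone} in the strategy: if $\strategy_1(\tau) \subseteq \strategy_2(\tau)$ for all~$\tau$ then every $\strategy_1$-compatible run is also $\strategy_2$-compatible, so $\PrivDurVisitStrat{\TAi{\twoCM}}{\strategy_1} \subseteq \PrivDurVisitStrat{\TAi{\twoCM}}{\strategy_2}$. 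The constant strategy $\strategy_*(\tau) = \ActionSetC$ (trivially \finitelyvarying{}) therefore maximises the private durations, and your emptiness question reduces to ``does the \emph{uncontrolled}~$\TAi{\twoCM}$ satisfy $\PrivDurVisit{\TAi{\twoCM}} = \setN$?''. That is exactly an instance of the non-controlled \opacityText{} problem, which is decidable~\cite{ALLMS23}; equivalently, your ``verifier'' lives entirely inside a TA, and a TA cannot faithfully track unbounded counters on its own. No design of the simulation gadget can change this: in your scheme the strategy can only \emph{help} the private branch, never hurt it, so over-enabling is free and the controller is irrelevant.

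The paper's construction avoids this trap by making \emph{both} duration sets depend on~$\strategy$ in a way that penalises over-enabling. The counter value~$k$ is encoded not as a clock value but as the \emph{number of distinct instants} within a unit sub-interval at which a controllable action must be enabled; each such enabling produces a private run immediately \emph{and} a public run three time units later. Opacity then forces the strategy to enable the action at \emph{exactly} the right multiset of instants: too few enablings leave a public duration unmatched, while an extra enabling creates a private duration with no matching public one. This exactness is what simulates the deterministic run of~$\twoCM$, and reaching the halting command injects an unavoidable mismatch. The essential contrast with your plan is that here the generous strategy $\strategy_*$ is \emph{not} optimal---it immediately destroys opacity---so the question cannot be reduced to a property of the uncontrolled automaton.
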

\begin{proof}
Let $\twoCM$ be a Minsky machine over two counters $\counterOne$ and~$\counterTwo$, described by the commands $\twoCMcommand_0,\dots, \twoCMcommand_m$.
We will build a TA~$\TA$ such that there exists a strategy $\strategy$ enforcing
\fullOpacityText{} of~$\TA$ iff $\twoCM$ does \emph{not} terminate.

\paragraph{Overall intuition of the encoding}
Intuitively, in order to be opaque, the TA~$\TA$ coupled with a strategy $\strategy$ will
have to correctly emulate the behaviour of the Minsky machine, with the strategy's choices depending 
on the values of the counters.
More precisely, each command of~$\twoCM$ will take exactly three time units to be handled by~$\TA$, and therefore the $i$-th step of the machine corresponds to the interval $(3i,3i+3]$.
Considering this interval modulo~3, $(0,1]$ is dedicated to actions associated to counter~$\counterOne$,
$(1,2]$ is dedicated to actions associated to counter~$\counterTwo$,
and 
$(2,3]$ is used to detect whether the strategy makes the system \fullOpaqueText{}; in particular, we have that no private run can reach the target within this period of time.
Due to the periodic nature of the system, the intervals $(0,1]$, $(1,2]$ and $(2,3]$ should be understood modulo~3.

Let us give an idea of how counter~$\counterOne$ is represented within the first interval. More detailed explanation as well as the corresponding \TA{} used are given later in this section. In our model, whenever~$\counterOne$ has the value~$k$, then $k$ public runs will reach the final location at different times during this interval. 
In order to ensure opacity, the strategy needs to allow the 
action~$\textstyleact{a_{\counterOne}}$ at those $k$ instants, as this action
produces a private run that immediately reaches the final state. However it also
produces a public run that will reach the final state three time units later.
Hence, the number of public runs reaching the final location during the next 
$(0,1]$ interval remains the same (ignoring the potential impact of other actions), 
thus preserving the value of the counter as well.

Incrementing the counter can then be done by forcing an action which will add a new public duration three times unit later, while decrementing the counter is done by producing a private run immediately reaching the target, and thus removing the need for one
$\textstyleact{a_{\counterOne}}$.
The test command can be ensured by first requiring, at time~$0$, the controller to claim (allowing either the action \textstyleact{=_0} or \textstyleact{\neq_0}) whether counter~$\counterOne$ is zero or not.
Then, by observing whether action $\textstyleact{a_{\counterOne}}$ occurs or not within the following interval. 

The actions on counter~$\counterTwo$ are the same, only occurring within the intervals~$(1,2]$ and thus will not be detailed fully in the proof.

Finally, if a violation
is made (for instance by claiming the counter is zero when it is not, or by refusing to play an action linked to an increment or decrement of a counter), a location can be reached from which one can go to the final location at any time. This means that any duration beyond this point corresponds to a public run.
This violates opacity as, as mentioned earlier, no private
run will be able to reach the final location within the intervals~$(2,3]$.
Reaching the final command~$\twoCMcommand_m$ will similarly trigger a violation of opacity, hence
the only way for a strategy to ensure opacity is to properly emulate the Minsky machine 
but fail to reach $\twoCMcommand_m$, and thus for the Minsky machine not to terminate.

\newcommand{\gadgetOneAct}{\ensuremath{\mathit{1act}}}
\newcommand{\gadgetCone}{\ensuremath{G_{\counterOne}}}
\newcommand{\gadgetCtwo}{\ensuremath{G_{\counterTwo}}}
\newcommand{\gadgetIfCone}{\ensuremath{\mathit{if}_{\counterOne}}}

\newcommand{\gadgetIncCone}{\ensuremath{\textit{inc}_{\counterOne}}}
\newcommand{\gadgetIncCtwo}{\ensuremath{\textit{inc}_{\counterTwo}}}
\newcommand{\gadgetDecCone}{\ensuremath{\textit{dec}_{\counterOne}}}
\newcommand{\gadgetDecCtwo}{\ensuremath{\textit{dec}_{\counterTwo}}}

\newcommand{\actIncCone}{\ensuremath{\textstyleact{\gadgetIncCone{}}}}
\newcommand{\actIncCtwo}{\ensuremath{\textstyleact{\gadgetIncCtwo{}}}}
\newcommand{\actDecCone}{\ensuremath{\textstyleact{\gadgetDecCone{}}}}
\newcommand{\actDecCtwo}{\ensuremath{\textstyleact{\gadgetDecCtwo{}}}}

\paragraph{Gadgets and actions}
Our TA will be built by combining several smaller timed automata fragments called ``gadgets''.
We first describe the construction gadget per gadget, and then explain how they should be combined. As the strategy's choice is based only on time elapsed, it does not know in which gadget the run is, and thus must assume it might be in any of them, ensuring opacity in all cases.

For all the gadgets, we fix a set of actions 
$\ActionSet=\{\textstyleact{u},\textstyleact{a_{\counterOne}},\textstyleact{a_{\counterTwo}},\actIncCone, \actIncCtwo, \actDecCone,\actDecCtwo,\textstyleact{=_0},\textstyleact{\neq_0}\}$ with~$\textstyleact{u}$ being the only uncontrollable action.
The TA~$\TA$, and a fortiori the different gadgets, will rely on a single clock~\textstyleclock{x}.
In particular, we do not rely on the extra ``tick'' clock~\textstyleclock{z} that is assumed throughout the rest of this document.

\paragraph{Gadget \gadgetOneAct{}}
We first describe a gadget preventing the controller from allowing two actions simultaneously.
Formally, the gadget \gadgetOneAct{} is the TA (see \cref{fig:aut:1Act})
$\TA_{\gadgetOneAct} = (\ActionSet, \LocSet_{\gadgetOneAct}, \locinit^{\gadgetOneAct}, \locpriv, \locfinal, \{\textstyleclock{x}\}, \invariant_{\gadgetOneAct}, \EdgeSet_{\gadgetOneAct}) $ where:
\begin{itemize}
\item $\LocSet_{\gadgetOneAct} =\{\locinit^{\gadgetOneAct}, \loc_e,\locfinal\}\cup\{\loc_{\textstyleact{v}}\mid \textstyleact{v}\in \ActionSet\}$,
\item $\invariant_{\gadgetOneAct}(\loc)=\true$ for all $\loc \in \LocSet_{\gadgetOneAct}$,
\item $\EdgeSet_{\gadgetOneAct} = \big\{(\locinit^{\gadgetOneAct}, \true, \textstyleact{v}, \{\textstyleclock{x}\}, \loc_v)\mid \textstyleact{v}\in \ActionSet \big\} \cup
\big\{(\loc_{\textstyleact{v}}, \textstyleclock{x}=0, \textstyleact{v'}, \emptyset, \loc_e)\mid \textstyleact{v},\textstyleact{v'}\in \ActionSet, \textstyleact{v}\neq \textstyleact{v'}
\big\}
\cup \big\{(\loc_e,\true,u,\emptyset,\locfinal \big\}
$.
\end{itemize}

In the \gadgetOneAct{} gadget, whenever an action~$\textstyleact{v}$ is allowed, the system can go to a location~$\loc_{\textstyleact{v}}$ via \textstyleedge{1}, resetting~$\textstyleclock{x}$; then, if any action other than~$\textstyleact{v}$ is also allowed at the same time (which is tested by requesting $\textstyleclock{x}=0$ in edge~\textstyleedge{2}), then the following location~$\loc_e$ can be reached.
From~$\loc_e$ the system can reach the final location at any time via~\textstyleedge{3} via the uncontrollable action~$u$, whatever the controller does.
As a consequence, every duration beyond this point corresponds to a public run.
As previously mentioned, some durations (the intervals $(2,3]$) cannot be achieved by private runs, and therefore opacity is violated.

\begin{figure}
	{\centering
		\begin{tikzpicture}[ pta,  scale=.9, every node/.style={scale=1}]
			\node[location, initial, initial text =] (q0) {$\locinit^{\gadgetOneAct}$};

			\node[location] at (2.4,0) (qa) {$\loc_v$};
			\node[location] at (4.8,0) (qe) {$\loc_e$};
			\node[location, final] at (6.6,0) (qfin) {$\locfinal$};

			\path[->] (q0) 
				edge node[above] {\shortstack{$\edgeicolor{1}$\\ $\styleclock{x} \assign 0$}} node[below] {$\styleact{v}$} (qa)
				(qa) edge node[above] {\shortstack{$\edgeicolor{2}$\\$\styleclock{x} = 0$}} node[below] {$\styleact{v'}\neq \styleact{v}$} (qe)
				(qe) edge node[above] {$\edgeicolor{3}$} node[below] {$\styleact{u}$} (qfin);

		\end{tikzpicture}
	}
	\caption{\gadgetOneAct{} gadget on a generic action $\textstyleact{v}$}
	\label{fig:aut:1Act}
\end{figure}

\paragraph{Gadgets $\gadgetCone{}$ and $\gadgetCtwo{}$}
We now introduce the gadget $\gadgetCone{}$ (resp.\ $\gadgetCtwo{}$) which forces, barring intervention by other gadgets, the strategy to repeat the same behaviour within the intervals $(0,1)$ (resp.\ $(1,2)$).
Formally, the gadget $\gadgetCone{}$ is the TA (see \cref{fig:aut:Gx})
$\TA_{\gadgetCone{}} = (\ActionSet, \LocSet_{\gadgetCone{}}, \locinit^{\gadgetCone{}}, \locpriv, \locfinal, \{\textstyleclock{x}\}, \invariant_{\gadgetCone{}}, \EdgeSet_{\gadgetCone{}}) $ where:
\begin{itemize}
\item $\LocSet_{\gadgetCone{}} =\{\locinit^{\gadgetCone{}},\loc_{\counterOne},\locpriv, \locfinal\}$,
\item $\invariant_{\gadgetCone{}}(\loc)=\true$ for all $\loc \in \LocSet_{\gadgetCone{}}$,
\item $\EdgeSet_{\gadgetCone{}} =
\big\{(\locinit^{\gadgetCone{}}, \textstyleclock{x}=3, \textstyleact{u}, \{\textstyleclock{x}\}, \locinit^{\gadgetCone{}}),$ \\$
(\locinit^{\gadgetCone{}}, 0<\textstyleclock{x}<1, \textstyleact{a_{\counterOne}}, \{\textstyleclock{x}\}, \loc_{\counterOne}),$ \\$
(\loc_{\counterOne}, \textstyleclock{x}=0, \textstyleact{u}, \emptyset, \locpriv),$ $
(\locpriv, \textstyleclock{x}=0, \textstyleact{u}, \emptyset, \locfinal),$ $
(\loc_{\counterOne}, \textstyleclock{x}=3, \textstyleact{u}, \emptyset, \locfinal)
\big\}$.
\end{itemize}

\begin{figure*}
	\begin{subfigure}{0.45\textwidth}
	{\centering
		\begin{tikzpicture}[ pta,  scale=.9, every node/.style={scale=1}]
			\node[location, initial, initial text =] (q0) {$\locinit^{\gadgetCone{}}$};
			\node[location] at (0,-2) (qcounter) {$\loc_{\counterOne}$};
			\node[location,private] at (2,-3) (qpriv) {$\locpriv$};
			\node[location, final] at (4,-2) (qfin) {$\locfinal$};

			\path[->] (q0) 
			edge[loop above] node[above] {\shortstack{$ \edgeicolor{1}, \styleact{u}$ \\$\styleclock{x} = 3$\\$\styleclock{x} \assign 0$}} (q0)
			edge node[left] {\shortstack{$\edgeicolor{2}, \styleact{a_{\counterOne}}$ \\ $0<\styleclock{x}<1$ \\ $\styleclock{x} \assign 0$}} (qcounter)
			(qcounter) edge[bend right=20] node[above, sloped] {$\edgeicolor{3}, \styleact{u}$} node[below, sloped] { $\styleclock{x} = 0$ } (qpriv)
			edge node[above,sloped] {\shortstack{${\edgeicolor{4}}, \styleact{u}$ \\$\styleclock{x} = 3$}} (qfin)
			(qpriv) edge[bend right=20] node[above, sloped] {$\edgeicolor{5}, \styleact{u}$} node[below, sloped]  { $\styleclock{x} = 0$}  (qfin);
		\end{tikzpicture}

	}
	\caption{Gadget $\gadgetCone{}$}\label{fig:aut:Gx}
	\end{subfigure}
	\hfill
	\begin{subfigure}{0.45\textwidth}
	{\centering
		\begin{tikzpicture}[ pta,  scale=.9, every node/.style={scale=1}]
			\node[location, initial, initial text =] (q0) {$\locinit^{\gadgetCtwo{}}$};
			\node[location] at (0,-2) (qcounter) {$\loc_{\counterTwo}$};
			\node[location,private] at (2,-3) (qpriv) {$\locpriv$};
			\node[location, final] at (4,-2) (qfin) {$\locfinal$};

			\path[->] (q0) 
			edge[loop above] node[above] {\shortstack{$\edgeicolor{1}, \styleact{u}$ \\$\styleclock{x} = 3$\\$\styleclock{x} \assign 0$ }} (q0)
			edge node[left] {\shortstack{$\edgeicolor{2}, \styleact{a_{\counterTwo}}$ \\ $1<\styleclock{x}<2$ \\$\styleclock{x} \assign 0$ }} (qcounter)
			(qcounter) edge[bend right=20] node[above, sloped] {$\edgeicolor{3}, \styleact{u}$} node[below, sloped] { $\styleclock{x} = 0$} (qpriv)
			edge node[above,sloped] {\shortstack{${\edgeicolor{4}}, \styleact{u}$ \\$\styleclock{x} = 3$}}  (qfin)
			(qpriv) edge[bend right=20] node[above, sloped] {$\edgeicolor{5}, \styleact{u}$} node[below, sloped]  { $\styleclock{x} = 0$ }  (qfin);
		\end{tikzpicture}

	}
	\caption{Gadget $\gadgetCtwo{}$}\label{fig:aut:Gy}
		\end{subfigure}

	\caption{Gadgets $\gadgetCone{}$ and~$\gadgetCtwo{}$}
\end{figure*}

Gadget~$\gadgetCtwo{}$ only differs by one transition (see \cref{fig:aut:Gy}), moving the impact of allowing the action~$\textstyleact{a_{\counterTwo}}$ to the interval~$(1,2)$.
Note that, since the intervals in which $\textstyleact{a_{\counterOne}}$ and~$\textstyleact{a_{\counterTwo}}$ have an effect are disjoint, we could use a single action for both.
We keep two for ease of understanding.
In the following, we will not present the variations associated to counter~$\counterTwo$.

Assume that public runs are expected to reach the target at times $\tau_1,\dots,\tau_k$ within the interval $(0,1)$ of this step.
Ignoring future gadgets, thanks to~$\gadgetCone{}$,
the strategy can preserve opacity by allowing $\textstyleact{a_{\counterOne}}$ exactly at times $\tau_1,\dots,\tau_k$.
This immediately produce a private run (going via~\textstyleedge{2} then~\textstyleedge{3}).
However, it also produces a public run that will reach the final location 3~time units later (via~\textstyleedge{4}), hence forcing the strategy to  redo the same choices during the next step.

As the number of times $\textstyleact{a_{\counterOne}}$ is allowed during one step of the process represents the value of counter~$\counterOne$, barring external intervention, $\gadgetCone{}$ allows to maintain the value of the counter within the strategy.

\paragraph{Increment gadget}
We now show how external interventions modify the number of times the
strategy must repeat~$\textstyleact{a_{\counterOne}}$ within the interval~$(0,1)$.
We start with the gadget corresponding to a command incrementing counter~$\counterOne$.
More precisely, if command~$\twoCMcommand_i$ is incrementing~$\counterOne$, we build the TA (see \cref{fig:aut:incx}) $\TA_i
= (\ActionSet, \LocSet_i, \locinit^{i}, \locpriv, \locfinal, \{\textstyleclock{x}\}, \invariant_i, \EdgeSet_i) $ where:
\begin{itemize}
\item $\LocSet_{i} =\{\locinit^{i},\loc_1^i,\loc_2^i,\loc_3^i, \locinit^{i+1},\locpriv, \locfinal\}$,
\item $\invariant_{i}(\loc)=true$ for all $\loc \in \LocSet_{i}$,
\item $\EdgeSet_{i} = 
\big\{(\locinit^{i}, \textstyleclock{x}=3, \textstyleact{u}, \{\textstyleclock{x}\}, \locinit^{i+1}),
(\locinit^{i}, 0<\textstyleclock{x}<1,$ $\actIncCone, \emptyset, \loc^i_1),
(\locinit^{i}, 0<\textstyleclock{x}<1, \actIncCone, \{\textstyleclock{x}\}, \loc^i_2),$\\
$(\locinit^{i}, \textstyleclock{x}=1, \textstyleact{u}, \emptyset, \locfinal),
(\loc_1^{i}, \textstyleclock{x}=1, \textstyleact{u},  \{\textstyleclock{x}\}, \locpriv),$\\
$(\locpriv, \textstyleclock{x}=0, \textstyleact{u}, \emptyset, \locfinal),
(\loc_2^{i}, \textstyleclock{x}=3, \textstyleact{u},  \emptyset, \locfinal),
$ \mbox{$(\loc_2^{i}, 0<\textstyleclock{x}<1, \actIncCone,\emptyset, \loc_3^i),
(\loc_3^{i}, \true, \textstyleact{u},  \emptyset, \locfinal)
\big\}$.}
\end{itemize}

\begin{figure*}
	\begin{subfigure}{0.45\textwidth}
	{\centering
		\begin{tikzpicture}[ pta,  scale=.8, every node/.style={scale=.9}]
			\node[location, initial, initial text =] (q0) {$\locinit^{i}$};
			\node[location] at (0,-2) (qi1) {$\loc^i_1$};
			\node[location] at (4,0) (qi2) {$\loc^i_2$};
			\node[location] at (7,0) (qi3) {$\loc^i_3$};
			\node[location] at (0,2) (qnext) {$\locinit^{i+1}$};
			\node[location,private] at (2,-3.5) (qpriv) {$\locpriv$};
			\node[location, final] at (4,-2) (qfin) {$\locfinal$};

			\path[->] (q0) 
			edge node[left] {\shortstack{${\edgeicolor{6}}, \styleact{u}$\\ $\styleclock{x}=3$ \\ $\styleclock{x} \assign 0$}} (qnext)
			edge node[above,sloped] {\shortstack{${\edgeicolor{1}}, \styleact{\gadgetIncCone{}}$\\$0<\styleclock{x}<1$}} node[below,sloped] {$\styleclock{x} \assign 0 $} (qi2)
			edge node[left] {\shortstack{${\edgeicolor{2}}, \styleact{\gadgetIncCone{}}$\\$0<\styleclock{x}<1$}}  (qi1)
			edge[bend right=30] node[above, sloped] {${\edgeicolor{7}},\styleact{u}$} node[below, sloped] {\shortstack{$\styleclock{x}=1$}} (qfin)

			(qi1) edge[bend right=20] node[above, sloped] {${\edgeicolor{3}}, \styleact{u}$} node[below,sloped] {\shortstack{$\styleclock{x} = 1$\\$ \styleclock{x} \assign 0$}} (qpriv)
			(qpriv) edge[bend right=20] node[above, sloped] {${\edgeicolor{4}}, \styleact{u}$} node[below, sloped] {\shortstack{$\styleclock{x} = 0$}} (qfin)
			(qi2) edge node[left] {\shortstack{${\edgeicolor{5}}, \styleact{u}$\\$\styleclock{x} = 3$}}  (qfin)
			edge node[above] {\shortstack{${\edgeicolor{8}}, \styleact{\gadgetIncCone{}}$ \\$0<\styleclock{x}<1$}} (qi3)
			(qi3) edge[bend left=30] node[above, sloped] {${\edgeicolor{9}},\styleact{u}$} (qfin)
			;
		\end{tikzpicture}

	}
	\caption{Gadget $\gadgetIncCone{}$}
	\label{fig:aut:incx}
	\end{subfigure}
	\hfill
	\begin{subfigure}{0.45\textwidth}
	{\centering
		\begin{tikzpicture}[ pta,  scale=.8, every node/.style={scale=.9}]
			\node[location, initial, initial text =] (q0) {$\locinit^{i}$};
			\node[location] at (0,-2) (qi1) {$\loc^i_1$};
			\node[location] at (4,0) (qi2) {$\loc^i_2$};
			\node[location] at (7,0) (qi3) {$\loc^i_3$};
			\node[location] at (0,2) (qnext) {$\locinit^{i+1}$};
			\node[location,private] at (2,-3.5) (qpriv) {$\locpriv$};
			\node[location, final] at (4,-2) (qfin) {$\locfinal$};

			\path[->] (q0)
			edge node[left] {\shortstack{${\edgeicolor{6}}, \styleact{u}$\\ $\styleclock{x}=3$ \\ $\styleclock{x} \assign 0$}} (qnext)
			edge node[above,sloped] {\shortstack{${\edgeicolor{1}}, \styleact{\gadgetDecCone{}}$\\$0<\styleclock{x}<1$}} node[below,sloped] {$\styleclock{x} \assign 0 $} (qi2)
			edge node[left] {\shortstack{${\edgeicolor{2}}, \styleact{\gadgetDecCone{}}$\\$0<\styleclock{x}<1$}}  (qi1)
			edge[bend right=30] node[above, sloped] {${\edgeicolor{7}},\styleact{u}$} node[below, sloped] {\shortstack{$\styleclock{x}=1$}} (qfin)

			(qi1) edge[bend right=20] node[above, sloped] {${\edgeicolor{3}}, \styleact{u}$} node[below,sloped] {\shortstack{$\styleclock{x} = 1$\\$ \styleclock{x} \assign 0$}} (qpriv)
			(qpriv) edge[bend right=20] node[above, sloped] {${\edgeicolor{4}}, \styleact{u}$} node[below, sloped] {\shortstack{$\styleclock{x} = 0$}} (qfin)
			(qi2) edge[out=-30,in=-30] node[below right] {\shortstack{${\edgeicolor{5}}, \styleact{u}$\\$\styleclock{x} = 0$}}  (qpriv)
			edge node[above] {\shortstack{${\edgeicolor{8}}, \styleact{\gadgetDecCone{}}$ \\$0<\styleclock{x}<1$}} (qi3)
			(qi3) edge[bend left=30] node[above, sloped] {${\edgeicolor{9}},\styleact{u}$} (qfin)
			;
		\end{tikzpicture}

	}
	\caption{Gadget $\gadgetDecCone{}$}
	\label{fig:aut:decx}
	\end{subfigure}

	\caption{Increment and decrement gadgets}
\end{figure*}

In this gadget, because of edge~\textstyleedge{7}, a public run will reach the final location at time~1.
The only way to make this time opaque is via edges \textstyleedge{2}, \textstyleedge{3} and~\textstyleedge{4} (which creates a private run of the same duration).
This requires allowing action $\actIncCone$ somewhen in~$(0,1)$.
Because of edge~\textstyleedge{1} however, this additionally creates a public run that will reach the final location 3 time units later and will have to be made opaque thanks to~$\gadgetCone{}$.
At time~3, a run can then go to the initial state of command~$\twoCMcommand_{i+1}$ via edge~\textstyleedge{6}, and thus start the next TA fragment.

As a consequence, assuming that $\strategy$ allows $\textstyleact{a_{\counterOne}}$ $k$~times during the interval $(0,1)$ of this process,
(say, at times $\tau_1,\dots, \tau_k$), then (with the exception of the case where $\twoCMcommand_{i+1}$ is a decrement command), $\strategy$ will allow $\textstyleact{a_{\counterOne}}$ $k+1$ times during the interval $(0,1)$ of the next process.
Indeed, let $\tau_{k+1}$ be the time where $\strategy$ allowed~$\actIncCone$.
First note that for all $i\leq k$, $\tau_i\neq \tau_{k+1}$ because of gadget~\gadgetOneAct{} forbidding several actions to be allowed at the same time.
Hence, there are $k+1$ different times at which a public run will reach the final location during the next interval~$(0,1)$, and (ignoring the decrement gadget) only allowing action~$\textstyleact{a_{\counterOne}}$ at those times protects opacity.

Finally, in order to avoid violating opacity, $\strategy$ must not allow $\textstyleact{inc_x}$
more than once. Indeed, if $\textstyleact{inc_x}$ is allowed at two different times $\tau_1$ and~$\tau_2$ during $(0,1)$, then a run could take \textstyleedge{1} at time~$\tau_1$ then \textstyleedge{8} at time~$\tau_2$.
And once $\loc^i_3$ is reached, the final location
can be reached at any point by a public run, violating opacity.

\paragraph{Decrement gadget}
We now move to the decrement command, which is encoded similarly to the increment command.
The only difference is that from~$\loc^i_2$, instead of going to~$\locfinal$ when
$\textstyleclock{x}=3$, there is an uncontrollable transition going immediately ($\textstyleclock{x}=0$) to~$\locpriv$.
Formally, if command~$\twoCMcommand_i$ is decrementing~$\counterOne$, we build the TA
(see \cref{fig:aut:decx})
$\TA_i
= (\ActionSet, \LocSet_i, \locinit^{i}, \locpriv, \locfinal, \{\textstyleclock{x}\}, \invariant_i, \EdgeSet_i) $ where:
\begin{itemize}
\item $\LocSet_{i} =\{\locinit^{i},\loc_1^i,\loc_2^i,\loc_3^i, \locinit^{i+1},\locpriv, \locfinal\}$,
\item $\invariant_{i}(\loc)=true$ for all $\loc \in \LocSet_{i}$,
\item $\EdgeSet_{i} =
\big\{(\locinit^{i}, \textstyleclock{x}=3, \textstyleact{u}, \{\textstyleclock{x}\}, \locinit^{i+1}),$\\$
(\locinit^{i}, 0<\textstyleclock{x}<1, \actDecCone, \emptyset, \loc^i_1),$\\$
(\locinit^{i}, 0<\textstyleclock{x}<1, \actDecCone, \{\textstyleclock{x}\},\loc^i_2),$\\$
(\locinit^{i}, \textstyleclock{x}=1, \textstyleact{u}, \emptyset, \locfinal),
(\loc_1^{i}, \textstyleclock{x}=1, \textstyleact{u},  \{\textstyleclock{x}\}, \locpriv),$\\$
(\locpriv, \textstyleclock{x}=0, \textstyleact{u}, \emptyset, \locfinal),
(\loc_2^{i}, \textstyleclock{x}=0, \textstyleact{u},  \emptyset, \locpriv),$\\$
(\loc_2^{i}, 0<\textstyleclock{x}<1, \actDecCone,  \emptyset, \loc_3^i),$\\$
(\loc_3^{i}, \true, \textstyleact{u},  \emptyset, \locfinal)
\big\}$.
\end{itemize}

As for the increment gadget, in order to preserve opacity, $\strategy$ must allow
$\actDecCone$ exactly once during the interval.
Moreover, as it immediately produces a private run, it should be played at a time where a public run is supposed to reach the destination.
Therefore, if $\tau_1,\dots,\tau_k$ are the times at which public runs are supposed to reach the final location within this $(0,1)$ interval, then $\actDecCone$ must be allowed at one of them (say~$\tau_1$), and $\textstyleact{a_{\counterOne}}$ must be allowed at the $k-1$ others.
Hence producing $k-1$ new public runs that will reach the final locations at the times $\tau_2,\dots,\tau_k$ of the next process.
Hence, effectively decrementing by one the number of times $\textstyleact{a_{\counterOne}}$ must be allowed on the next step.

\paragraph{Zero-test gadget}
We now move to the case where the command~$\twoCMcommand_i$ is a zero-test.
In this construction (see \cref{fig:aut:zerox}), the strategy must initially indicate whether it claims the counter is zero or not.
Then, following this claim,
we just need to check whether the action $\textstyleact{a_{\counterOne}}$ is allowed at some point within
the interval $(0,1)$ or not.

More precisely, assuming the test is of the form ``if $\counterOne=0$ go to $\twoCMcommand_k$ otherwise go to $\twoCMcommand_j$'', we build the TA  $\TA_i
= (\ActionSet, \LocSet_i, \locinit^{i}, \locpriv, \locfinal, \{\textstyleclock{x}\}, \invariant_i, \EdgeSet_i) $ where:
\begin{itemize}
\item $\LocSet_{i} =
\{\locinit^{i},\locinit^{j}, \locinit^{k},\loc^i_=, \loc^i_{=,2}, 
\loc^i_{\neq}, \loc^i_{\neq,2}, \locpriv, \locfinal\}$,
\item $\invariant_{i}(\loc)=true$ for all $\loc \in \LocSet_{i}$,
\item $\EdgeSet_{i} = 
\big\{(\locinit^{i}, \textstyleclock{x}=0, \textstyleact{=_0}, \emptyset, \loc^i_=),$\\$
(\locinit^{i}, \textstyleclock{x}=0, \textstyleact{\neq_0}, \emptyset, \loc^i_{\neq}),
(\locinit^{i}, \textstyleclock{x}=1, \textstyleact{u}, \emptyset, \locfinal),$\\$
(\loc_=^{i}, \textstyleclock{x}=3, \textstyleact{u},  \{\textstyleclock{x}\}, \locinit^k),$\\$
(\loc_=^{i}, \textstyleclock{x}=1, \textstyleact{u},  \{\textstyleclock{x}\}, \locpriv),$\\$
(\loc_=^{i}, 0<\textstyleclock{x}<1, \textstyleact{a_{\counterOne}},  \emptyset, \loc^i_{=,2}),$\\$
(\locpriv, \textstyleclock{x}=0, \textstyleact{u}, \emptyset, \locfinal),
(\loc^i_{=,2}, true, \textstyleact{u}, \emptyset, \locfinal),$\\$
(\loc_{\neq}^{i}, \textstyleclock{x}=3, \textstyleact{u},  \{\textstyleclock{x}\}, \locinit^j),$\\$
(\loc_{\neq}^{i}, 0<\textstyleclock{x}<1, \textstyleact{a_{\counterOne}},  \emptyset, \loc^i_{\neq,2}),$\\$
(\loc_{\neq,2}^{i}, \textstyleclock{x}=1, \textstyleact{u},  \{\textstyleclock{x}\}, \locfinal)
\big\}$.
\end{itemize}

\begin{figure}
	{\centering
		\begin{tikzpicture}[ pta,  scale=.9, every node/.style={scale=1}]
			\node[location, initial, initial text =] (q0) {$\locinit^{i}$};
			\node[location] at (0,-2) (q=) {$\loc^i_=$};
			\node[location] at (-2,-2) (qk) {$\locinit^k$};
			\node[location] at (0,-4) (q=2) {$\loc^i_{=,2}$};
			\node[location,private] at (4,-2) (qpriv=) {$\locpriv$};
			
			\node[location] at (0,2) (qneq) {$\loc^i_{\neq}$};
			\node[location] at (-2,2) (qj) {$\locinit^j$};
			\node[location] at (3,2) (qneq2) {$\loc^i_{\neq,2}$};
			\node[location,private] at (6,2) (qprivneq) {$\locpriv$};
			
			\node[location, final] at (6,0) (qfin) {$\locfinal$};

			\path[->] (q0) 
			edge node[right] {\shortstack{${\edgeicolor{1}},\styleact{=_0}$\\$\styleclock{x}=0$}} (q=)
			edge node[right] {\shortstack{${\edgeicolor{2}}, \styleact{\neq_0}$\\$\styleclock{x}=0$}} (qneq)
			edge node[below] {$\styleclock{x}=1$} node[above] {${\edgeicolor{3}},\styleact{u}$} (qfin)

			(q=) edge node[below] { $\styleclock{x}\assign 0$} node[above] {\shortstack{${\edgeicolor{4}},\styleact{u}$\\$\styleclock{x} = 3$}} (qk)
			edge node[below,sloped] {$\styleclock{x}\assign 0$} node[above,sloped] {\shortstack{${\edgeicolor{5}},\styleact{u}$\\$\styleclock{x} = 1$}} (qpriv=)
			edge node[right] {\shortstack{${\edgeicolor{6}},\styleact{a_{\counterOne}}$\\$0<\styleclock{x} < 1$}} (q=2)
			(qpriv=)edge node[below, sloped] {$\styleclock{x} = 0$}  node[above, sloped] {${\edgeicolor{7}}, \styleact{u}$} (qfin)
			(q=2)edge[bend right=45] node[above, sloped] {$ {\edgeicolor{8}}, \styleact{u}$} (qfin)
			
			(qneq) edge node[below] { $\styleclock{x}\assign 0$} node[above] {\shortstack{$\styleact{u}, {\edgeicolor{9}}$\\$\styleclock{x} = 3$}} (qj)
			edge node[below] {$0<\styleclock{x}<1$} node[above] {${\edgeicolor{10}},\styleact{a_{\counterOne}}$} (qneq2)
			(qneq2) edge node[below] {$\styleclock{x}\assign 0$} node[above] {\shortstack{$ {\edgeicolor{11}},\styleact{u}$\\$\styleclock{x}=1$}} (qprivneq)
			(qprivneq) edge node[left] {\shortstack{${\edgeicolor{12}},\styleact{u}$\\$\styleclock{x}=0$}} (qfin)
			;
		\end{tikzpicture}

	}
	\caption{Gadget $\gadgetIfCone$ (the location $\locpriv$ is duplicated to avoid crossing transitions).}
	\label{fig:aut:zerox}
\end{figure}

Let us explain this gadget. When entering it, due to \textstyleedge{3}, a public run will reach the final location at time~1.
To avoid this, the strategy has two options.
When $\textstyleclock{x}=0$ it can either claim counter~$\counterOne$ is equal to~0, and allow
the action $\textstyleact{=_0}$, hence letting a run take \textstyleedge{1}, or claim
it is not equal to~0, allowing $\textstyleact{\neq_0}$, hence letting a run take \textstyleedge{2}.

Let us first consider the case where $\strategy$ allowed~$\textstyleact{\neq_0}$.
Then, in order to produce a private run at time~1, the strategy must allow $\textstyleact{a_{\counterOne}}$ in the interval~$(0,1)$, letting a run take \textstyleedge{10} and then \textstyleedge{12}, which means the counter is not equal to~0, and thus that the claim was correct.
Then, at time~3, a run will reach $\locinit^j$ and continue the process with command~$\twoCMcommand_j$.

If the strategy allowed $\textstyleact{=_0}$, then a private run will be produced via edges~\textstyleedge{5} and~\textstyleedge{7}, making time~1 opaque.
Moreover, if $\textstyleact{a_{\counterOne}}$
is allowed at some point within the interval $(0,1)$ (meaning the counter is not~0 and thus that the claim was false), then \textstyleedge{6} can be taken, ensuring that every
duration from that point can be accessed by a public run, and thus violating opacity.
Then, at time~3, a run will reach $\locinit^k$ and continue the process with command~$\twoCMcommand_k$.

Hence, the only way for the strategy to avoid violating opacity during this process is to 
correctly select whether the counter is empty or not (and thus to allow the right action) when $\textstyleclock{x}=0$.
This ensures the system continues with the correct command ($\twoCMcommand_j$ or~$\twoCMcommand_k$).

\paragraph{Termination gadget}
Termination is achieved when $\twoCMcommand_m$ is reached. In our case, we wish that termination \emph{violates} opacity.
Hence the command~$\twoCMcommand_m$ is represented by a simple TA~$\TA_m$ where,
from the initial location~$\locinit^m$, one can go to the final location at any time
without control, \ie{} via an edge labelled with~$\textstyleact{u}$.
Hence, every duration from this point becomes associated to a public run, violating opacity.

\paragraph{Conclusion of the proof}
We build the TA $\TA$ by combining the gadgets for every command~$\twoCMcommand_i$, as well as the gadgets $\gadgetCone{}, \gadgetCtwo{}$ and \gadgetOneAct{} (noting that the final and private location of each gadget can be merged), and with an additional initial location~$\locinit$ from which one can reach in 0-time the locations $\locinit^{\gadgetCone{}},\locinit^{\gadgetCtwo{}}, \locinit^{\gadgetOneAct}$ and~$\locinit^{0}$ in a non-deterministic manner via uncontrollable transitions.

We have that there exists a strategy $\strategy$ enforcing \fullOpacityText{} of~$\TA$ iff $\twoCM$ does \emph{not} terminate.

Indeed, assume that $\twoCM$ does not terminate.
As explained throughout the gadgets, by building a strategy emulating the Minsky machine (making the adequate claim on a zero-test, and for instance allowing~$\textstyleact{a_{\counterOne}}$ at times $\frac{1}{2^n}$ for all $n$ smaller or equal to the value of counter~$\counterOne$, and similarly for counter~$\counterTwo$), then opacity is ensured.
Conversely, if $\twoCM$ terminates, either the strategy does not emulate correctly the Minsky machine and thus violates opacity as previously discussed, or it reaches the gadget associated to~$\twoCMcommand_m$---which again leads to a violation of opacity.

This proves that the \fullOpacityText{} problem is undecidable.
\end{proof}
\begin{remark}
Note that the construction relies on a \emph{single clock}, hence the problem is undecidable even when restricting to one-clock~TAs.
\end{remark}
\begin{remark}
This proof applies whether the strategy is assumed \finitelyvarying{} or not.
Thus the \finitelyvarying{} assumption would not help in regaining decidability.
\end{remark}
\section{The belief automaton}\label{section:beliefs}

In this section, we build an automaton called the \emph{\beliefAutomaton{}}, that will allow us to determine in which regions the system can be after a given execution time.
This automaton considers a duplicated TA instead of the original TA in order to distinguish the final state reached by a private or a public run.\footnote{%
	This could equally have been encoded using a Boolean variable remembering whether $\locpriv$ was visited, as in~\cite{ALMS22}.
}
\subsection{Separating private and public runs}

We define a duplicated version of a TA~$\TA$, denoted by~$\TAdup$, making it possible to decide whether a given run avoided~$\locpriv$, by just looking at the final reached location.
The duplicated version~$\TAdup$ is such that any run of~$\TA$ has an equivalent one in~$\TAdup$ where each location is replaced by its duplicated version if a previously visited location is~$\locpriv$.
In particular, $\PubDurVisit{\TA} = \PubDurVisit{\TAdup}$ and $\PrivDurVisit{\TA} = \PrivDurVisit{\TAdup}$.

\begin{definition}[Duplicated TA]\label{def:duplicated}
	Let $\TA = \TAprivextend$ be a TA.
	The associated \emph{duplicated TA} is $\TAdup = (\ActionSet, \LocSet', \locinit, \locpriv, \LocFinalSet', \ClockSet, \invariant', \EdgeSet')$ where:
	\begin{oneenumerate}
		\item $\LocSet' = \LocSet_{pub} \uplus \LocSet_{priv} $ with $\LocSet_{pub} = \LocSet \setminus \locpriv$ and $ \LocSet_{priv} = \{ \locdup \mid \loc \in \LocSet\} \cup \set{\locpriv}$,

		\item $\LocFinalSet' = \{ \locfinalpriv \mid  \locfinal \in \LocFinalSet \} \cup \LocFinalSet$,

		\item $\invariant'$ is the invariant such that
		$\forall \loc \in \LocSet, \invariantofdup{\loc} = \invariantofdup{\locdup} = \invariantof{\loc}$,
		and

		\item $\EdgeSet' = \big\{(\loci{1}, \guard, a, \resets, \loci{2}) \mid (\loci{1}, \guard, a, \resets, \loci{2}) \in \EdgeSet \ and \ \loci{1} \neq \locpriv \big\} \cup \big\{ (\locdupi{1}, \guard, a, \resets, {\locdupi{2}}) \mid  (\loci{1}, \guard, a, \resets, \loci{2}) \in \EdgeSet \big\}
		  \cup \big\{ (\locpriv, \guard, a, \resets, \locdup) \mid  (\locpriv, \guard, a, \resets, \loc) \in \EdgeSet \big\} $.
	\end{oneenumerate}
\end{definition}

That is, edges in~$\TAdup$ are made of the original edges of~$\TA$ except these originating from the private location~$\locpriv$,
plus a copy of the edges between the duplicated locations~$\locdupi{i}$,
plus edges from the private location~$\locpriv$ to the duplicated version of the target locations.
In other words, once $\locpriv$ is reached, the TA moves to the copy of the original locations, thus remembering whether~$\locpriv$ was visited.

\begin{example}
	\cref{fig:aut:non-strategy-opaque:duplicated} depicts $\TAdupi{1}$, the duplicated version of~$\TAi{1}$ in \cref{fig:aut:non-strategy-opaque}.
	The thick line from $\locpriv$ to~$\locfinalpriv$ depicts the transition from the ``normal'' part of the TA into the ``duplicated'' part, after visiting~$\locpriv$.
	Observe in \cref{fig:aut:non-strategy-opaque:duplicated} that each run avoiding~$\locpriv$ ends in~$\locfinal$, and that the only outgoing transition of~$\locpriv$ is modified to go to the duplicated~$\locfinalpriv$.
\end{example}
\subsection{Beliefs}\label{ss:beliefs} %

A \emph{belief}\footnote{%
	We follow the vocabulary from, \eg{} \cite{BFHHH14}.
	This is also close to the concept of \emph{estimator} (\eg{} \cite{KKG24}).
}, denoted by~$\belief{}$, represents the set of regions in which the attacker \emph{believes} to be according to their knowledge, \ie{} the current absolute time and the strategy (that is, the enabled actions by the controller over time).
For a TA $\TA$ and a  \metaStrategy{} $\metastrat$, we denote by $\beliefcontrol{t}{\metastrat}$ the set of regions in which the system can be after a time~$t$ while following a strategy $\strategy$ such that $\strategy \satisfies \metastrat$ in~$\TA$, \ie{}
$r \in \beliefcontrol{t}{\metastrat}$ iff there exists a strategy $\strategy$ such that $\strategy \satisfies \metastrat$ and a run $\run$ in~$\TAdup$ such that $\run$ is \compatible{}, $\laststate(\run) \in r, r \in \regset{\TAdup}$ and $\runduration{\run} = t$.

We regroup those beliefs depending on their intervals by defining 
the set~$\beliefcontrolset{\TA}{\metastrat}$ of \emph{\intervalBeliefs{}} reachable by a \metaStrategy{}~$\metastrat$.
Formally, for a given \metaStrategy{}~$\metastrat$,
$ \beliefcontrolset{\TA}{\metastrat} =
	\{\beliefcontrol{k}{\metastrat} \mid k \in \setN\} \cup \{\beliefcontrol{k+}{\metastrat} \mid k \in \setN\}
$ where $\beliefcontrol{k}{\metastrat}$ matches the notation introduced above, and $\beliefcontrol{k+}{\metastrat}=\bigcup_{t \in (k,k+1)} \beliefcontrol{t}{\metastrat}$.

Among those beliefs, we will be particularly interested in the ones showing \emph{leaks} of  information about the system. Intuitively, a \badBelief{} belief allows to discriminate private and public runs.
For a given TA $\TA$, we denote
$\secret{\TA} = \big\{ \class{(\loc, \clockval)} \mid \loc \in \LocSet_{priv}, \clockval \in \setRgeqzero^{\ClockCard} \big\}$ the set of regions reachable after visiting $\locpriv$ on a run in~$\TAdup$, and
$\notsecret{\TA} = \big\{ \class{(\loc, \clockval)} \mid \loc \in \LocSet_{pub}, \clockval \in \setRgeqzero^{\ClockCard} \big\}$ the set of regions reachable on a run not visiting $\locpriv$ in~$\TAdup$.\label{def:secret-duplication}

\begin{definition}\label{definition:leaking}
	Given a TA~$\TA$, a belief~$\belief{}$ is said to be \emph{\badBelief{} for \fullOpacityText{}} when exactly one of the following two conditions is satisfied:
	\begin{oneenumerate}%
		\item $(\belief{} \cap \finalclass{\TA} \cap \secret{\TA} \neq \emptyset)$, or
		\item $(\belief{} \cap \finalclass{\TA} \cap \notsecret{\TA} \neq \emptyset)$.
	\end{oneenumerate}%
\end{definition}
This means that finishing in this belief leaks an information to the attacker: only one final state is possible (private or public, but not both).

As we will now show, \badBelief{} \intervalBeliefs{} contains the relevant information with
respect to \fullOpacityText{}.

\begin{lemma}\label{theorem:opacity-leaking-belief}
	Let $\TA$ be a \TAtext{} and $\metastrat$ a \metaStrategy{}.
	$\TA$ is \fullOpaqueText{} with~$\metastrat$ iff there is no \intervalBelief{} in $\beliefcontrolset{\TA}{\metastrat}$ that is \badBelief{} for \fullOpacityText{}.
\end{lemma}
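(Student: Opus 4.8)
The plan is to reformulate both sides of the equivalence, time point by time point, as statements about the duration sets $\PrivDurVisitStrat{\TA}{\metastrat}$ and $\PubDurVisitStrat{\TA}{\metastrat}$, and then bridge the gap between ``every time point'' and ``every interval belief'' by a density argument on open unit intervals.

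First I would make the bridge between $\TA$ and its duplicated version $\TAdup$. By \cref{def:duplicated}, a $\strategy$-compatible run of $\TAdup$ ends in a region of $\finalclass{\TA} \cap \secret{\TA}$ exactly when it is private, and in a region of $\finalclass{\TA} \cap \notsecret{\TA}$ exactly when it is public; together with $\PrivDurVisit{\TAdup} = \PrivDurVisit{\TA}$ and $\PubDurVisit{\TAdup} = \PubDurVisit{\TA}$, this gives, for every $t \in \setRgeqzero$, that $\beliefcontrol{t}{\metastrat} \cap \finalclass{\TA} \cap \secret{\TA} \neq \emptyset$ iff $t \in \PrivDurVisitStrat{\TA}{\metastrat}$, and symmetrically $\beliefcontrol{t}{\metastrat} \cap \finalclass{\TA} \cap \notsecret{\TA} \neq \emptyset$ iff $t \in \PubDurVisitStrat{\TA}{\metastrat}$. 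Hence, by \cref{definition:leaking}, $\beliefcontrol{t}{\metastrat}$ is \badBelief{} for \fullOpacityText{} iff exactly one of $t \in \PrivDurVisitStrat{\TA}{\metastrat}$ and $t \in \PubDurVisitStrat{\TA}{\metastrat}$ holds; in particular, taking $t = k$, the interval belief $\beliefcontrol{k}{\metastrat}$ is \badBelief{} iff exactly one of $k \in \PrivDurVisitStrat{\TA}{\metastrat}$, $k \in \PubDurVisitStrat{\TA}{\metastrat}$ holds. Moreover, since $\beliefcontrol{k+}{\metastrat} = \bigcup_{t \in (k,k+1)} \beliefcontrol{t}{\metastrat}$, the interval belief $\beliefcontrol{k+}{\metastrat}$ is \badBelief{} iff exactly one of $\PrivDurVisitStrat{\TA}{\metastrat} \cap (k,k+1) \neq \emptyset$, $\PubDurVisitStrat{\TA}{\metastrat} \cap (k,k+1) \neq \emptyset$ holds. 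With these reformulations the forward direction is immediate: if $\TA$ is \fullOpaqueText{} with $\metastrat$, i.e.\ $\PrivDurVisitStrat{\TA}{\metastrat} = \PubDurVisitStrat{\TA}{\metastrat}$, then the two membership conditions coincide at every integer $k$ (so no $\beliefcontrol{k}{\metastrat}$ leaks) and the two sets $\PrivDurVisitStrat{\TA}{\metastrat} \cap (k,k+1)$ and $\PubDurVisitStrat{\TA}{\metastrat} \cap (k,k+1)$ are simultaneously empty or not for every $k$ (so no $\beliefcontrol{k+}{\metastrat}$ leaks).

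For the converse I would isolate and prove the following density property as a standalone lemma: for every $k \in \setN$, the set $\PrivDurVisitStrat{\TA}{\metastrat} \cap (k,k+1)$ is either empty or equal to $(k,k+1)$, and likewise for $\PubDurVisitStrat{\TA}{\metastrat}$. The idea: a witnessing $\strategy$-compatible run, with $\strategy \satisfies \metastrat$, of duration $t^{*} \in (k,k+1)$ reaching a region of $\finalclass{\TA} \cap \secret{\TA}$ projects to a path in the labelled region automaton $\regaut{\TAdup}$ that performs exactly $k$ transitions crossing an integer for the tick clock $\clockextra$ and ends with the fractional part of $\clockextra$ nonzero; because region equivalence constrains only integral parts and the ordering of fractional parts, any such abstract path can be re-instantiated as a concrete run of \emph{any} duration $t' \in (k,k+1)$ — distributing the $t' - k$ leftover time among the delays occurring within the last tick and choosing the ordered partition of a strategy $\strategy' \satisfies \metastrat$ accordingly. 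It is precisely here that working with \metaStrategies{} — which fix only the \emph{order} of the controller's choices inside a unit interval, not their exact timing — is what lets us re-align the controller with the new timings. Making this re-timing argument fully rigorous while respecting guards, invariants, and the ordering constraints of $\metastrat$ is the main obstacle; I expect it to be cleanest phrased as a concretisation lemma for the controlled region automaton.

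Granting the density property, I would finish the converse as follows. Assume no interval belief in $\beliefcontrolset{\TA}{\metastrat}$ is \badBelief{}. For each integer $k$, ``$\beliefcontrol{k}{\metastrat}$ not leaking'' yields $k \in \PrivDurVisitStrat{\TA}{\metastrat}$ iff $k \in \PubDurVisitStrat{\TA}{\metastrat}$. For each $k$, ``$\beliefcontrol{k+}{\metastrat}$ not leaking'' yields $\PrivDurVisitStrat{\TA}{\metastrat} \cap (k,k+1) \neq \emptyset$ iff $\PubDurVisitStrat{\TA}{\metastrat} \cap (k,k+1) \neq \emptyset$; by the density property both restricted sets are then simultaneously empty or both equal to $(k,k+1)$, hence $\PrivDurVisitStrat{\TA}{\metastrat} \cap (k,k+1) = \PubDurVisitStrat{\TA}{\metastrat} \cap (k,k+1)$. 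Taking the union of these equalities over all integer points and all open unit intervals covers $\setRgeqzero$, so $\PrivDurVisitStrat{\TA}{\metastrat} = \PubDurVisitStrat{\TA}{\metastrat}$, i.e.\ $\TA$ is \fullOpaqueText{} with $\metastrat$ by \cref{def:opacity-metastrategy}.
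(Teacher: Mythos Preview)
Your proposal is correct and follows essentially the same approach as the paper: the forward direction is immediate, and the backward direction hinges on the same re-timing argument you isolate as a ``density property'' --- the paper carries this out inline via an explicit piecewise-linear homeomorphism $\shrink$ of the time axis that fixes integers, transporting both the strategy (so that $\strategy'\satisfies\metastrat$ is preserved) and the run (so that guards and invariants are respected). Your factoring into a standalone all-or-nothing lemma on unit intervals is a clean repackaging, but the technical core and the point where \metaStrategies{} are essential coincide with the paper's proof.
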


\begin{proof}
	\begin{itemize}
		\item[$\implies$]
		Let $\TA$ be a \TAtext{} that is \fullOpaqueText{} with \metaStrategy{} $\metastrat$.
		Let $\belief{} \in \beliefcontrolset{\TA}{\metastrat}$ be an \intervalBelief{}. Suppose w.l.o.g.\ that $\belief{} \cap  \finalclass{\TA} \cap \secret{\TA}\neq \emptyset$.
		Let $\region\in \belief{} \cap  \finalclass{\TA} \cap \secret{\TA}$.
		Then by definition, there is a strategy $\strategy \satisfies \metastrat$ and  a \compatible{} run $\run\in \PrivVisitStrat{\TA}{\strategy}$ such that $\laststate(\run) \in \region$.
		$\TA$ being \fullOpaqueText{} with \metaStrategy{} $\metastrat$,  $\PrivDurVisitStrat{\TA}{\metastrat} = \PubDurVisitStrat{\TA}{\metastrat}$. Thus, there exists a strategy $\strategy' \satisfies \metastrat$ and a \compatiblearg{\strategy'}{} run $\run'\in \PubVisitStrat{\TA}{\metastrat}$ such that $\runduration{\run'} = \runduration{\run}$.
		Denoting $\region' = \class{\laststate(\run')}$, since the two runs have the same duration, they end in the same \intervalBelief{} and we have $\region'\in \belief{} \cap  \finalclass{\TA} \cap \notsecret{\TA}$.
		(Note that $\region' \in \finalclass{\TA}$ as a run $\run$ belongs to $\PubVisit{\TA} \cup \PrivVisit{\TA}$ only if $\class{\laststate(\run)} \in \finalclass{\TA}$.)
		Hence, $\belief{} \cap  \finalclass{\TA} \cap \notsecret{\TA}\neq \emptyset$.
		Therefore $\belief{}$ is not \badBelief{} for \fullOpacityText{}.

		\item[$\impliedby$]
		Conversely, we now assume that there is no \intervalBelief{} in $\beliefcontrolset{\TA}{\metastrat}$ that is \badBelief{} for \fullOpacityText{}. W.l.o.g, we consider a time $\tau \in \PrivDurVisitStrat{\TA}{\metastrat}$. By definition, there exists a strategy $\strategy \satisfies \metastrat$ and a \compatible{} run $\run\in \PrivVisitStrat{\TA}{\strategy}$ such that $\runduration{\run}=\tau$. We want to prove that $\tau \in \PubDurVisitStrat{\TA}{\metastrat}$, \ie{} that there exists a strategy $\strategy' \satisfies \metastrat$ and a \compatiblearg{\strategy'}{} run $\run'\in \PubVisitStrat{\TA}{\strategy'}$, such that $\runduration{\run'} = \runduration{\run}$.
		
		If $\runduration{\run} = k \in \setN$, then $\region = \class{\laststate(\run)}$ is included in the \intervalBelief{} $\beliefcontrol{k}{\metastrat}$, and more precisely $\region \in \beliefcontrol{k}{\metastrat} \cap  \finalclass{\TA} \cap \secret{\TA} $. Since \intervalBeliefs{} in $\beliefcontrolset{\TA}{\metastrat}$ are not \badBelief{}, there also exists a region $\region' \in \beliefcontrol{k}{\metastrat} \cap  \finalclass{\TA} \cap \notsecret{\TA}$. By definition of $\beliefcontrol{k}{\metastrat}$, there must be a strategy $\strategy' \satisfies \metastrat$ and a \compatiblearg{\strategy'}{} run $\run'$ of duration $k$ such that $\region' = \class{\laststate(\run')}$ and thus a public run of the same duration as $\run$.

		If $\runduration{\run} \in (k,k+1)$ with $k \in \setN$, we can use the construction above to prove the existence of a strategy $\strategy'' \satisfies \metastrat$ and a \compatiblearg{\strategy''}{} run $\run''\in \PubVisitStrat{\TA}{\strategy''}$ such that \mbox{$ \region'' =\class{\laststate(\run'')}\in\beliefcontrol{k+}{\metastrat}$}. Problem is, we have no guarantee that $\run$ and $\run''$ have the same duration, only that these durations are both in $(k,k+1)$.

		To complete this proof, we will, using $\strategy''$ and $\run''$, build a strategy $\strategy' \satisfies \metastrat$ and a \compatiblearg{\strategy'}{} run $\run'\in \PubVisitStrat{\TA}{\strategy'}$ that has the same duration as $\run$. This will be done creating a function $\shrink$ that will transform a time instant into another. Strategy $\strategy'$ will then mimick $\strategy''$ by setting  $\strategy'(t)=\strategy''(\shrink(t))$. Each transition of $\run'$  at time $t$ will mimick a similar transition of $\run''$ at time $\shrink(t)$. And with the additional property that $\shrink(\runduration{\run})=\runduration{\run''s}$, runs $\run$ and $\run'$ will have the same exact duration.

		In order to define $\shrink$, let us consider
		\[
			q = \frac{\fract{\runduration{\run''}}}{\fract{\runduration{\run}}}  \text{ and }
			q' =  \frac{1-\fract{\runduration{\run''}}}{1-\fract{\runduration{\run}}} \text{.}
		\]

		We now define:
		$$\shrink(t) = \left\{ \begin{array}{ll}
			t & \text{ if } q=1\\
			\intpart{t} + \fract{t} \times q & \text{ if } q<1 \\
			\uppart{t} - (\uppart{t}-t) \times q' & \text{ if } q>1 \\
		\end{array}\right.$$

		Note first that, since $q<1$ in case 2, we have, if $t$ is not an integer, $\intpart{t} < \shrink(t) < t$. Similarly, using the fact that $q>1 \iff q'<1$, in case 3, if $t$ is not an integer, then $t < \shrink(t) < \uppart{t}$. Function $\shrink$ thus moves time instants towards the nearest integer below (case 2) or above (case 3) while preserving the integral part.  

		As announced before, $\shrink(\runduration{\run})=\runduration{\run''}$. The first case is straightforward (if $q=1$) and for the second one we just need to remember that $\intpart{\runduration{\run}} =\intpart{\runduration{\run''}}$. For the third case, since we have assumed that $\fract{\runduration{\run}} \neq 0$, we have $\uppart{\runduration{\run}} = \intpart{\runduration{\run}}+1$ and $\uppart{\runduration{\run}}-\runduration{\run} = 1 - \fract{\runduration{\run}}$. Thus:
		\begin{align*}
		\shrink&(\runduration{\run})  \\
		= & \uppart{\runduration{\run}} - \big(\uppart{\runduration{\run}}-\runduration{\run}\big) \times q'\\
		= & \left(\intpart{\runduration{\run}}+1\right) - \big(1-\fract{\runduration{\run}}\big) \times
		\\ & \qquad \frac{\big(1-\fract{\runduration{\run''}}\big)}{\big(1-\fract{\runduration{\run}}\big)} \\
		= & \intpart{\runduration{\run}} + \fract{\runduration{\run''}} 
		\\= &\runduration{\run''}
		\end{align*}
		\noindent{} Then we can see that function $\shrink$ preserves integer values. When $t=k \in \setN$, both $\fract{t}$ and $\uppart{t}-t$ are 0. Furthermore, function $\shrink$ is strictly increasing. The two last properties allow us to build the strategy $\sigma'$ such that $\strategy'(t)=\strategy''(\shrink(t))$. Since $\strategy'' \satisfies \metastrat$ and since $\strategy''$ and $\strategy'$ allow the same actions at integer times, and make the same strategy changes in the same order in between, we have $\sigma' \satisfies \metastrat$ too.
		 Since our function $\shrink$ is continuous, strictly increasing over $\setRgeqzero$ and has $\setRgeqzero$ as its image, it is invertible and the inverse function is defined on $\setRgeqzero$ and satisfies:

		$$\shrink^{-1}(t) = \left\{ \begin{array}{ll}
			t & \text{ if } q=1\\
			\intpart{t} + \fract{t} / q & \text{ if } q<1 \\
			\uppart{t} - (\uppart{t}-t) / q' & \text{ if } q>1 \\
		\end{array}\right.$$

		Using this inverse function, we can now define $\run'$ as the run that does the same actions as $\run''$ in the same order, but where the time at which those transitions are made are transformed by $\shrink$.
		Formally if 
		\mbox{$\run'' = (\locinit, \ClocksZero),$} $(\timeEdge{0}), \ldots, (\timeEdge{n-1}), (\loci{n}, \clockval_n)$ then we define $\run' = (\locinit, \ClocksZero), (\paramd'_{0}, \edge_{0}), \ldots, (\paramd'_{n-1}, \edge_{n-1}),$ $ (\loci{n}, \clockval'_n)$ where for every $0 \leq i < n$, $\paramd'_{i} =  \shrink^{-1}\left(\sum_{j = 0}^{j \leq i}\paramd_j\right) - \shrink^{-1}\left(\sum_{j = 0}^{j < i}\paramd_j\right)$, and for every $0 < i \leq n$, $\clockval'_i = \reset{\clockval'_{i-1}+\paramd'_{i-1}}{\resets}$. We need to prove that $\run'$ is actually a run of \TAdup{}, \ie{} that all transitions and delays can occur. 

		Since $\run'$ visits the same locations and does the same discrete transitions as~$\run''$, we know that these transitions are possible from those locations. The only thing to prove is that invariants are satisfied for delay transitions, and guards are satisfied for discrete transitions. In a timed automaton, the guards and invariants only compare a clock value to an integer constant. We need to show that function $\shrink$ preserves these comparisons.

		Given a time instant $t \in \setRgeqzero$ and $k \in \setN$, since $t$ and $t+k$ have the same fractional part, we have that \mbox{$\shrink^{-1}(t+k) = \shrink^{-1}(t)+k$}. It is straightforward for case $q = 1$, quite simple for $q<1$, and for the last case 
		\begin{align*}
			&\uppart{t+k} - \big(\uppart{t+k}-(t+k)\big) / q' \\
			=& \uppart{t}+k - \big(\uppart{t}+k-t-k\big) / q' \\
			=& \uppart{t} - \big(\uppart{t}-t\big) / q' + k.
		\end{align*}

		Since the function $\shrink^{-1}$ is increasing, and based on the above equality, we have for any integer value $k$ and any two time instants $t$ and~$t'$, if $t-t' \bowtie k$ then $\shrink^{-1}(t)-\shrink^{-1}(t') \bowtie k$ (with \mbox{${\bowtie} \in \{<, \leq, =, \geq, >\}$}).
		Thus for every transition (or invariant) along~$\run''$ where a clock~$x$ is compared to a value~$k$ in a guard, if we take $t'$ as the last instant $x$ has been reset and $t$ the time instant at which the transition is fired (or the invariant checked), the guard (or invariant) will also be satisfied at time $\shrink^{-1}(t)$ along~$\rho'$.
		Since this is true for every guard of every transition (and any invariant) of~$\run''$, $\run'$ is indeed a run of \TAdup{}.
		Since it furthermore visits the same states as~$\run''$, $\run'$ does not visit the private state, and since we applied the same transformation to create $\run'$ from~$\run''$ that we did to create strategy $\strategy'$ from~$\strategy''$, $\run'$ is $\compatiblearg{\strategy'}{}$ and thus in $\PrivVisitStrat{\TA}{\strategy'}$. Adding the fact (proven earlier) that $\strategy' \satisfies \metastrat$, we get that $\runduration{\run'} \in \PrivDurVisitStrat{\TA}{\metastrat}$, and recalling that $\shrink$ ensures that $\run'$ and~$\run$ have the same duration, we get that $\tau =\runduration{\run} \in \PrivDurVisitStrat{\TA}{\strategy}$ which concludes the proof.
	\end{itemize}
\end{proof}

\subsection{Belief automaton}\label{ss:baut} %

If the set $\beliefcontrolset{\TA}{\metastrat}$ contains the relevant information with respect to \fullOpacityText{}, there is no immediate way to compute and manipulate it.
In this endeavour, writing $\activated \subseteq \ActionSetC$ for a set of \emph{enabled} actions we define as follows the \beliefAutomaton{}:

\begin{definition}[\BeliefAutomaton{}]\label{def:aut-of-beliefs}
	Given a TA~$\TA$ with $\ActionSet = \ActionSetC \uplus \ActionSetU$, we define the \emph{\beliefAutomaton{}} as the tuple $\BeliefAut{\TA} = (\BeliefAutState{\TA}, \BeliefAutActions{\TA}, \beliefInitState, \BeliefAutTransitions{\TA})$
	where:
	\begin{enumerate}
		\item $\BeliefAutState{\TA} = 2^{\regset{\TAdup}} \cup \{\beliefInitState\}$
		is the set of states,
		\item $\BeliefAutActions{\TA} = \{\translabelinstant, \translabelsame, \translabelchge\} \times 2 ^{\ActionSetC} $ is the alphabet,
		\item $\beliefInitState$ is the initial state,
		\item $\BeliefAutTransitions{\TA} \subseteq (\BeliefAutState{\TA} \times  \BeliefAutActions{\TA} \times \BeliefAutState{\TA})$ is such that
		\begin{enumerate}
			\item  $\big(\beliefInitState, (\translabelinstant,\activated), \belief{} \big) \in \BeliefAutTransitions{\TA} $ iff $\belief{}$ is the largest set such that $\forall \region \in \belief{}$, $\exists n \geq 0$%
		, $\class{\TTSstate_0} \RegAutTransitionWith{(\translabelinstant,a_1)} \cdots \RegAutTransitionWith{(\translabelinstant, a_n)} \region \mbox{ in } \regaut{\TAdup}$ with $\forall 1 \leq i \leq n$, $a_i \in (\activated \cup \ActionSetU)$,\label{item:condition4a}
			\item $\big(\belief{}, (\translabelany_1, \activated), \belief{}' \big) \in \BeliefAutTransitions{\TA}$  iff
			$\belief{} \neq \beliefInitState$, $\belief{}'$ is the largest set such that $ \forall \region' \in \belief{}'$, $\exists \region \in \belief{}$, $\exists n \geq 1$, $\region \RegAutTransitionWith{(\translabelany_1, \silentaction)} \cdots  \RegAutTransitionWith{(\translabelany_n, a_n)} \region' $ in $\regaut{\TAdup}$ with $\forall 1 < i \leq n, a_i \in (\activated \cup \ActionSetU \cup \{\silentaction\})$ and $\translabelany_1 \in \{\translabelsame, \translabelchge\}$ and $\forall 1 < i \leq n, \translabelany_i \in \{\translabelinstant, \translabelsame\}\text{.}$\label{item:aut-belief}
		\end{enumerate}
	\end{enumerate}
\end{definition}

We first consider transitions from the initial belief~$\beliefInitState$: time cannot elapse here; one can do a sequence of actions in 0-time (condition~\ref{item:condition4a}).
Then, from the other beliefs, a transition is made of a sequence of transitions from the region automaton. The first one lets time elapse (possibly changing region for $\clockextra$), and all the following actions are either discrete transitions, or delay transitions remaining in the same region for $\clockextra$ (condition~\ref{item:aut-belief}).
\begin{example}
	Because there is a single clock in our subsequent examples, as an abuse of notation, we represent each region within a belief using either an open interval, or a unique integer.
	We write $\big(\loc, (\tau,\tau')\big)$ for the region containing the state $\big(\loc, \clockval(\clock_1)\big)$ with $\clockval(\clock_1) \in (\tau,\tau'), \tau \in \setN, \tau' = +\infty$ if $\tau = \constantmax{1}$,
	$\tau' = \tau+1$ otherwise.
	Similarly, we write $(\loc, \tau)$ for the region containing the state $\big(\loc, \clockval(\clock_1)\big)$, $\clockval(\clock_1) = \tau \in \setN$.

	Let $\exTAopaque$ be the TA in \cref{fig:aut:strategy-opaque}.
	With the global invariant $x\leq 1$, we have the following beliefs.
	Here, the value of clock~$\clockextra$ is not given as, in this example, it is equivalent to the value of~$\clock$.

	The corresponding \beliefAutomaton{} is depicted in \cref{fig:belief-aut:strategy-opaque}.

	\noindent	{%
	{
		\begin{tabular}{r @{\,} l}
			$\belief{0}' $ & $= \big\{
				( \locinit, 0 ),
				( \locpriv, 0 ),
				( \locfinalpriv, 0) \big\}$\\
			$\belief{0}$ & $= \belief{0}' \cup \big\{( \locfinal, 0 ) \big\}$\\
			$\belief{(0,1)}' $ & $= \big\{
				( \locinit, (0,1) ),
				( \locpriv, (0,1) ) \big\}$\\
			$\belief{(0,1)} $ & $= \belief{(0,1)}' \cup \big\{( \locfinal, (0,1) ) \big\}$\\
			$\belief{1}' $ & $= \big\{
				( \locinit, 1 ),
				( \locinit, 0 ),
				( \locpriv, 1 ),
				( \locpriv, 0 ), $\\ 
				& $\quad ( \locfinalpriv, 0 ) \big\}$\\
			$\belief{1} $ & $= \belief{1}' \cup \big\{( \locfinal, 0 ), ( \locfinal, 1 ) \big\}$\\
		\end{tabular}
	}}

	Consider two beliefs reachable from the same belief, with two different sets of available actions, such that one is a subset of the other.
	We see that the belief reachable with the smaller set is a subset of the belief reachable with the larger set.
	In fact, restricting the system to only a subset of actions only \emph{restricts} the possible behaviours, and cannot add any.

\begin{figure}[tb]
	{\centering
		\begin{tikzpicture}[beliefautomaton]
			\node[rct, initial, initial text = , initial where = left] (bot) {$\beliefInitState$};
			\node[rct] at (0,2) (b0) {$\belief{0}$};
			\node[rct] at (0,-2) (b0s) {$\belief{0}'$};
			\node[rct] at (3, 2) (b01) {$\belief{(0,1)}$};
			\node[rct] at (3, -2) (b01s) {$\belief{(0,1)}'$};
			\node[rct] at (6.5, 0) (b1) {$\belief{1}$};
			\node[rct] at (8, 0) (b1s) {$\belief{1}'$};
			\path[->]
			(bot) edge node[left] {$\translabelinstant, \emptyset$} (b0s)
				edge node[left] {$\translabelinstant, \{a\}$} (b0)
			(b0) edge node[above, sloped] {$\translabelchge, \{a\}$} (b01)
			(b0) edge node[above, sloped, pos = 0.3] {$\translabelchge, \emptyset$} (b01s)
			(b0s) edge node[above, sloped, pos = 0.3] {$\translabelchge, \{a\}$} (b01)
			(b0s) edge node[below, sloped] {$\translabelchge, \emptyset$} (b01s)
			(b01) edge[bend right = 10] node[below, sloped] {$\translabelchge, \{a\}$} (b1)
			(b01) edge[bend left = 25] node[below, sloped, pos = 0.8] {$\translabelchge, \emptyset$} (b1s)
			(b01) edge[bend right = 10] node[above, sloped] {$\translabelsame, \emptyset$} (b01s)
			(b1) edge[bend right = 10] node[above, sloped] {$\translabelchge, \{a\}$} (b01)
			(b1) edge[bend right = 10] node[above, sloped] {$\translabelchge, \emptyset$} (b01s)
			(b01s) edge[bend right = 10] node[below, sloped] {$\translabelchge, \{a\}$} (b1)
			(b01s) edge[bend right = 40] node[below, sloped, pos = .8]  {$\translabelchge, \emptyset$} (b1s)
			(b01s) edge[bend right = 10 ] node[above, sloped] {$\translabelsame, \{a\}$}(b01)
			(b1s) edge[bend right =40 ] node[above, sloped, pos = 0.2] {$\translabelchge, \{a \}$} (b01)
			(b1s) edge[bend left = 25] node[above, sloped, pos = .2] {$\translabelchge, \emptyset$} (b01s)
			(b01) edge[loop above] node[left] {$\translabelsame, \set{a}$}()
			(b01s) edge[loop below] node[left] {$\translabelsame, \emptyset~$}()
			;

		\end{tikzpicture}

	}
	\caption{\BeliefAutomaton{} $\BeliefAut{\exTAopaque}$}
	\label{fig:belief-aut:strategy-opaque}
	\end{figure}
\end{example}

If there is more than one clock, we extend our abuse of notation for regions to $(\loc, \tau_1, \ldots, \tau_H)$, where each $\tau_i$ is either an interval or an integer.
Note that this notation does not take into account the comparison between clocks but this is acceptable in the following example as the clocks always have the same fractional part.

\begin{example}\label{ex:belief-aut-2clocks}
	Let $\exTAopaquebis$ the TA depicted in \cref{fig:aut:strategy-opaque-2clocks}. With the invariant $x \leq 1$ for $\locinit$ and $y\leq 2$ for $\locpriv$, we have the following beliefs. Each region is written $(\loc, \tau_1, \tau_2, \tau_3)$ with~$\tau_1$ for $\clock$, $\tau_2$ for $\clocky$ and $\tau_3$ for $\clockextra$. The corresponding \beliefAutomaton{} is depicted in \cref{fig:belief-aut:strategy-opaque-2clocks}.

	{\footnotesize
		\begin{align*}
			\belief{0}' &= \big\{
			( \locinit, 0, 0, 0 ),
			( \locpriv, 0, 0, 0 ) \big\}\\ %
			\belief{0} &= \belief{0}' \cup \big\{
			( \locfinal, 0, 0, 0 ) \big\}\\ %
			\belief{(0,1)}' &= \big\{
			( \locinit, (0,1),(0,1), (0,1) ),
			( \locpriv, (0,1), (0,1), (0,1) ) \big\}\\ %
			\belief{(0,1)} &= \belief{(0,1)}' \cup \big\{
			( \locfinal, (0,1), (0,1), (0,1) ) \big\} \\ %
			\belief{1}' &= \big\{
			( \locinit, 1 , 1, 1),
			( \locinit, 1 , 1, 0),
			( \locinit, 0 , 1, 1),
			( \locinit, 0 , 1, 0), \\ & \qquad
			( \locpriv, 0 ,1, 1),
			( \locpriv, 0 ,1, 0),
			( \locpriv, 1 ,1, 1),
			( \locpriv, 1 ,1, 0) \big\}\\
			\belief{1} &= \belief{1}' \cup \big\{
			( \locfinal, 1, 1, 1 ),
			( \locfinal, 1, 1, 0 ),
			( \locfinal, 0, 1, 1 ),
			( \locfinal, 0, 1, 0 )
			 \big\}\\
			\belief{(1,2)}' & = \big\{
			( \locinit, (0,1), (1,2), (0,1)),
			( \locpriv, (0,1), (1,2), (0,1)), \\ & \qquad
			( \locpriv, (1, +\infty), (1,2), (0,1))
			\big\} \\ %
			\belief{(1,2)} &=
			\belief{(1,2)}' \cup \big\{
			( \locfinal, (0,1), (1,2), (0,1))
			\big\}\\ %
			\belief{2}' & = \big\{
			( \locinit, 1, 2, 1),
			( \locinit, 1, 2, 0),
			( \locinit, 0, 2, 1),
			( \locinit, 0, 2, 0), \\ & \qquad
			( \locpriv, 0, 2, 1),
			( \locpriv, 0, 2, 0),
			( \locpriv, (1, +\infty), 2, 1), \\ & \qquad
			( \locpriv, (1, +\infty), 2, 0),
			( \locpriv, 1, 2, 1),
			( \locpriv, 1, 2, 0), \\ & \qquad
			( \locfinalpriv, 0, 2, 1),
			( \locfinalpriv, 0, 2, 0)
			\big\} \\ %
			\belief{2} &= \belief{2}' \cup \big\{
				( \locfinal, 1 ,2, 1),
				( \locfinal, 1 ,2, 0),
				\big\}\\ %
			\belief{(2,3)}' & = \big\{
			( \locinit, (0,1), (2, +\infty), (0,1))
			\big\} \\
			\belief{(2,3)} &=
			\belief{(2,3)}' \cup\big\{
			( \locfinal, (0,1), (2, +\infty), (0,1))
			\big\}\\ %
			\belief{3} &=
			\belief{3}' \cup \big\{
			( \locfinal, 0, (2, +\infty), 1),
			( \locfinal, 0, (2, +\infty), 0),\\ & \qquad
			(\locfinal, 1, (2, +\infty), 1),
			(\locfinal, 1, (2, +\infty), 0)
			\big\}\\ %
			\belief{3}' & = \big\{
			( \locinit, 1, (2, +\infty), 1),
			( \locinit, 1, (2, +\infty), 0),
			( \locinit, 0, (2, +\infty), 1),\\ & \qquad
			( \locinit, 0, (2, +\infty), 0)
			\big\}%
		\end{align*}

	}

	\begin{figure*}[tb]
		\begin{center}
	   	\begin{subfigure}{0.25\textwidth}
	   		\begin{tikzpicture}[pta,  scale= .9, every node/.style={scale=1}]
	   			\node[location, initial, initial text =] (q0) {$\locinit$};
	   			\node[location, private] at (2.5,2) (qpriv) {$\locpriv$};f
	   			\node[location, final] at (2.5,-2) (qfin) {$\locfinal$};
			 	\node[invariant] at (0, 0.6) [] (invariant) {$\styleclock{x} \leq 1$};
	   			\node[invariant] at (2.5, 2.7) (invariant2) {$\styleclock{y} \leq 2$};

				\node at(0,-2.7) (phantom) {};

	   			\path[->] (q0) edge[loop below] node[below,align=center] {\shortstack{${\edgeicolor{1}}$ \\ $\styleclock{x} = 1$ \\ $\styleact{u}$ \\ $\styleclock{x} \assign 0 $}} (q0)
	   			edge node[above, sloped] {\shortstack{${\edgeicolor{2}}$ \\ $\styleclock{x} = 0$}} node[below, sloped] {$\styleact{u}$} (qpriv)
	   			edge node[below, sloped] {$\styleact{a}$} node[above, sloped]{${\edgeicolor{4}}$}(qfin)
	   			(qpriv) edge node[right, align=left] {\shortstack{${\edgeicolor{3}}$ \\ $\styleclock{x} = 0$}} node[left] {$\styleact{u}$} (qfin);

	   		\end{tikzpicture}
	   		\caption{TA $\exTAopaquebis$}
	   		\label{fig:aut:strategy-opaque-2clocks}
	   	\end{subfigure}
   \hfill
	   	\begin{subfigure}{.7\textwidth}
			   \begin{tikzpicture}[beliefautomaton]
				   \node[rct, initial, initial text = , initial where = left] (bot) {$\beliefInitState$};
				   \node[rct] at (0,2) (A) {$\belief{0}$};
				   \node[rct] at (0,-2) (B) {$\belief{0}'$};
				   \node[rct] at (2, 2) (C) {$\belief{(0,1)}$};
				   \node[rct] at (2, -2) (D) {$\belief{(0,1)}'$};
				   \node[rct] at (4, 2) (E) {$\belief{1}$};
				   \node[rct] at (4, -2) (F) {$\belief{1}'$};
				   \node[rct] at (6, 2) (G) {$\belief{(1,2)}$};
				   \node[rct] at (6, -2) (H) {$\belief{(1,2)}'$};
				   \node[rct] at (8, 2) (I) {$\belief{2}$};
				   \node[rct] at (8, -2) (J) {$\belief{2}'$};
				   \node[rct] at (10, 2) (K) {$\belief{(2,3)}$};
				   \node[rct] at (10, -2) (L) {$\belief{(2,3)}'$};
				   \node[rct] at (13.5, 0) (M) {$\belief{3}$};
				   \node[rct] at (15, 0) (N) {$\belief{3}'$};
				   \path[->]
				   (bot) edge node[left] {$\translabelinstant, \emptyset$} (B)
					   edge node[left] {$\translabelinstant, \{a\}$} (A)
				   (A) edge node[above, sloped] {$\translabelchge, \{a\}$} (C)
				   (A) edge node[above, sloped, pos = 0.3] {$\translabelchge, \emptyset$} (D)
				   (B) edge node[above, sloped, pos = 0.3] {$\translabelchge, \{a\}$} (C)
				   (B) edge node[below, sloped] {$\translabelchge, \emptyset$} (D)
				   (C) edge node[above, sloped] {$\translabelchge, \{a\}$} (E)
				   (C) edge node[above, sloped, pos = 0.8] {$\translabelchge, \emptyset$} (F)
				   (C) edge[bend right = 10] node[above, sloped] {$\translabelsame, \emptyset$} (D)
				   (D) edge node[above, sloped, pos=0.8] {$\translabelchge, \{a\}$} (E)
				   (D) edge node[below, sloped]  {$\translabelchge, \emptyset$} (F)
				   (D) edge[bend right = 10 ] node[above, sloped] {$\translabelsame, \{a\}$}(C)
				   (E) edge node[above, sloped] {$\translabelchge, \{a\}$} (G)
				   (E) edge node[above, sloped, pos=0.3] {$\translabelchge, \emptyset$} (H)
				   (F) edge node[above, sloped, pos = 0.3] {$\translabelchge, \{a \}$} (G)
				   (F) edge node[below, sloped] {$\translabelchge, \emptyset$} (H)
				   (G) edge node[above, sloped] {$\translabelchge, \{a\}$} (I)
				   (G) edge node[above, sloped, pos = .8] {$\translabelchge, \emptyset$} (J)
				   (G) edge[bend right = 10] node[above, sloped] {$\translabelsame, \emptyset$} (H)
				   (H) edge node[above, sloped, pos = .8] {$\translabelchge, \{a\}$} (I)
				   (H) edge node[below, sloped] {$\translabelchge, \emptyset$} (J)
				   (H) edge[bend right = 10] node[above, sloped] {$\translabelsame, \{a\}$} (G)
				   (I) edge node[above, sloped] {$\translabelchge, \{a\}$} (K)
				   (I) edge node[above, sloped, pos = .3] {$\translabelchge, \emptyset$} (L)
				   (J) edge node[above, sloped, pos =.3] {$\translabelchge, \{a\}$} (K)
				   (J) edge node[above, sloped] {$\translabelchge, \emptyset$} (L)
				   (K) edge[bend right = 10] node[below, sloped] {$\translabelchge, \{a\}$} (M)
				   (K) edge[bend left = 15] node[above, sloped] {$\translabelchge, \emptyset$} (N)
				   (K) edge[bend right = 10] node[above, sloped] {$\translabelsame, \emptyset$} (L)
				   (L) edge[bend right = 15] node[above, sloped] {$\translabelchge, \{a\}$} (M)
				   (L) edge[bend right = 15] node[below, sloped] {$\translabelchge, \emptyset$} (N)
				   (L) edge[bend right = 10] node[above, sloped] {$\translabelsame, \{a\}$} (K)
				   (M) edge[bend right = 15] node[below, sloped] {$\translabelchge, \{a\}$} (K)
				   (M) edge[bend right = 10] node[above, sloped] {$\translabelchge, \emptyset$} (L)
				   (N) edge[bend right = 35] node[above, sloped] {$\translabelchge, \{a\}$} (K)
				   (N) edge[bend left = 35] node[below, sloped] {$\translabelchge, \emptyset$} (L)
				   (C) edge[loop above] node[above] {$\translabelsame, \set{a}$} ()
				   (D) edge[loop below] node[below] {$\translabelsame, \emptyset$}()
				   (G) edge[loop above] node[above] {$\translabelsame, \set{a}$}()
				   (H) edge[loop below] node[below] {$\translabelsame, \emptyset$}()
				   (K) edge[loop above] node[above] {$\translabelsame, \set{a}$}()
				   (L) edge[loop below] node[below] {$\translabelsame, \emptyset$}()
				   ;
			   \end{tikzpicture}
			   \caption{\BeliefAutomaton{} $\BeliefAut{\exTAopaquebis}$}
			   \label{fig:belief-aut:strategy-opaque-2clocks}
\end{subfigure}
		   \caption{$\exTAopaquebis$ and the corresponding \beliefAutomaton{}}
		   \label{fig:strategy-opaque-2clocks}
	   \end{center}
   \end{figure*}

	\end{example}

\subsubsection{Controlled belief automaton and encountered beliefs}

We will introduce in \cref{definition:controlled-belief-automaton} a version of the \beliefAutomaton{} controlled by a \metaStrategy{}~$\metastrat$.
One transition of the controlled \beliefAutomaton{} will group all possible sequences of transitions made in the \beliefAutomaton{} between two strategy changes in~$\metastrat$.
We thus need to keep track, in the states, of the sequence of strategy choices made until that state is reached, together with the current belief.
To do so, we first introduce a notation:

\begin{itemize}
\item $\metastrat_{k,0} = \metastrat([k,k])$;
\item if $\metastrat((k,k+1))= (\substrat_1, \ldots, \substrat_{m_k})$, then for all $1 \leq i \leq m_k$, $\metastrat_{k,i} = \substrat_i$.
\end{itemize}

In an execution of the \beliefAutomaton{}, the time elapsed can be inferred from the number of actions of the form $(1,\cdot)$ that have been taken so far.
If this number is of the form~$2k$, then exactly $k$~time units have elapsed.
If it is of the form $2k+1$ then the time elapsed is within the interval \mbox{$(k,k+1)$}.
We thus need a function that, given a sequence of elements in $\{\translabelinstant, \translabelsame, \translabelchge\} \times 2^{\ActionSetC}$ stating when clock~$\clockextra$ has changed from one region to another, and the consecutive choices of enabled actions made by the \metaStrategy{} so far, gives the next choice the meta strategy will make and whether or not $\clockextra$~will change region next.

Given a sequence $\SeqTransitions \in (\{\translabelinstant, \translabelsame, \translabelchge\} \times 2 ^{\ActionSetC})^*$, denoting by $2k+k'$ (with $k \in \setN$ and $k' \in\{0,1\}$) the number of actions of the form $(1,\cdot)$ in~$\SeqTransitions$, by $i$ the length of the longest suffix of $\SeqTransitions$ without any action of the form~$(1,\cdot)$, and by~$m_k$ the length of the sequence $\metastrat((k,k+1))$, the function $\next{\metastrat}$ is defined by:

\begin{itemize}
\item $\next{\metastrat}(\silentaction)=(0,\metastrat_{0,0})$
\item if $\SeqTransitions \neq \silentaction$ and $k'=0$, $\next{\metastrat}(\SeqTransitions) = (1,\metastrat_{k,1})$
\item if $k'=1$ and $i<m_k$, $\next{\metastrat}(\SeqTransitions) = (\translabelsame,\metastrat_{k,i+1} )$
\item if $k'=1$ and $i=m_k$, $\next{\metastrat}(\SeqTransitions) = (\translabelchge,\metastrat_{k+1,0} )$
\end{itemize}

\begin{example}
	Let $\metastrat$ a \metaStrategy{} defined (on the interval $[0,1]$) by: $\metastrat_{0,0} = \activated_0$, $\metastrat_{0,1} = \activated_1$, $\metastrat_{0,2} = \activated_2$, $\metastrat_{1,0} = \activated_3$. 
	Then, 
	\begin{itemize}
		\item $\next{\metastrat}(\silentaction) = (\translabelinstant, \activated_0)$ (first case),
		\item $\next{\metastrat}((\translabelinstant, \activated_0)) = (\translabelchge, \activated_1)$ (second case),
		\item $\next{\metastrat}((\translabelinstant, \activated_0),(\translabelchge, \activated_1)) = (\translabelsame, \activated_2)$ (third case), 
		\item $\next{\metastrat}((\translabelinstant, \activated_0),(\translabelchge, \activated_1), (\translabelsame, \activated_2)) = (\translabelchge, \activated_3)$ (fourth case). 
	\end{itemize}
\end{example}
\begin{definition}[Controlled \beliefAutomaton{}]\label{definition:controlled-belief-automaton}
	Given a \beliefAutomaton{}~$\BeliefAut{\TA} = (\BeliefAutState{\TA}, \BeliefAutActions{\TA}, \beliefInitState, \BeliefAutTransitions{\TA})$ and a \metaStrategy{} $\metastrat$, we define
	$\BeliefControlAut{\TA}{\metastrat} = \big(\BeliefControlAutState{\TA}{\metastrat}, \BeliefControlAutActions{\TA}{\metastrat}, (\silentaction,\beliefInitState), \BeliefControlAutTransitions{\TA}{\metastrat} \big)$ the \emph{\beliefAutomaton{} controlled by~$\metastrat$} as follows:
	\begin{enumerate}
		\item $\BeliefControlAutState{\TA}{\metastrat} =
			(\BeliefAutActions{\TA})^* \times \BeliefAutState{\TA}$ is the set of states,

		\item $\BeliefControlAutActions{\TA}{\metastrat} = \BeliefAutActions{\TA} =  \{\translabelinstant, \translabelsame, \translabelchge\} \times 2 ^{\ActionSetC} $ is the alphabet,

		\item $(\silentaction,\beliefInitState)$ is the initial state,

		\item\label{definition:controlled-belief-automaton:item4} $\BeliefControlAutTransitions{\TA}{\metastrat} \subseteq (\BeliefControlAutState{\TA}{\metastrat} \times \BeliefControlAutActions{\TA}{\metastrat} \times \BeliefControlAutState{\TA}{\metastrat})$ and $ \big((\SeqTransitions, \belief{}), (\translabelany, \activated), (\SeqTransitions\cdot(\translabelany, \activated), \belief{}')\big) \in \BeliefControlAutTransitions{\TA}{\metastrat}$ if  $(\belief{}, (\translabelany, \activated), \belief{}') \in \BeliefAutTransitions{\TA}$, and $\next{\metastrat}(\SeqTransitions) = (\translabelany, \activated)$.
	\end{enumerate}
\end{definition}

In other words, the  \beliefAutomaton{} controlled by a \metaStrategy{}~$\metastrat$ retains  only the transitions from the \beliefAutomaton{} that correspond to the \metaStrategy{}.

We now define the beliefs encountered by the controlled \beliefAutomaton{} as sets obtained
as the union of every belief between a pair of choices of the form
$(\translabelchge, \activated)$, in other words, the beliefs on an integer timestamp, or 
by regrouping the beliefs visited during an open interval. We will see later that this
object is equal to the set of \intervalBeliefs{}~$\beliefcontrolset{\TA}{\metastrat}$.

\begin{definition}[Belief encountered by the controlled \beliefAutomaton{}]\label{definition:reach-belief-controlled-belief-automaton}
A belief $\belief{}$ is said to be encountered by a controlled \beliefAutomaton{}~$\BeliefControlAut{\TA}{\metastrat}$
if $((\translabelinstant, \activated) ,\belief{})$ is reachable in~$\BeliefControlAut{\TA}{\metastrat}$, or
there exists a sequence $(\SeqTransitions_0,\belief{0}),\dots (\SeqTransitions_m,\belief{m})$ of states of $\BeliefControlAutState{\TA}{\metastrat}$ such
that
\begin{itemize}
\item $\forall i\leq m, (\SeqTransitions_i,\belief{i})$ are reachable in $\BeliefControlAut{\TA}{\metastrat}$,
\item $\forall i\leq m, \next{\metastrat}(\SeqTransitions_i) = (\translabelany_i, \activated_i)$,  $\translabelany_i\in \{\translabelsame, \translabelchge\}$ with 
$\translabelany_i = \translabelchge$ iff $i=0$ or $i=m$,
\item $\forall i< m, \SeqTransitions_{i+1}= \SeqTransitions_i \cdot (\translabelany_i, \activated_i)$,
\item $\belief{}=\bigcup_{i=1}^m \belief{i}$.
\end{itemize}
The set of beliefs encountered by $\BeliefControlAut{\TA}{\metastrat}$ is denoted by~$\beliefencountset{\TA}{\metastrat}$.
\end{definition}

\subsubsection{Feasible runs}

Let us now relate a controlled \beliefAutomaton{} and runs of~$\TA$.
In the following definition, $\SeqTransitions$ is a sequence of subsets of controllable actions ( which will be associated later to a sequence of strategy choices).

\begin{definition}[Run admitting a sequence]\label{definition:admissible-run}
	Let $\TA$ be a TA, $\run$ be a run of $\TAdup$ and \mbox{$\SeqTransitions \in  (\BeliefAutActions{\TA})^* $}.
	We say that $\run$ \emph{admits} $\SeqTransitions$, denoted by $\run \vdash \SeqTransitions$, when either:
	\begin{description}
	\item[(run reduced to the initial state)] $\run = (\locinit, \ClocksZero)$ and $\SeqTransitions=(0,\activated_0)$ with $\activated_0 \subseteq \ActionSetC$,
	\\
	or

	\item[(normal run)] $\run = \run', (\timeEdge{n-1}), (\loci{n}, \clockval_n)$ with $\edge_{n-1} = (\loc_{n-1}, \guard, \action, \resets, \loc_n)$
	and one of the following holds:
	\begin{enumerate}
		\item $\paramd_{n-1} = 0$, $\run' \admits \SeqTransitions$,
	$\SeqTransitions=\SeqTransitions'\cdot(\translabelany,\activated)$ and \mbox{$\action \in \activated \cup \ActionSetU \cup \{\silentaction\}$,} \label{item:conditio1}%

		\item $0 < \paramd_{n-1} <1$, $\SeqTransitions = \SeqTransitions' \cdot (\translabelchge, \activated_0) \cdot (\translabelany_1, \activated_1) \cdots$ $ (\translabelany_{m-1}, \activated_{m-1})\cdot (\translabelany_m, \activated_m)$, $\action \in  \activated_m \cup \ActionSetU \cup \set{\silentaction}$, for all $1 \leq k < m$, $\translabelany_k=\translabelsame$, and 
		one of the following holds:\label{item:conditio2}
			\begin{enumerate}
				\item  $\fract{\mu_{n-1}(\clockextra)}\neq 0$, $\fract{\mu_{n}(\clockextra)}\neq 0$,  \mbox{$\translabelany_m = \translabelsame$}, and there exists $i, 0 \leq i \leq m$, $\run' \admits \SeqTransitions' \cdot (\translabelchge, \activated_0) \cdot (\translabelsame, \activated_1) \cdots (\translabelany', \activated_i)$, \label{item:conditio2a}%
				\item  $\fract{\mu_{n-1}(\clockextra)}=0$, $\run' \admits \SeqTransitions'$,
and either \mbox{$m=0$} or $\translabelany_m = \translabelsame$,\label{item:conditio2b}
				\item  $\fract{\mu_{n}(\clockextra)}=0$, $\translabelany_m = \translabelchge$, and there exists $i \in \{ 0 , \dots, m-1\}$ such that $\run' \admits \SeqTransitions' \cdot (\translabelchge, \activated_0) \cdot (\translabelsame, \activated_1) \cdots (\translabelany', \activated_i)$,\label{item:conditio2c}
			\end{enumerate}
		\item  $\paramd_{n-1} = 1$, 
		$\SeqTransitions = \SeqTransitions' \cdot (\translabelchge, \activated_0) \cdot (\translabelsame, \activated_1) \cdots (\translabelsame, \activated_m) \cdot (\translabelchge, \activated)$ with $ 0 \leq m$ and such that $\run' \admits \SeqTransitions'$, and $\action \in  \activated \cup \ActionSetU \cup \set{\silentaction}$\label{item:conditio3}.
	\end{enumerate}
\end{description}
\end{definition}

In the above definition, when the run has no transition, it is associated to a sequence of just one element, $(0,\activated_0)$ where 0 means that no time has passed, and $\activated_0$ is the first choice of actions. For a non-empty run, condition~\ref{item:conditio1} states that when two transitions of~$\run$ occur at the same time, the corresponding set of actions has to be the same (as there will be only one set of actions enabled at a given time instant).
Condition~\ref{item:conditio3} corresponds to two actions separated by exactly one time unit.
In this case (since clock~$\clockextra$ is reset at every integer time) the two actions occur at an integer time.
The sequence associated to~$\run'$ is completed with a sequence of pairs where the first and last one correspond to a change of region for~$z$ (showed by the ``$1$'' as first element).
Condition~\ref{item:conditio2} is more complex.
Let us denote $t_{n-1}$ the time instant of the end of~$\run'$, and $t_n$ the time instant of the end of~$\run$.
There is a sequence~$\SeqTransitions''$ such that $\run' \admits \SeqTransitions''$ and such that the following holds.
If both time instants have a non-0 fractional part (condition~\ref{item:conditio2a}), then we complete $\SeqTransitions''$ with a (possibly empty) sequence where every pair has a first component equal to~$\translabelsame$, as $\clockextra$ does not switch region outside integer time instants. If only $t_{n-1}$ is an integer (condition~\ref{item:conditio2b}), then we add to~$\SeqTransitions''$ a sequence starting with a region change (first component equal to~$\translabelchge$), and if only $t_n$ is an integer (condition~\ref{item:conditio2c}), then $\SeqTransitions''$ is completed with a non empty sequence ending by a region change (first component of the pair equals~$\translabelchge$).
\begin{definition}[Feasible run]\label{definition:feasible}
Let $\TA$ be a TA, $\run$ be a run of $\TAdup$ and $\metastrat$ a \metaStrategy{}.
We say that $\run$ is \emph{feasible in~$\BeliefControlAut{\TA}{\metastrat}$}
when there exist $\SeqTransitions\in (\BeliefAutActions{\TA})^*$ and a belief $\belief{} \in \BeliefAutState{\TA}$ such that $\run \admits \SeqTransitions$, $(\SeqTransitions, \belief{})$ is reachable in $\BeliefControlAut{\TA}{\metastrat}$ and $\region = \class{\laststate(\run)} \in \belief{}$.
\end{definition}

Note that, from item~\ref{definition:controlled-belief-automaton:item4} of \cref{definition:controlled-belief-automaton}, there is only one action possible in each belief of $\BeliefControlAut{\TA}{\metastrat}$, and thus only one execution.
For every feasible run~$\run$ the corresponding $\SeqTransitions$ is then of the form $(\translabelinstant,\metastrat_{0,0}) \cdot (\translabelchge,\metastrat_{0,1})\cdot (\translabelsame,\metastrat_{0,2}) \cdots (\translabelsame,\metastrat_{0,m_0}) \cdot (\translabelchge,\metastrat_{1,0}) \cdot (\translabelchge,\metastrat_{1,1})\cdot (\translabelsame,\metastrat_{1,2})\cdots (\translabelsame,\metastrat_{1,m_1})\cdots$, where for all~$k$, $\metastrat((k,k+1))=\metastrat_{k,1},\dots,\metastrat_{k,m_k}$ with $m_k \geq 1$.

\begin{example}\label{example:controlled-aut-beliefs}
	Consider again the TA~$\exTAopaque$ in \cref{fig:aut:strategy-opaque}.
	Let $\strategy$ be the strategy defined as follows:
	\[
		\strategy(\tau) =
		\left\{
			\begin{array}{ll}
				\{a\} & \mbox{ if } \tau \in \setN \\
				\emptyset & \mbox{ otherwise }
			\end{array}
		\right.
	\]
	$\exTAopaque$ is \fullOpaqueText{} with~$\strategy$.

	Let $\metastrat$ be a \metaStrategy{} defined as follows:
		\begin{itemize}
			\item $\metastrat_{i,0} = \{a\}$, for all $i \leq 0$, 
			\item $\metastrat_{i,1} = \emptyset$, for all $i \leq 0$. 
		\end{itemize}
	First note that $\strategy \satisfies \metastrat$.
	Then, the first states of the automaton $\BeliefControlAut{\exTAopaque}{\metastrat}$ are depicted in \cref{fig:ex:belief-strat}.

	\begin{figure*}[tb]
		\begin{center}

		\begin{tikzpicture}[auto]
			\begin{scope}[beliefautomaton]
				\node[rct, initial, initial text =] at (0,0) (bot) {$(\silentaction, \beliefInitState)$};
				\node[rct] at (3.5, 0) (b0) {$\big((\translabelinstant, \{a\}), \belief{0}\big)$};
				\node[rct] at (8, 0) (b01) {$\big((\translabelinstant, \{a\}) \cdot (\translabelchge, \emptyset), \belief{(0,1)}'\big)$};
				\node[rct] at (8, -2) (b1) {$\big((\translabelinstant, \{a\}) \cdot (\translabelchge, \emptyset) \cdot (\translabelchge, \{a\}), \belief{1}\big)$};
				\node[rct] at (1, -2) (b01-2) {$\big((\translabelinstant, \{a\}) \cdot (\translabelchge, \emptyset) \cdot (\translabelchge, \{a\})\cdot (\translabelchge, \emptyset), \belief{(0,1)}'\big)$};
				\node at (-4, -2) (fantome) {};
				\path[->]
				(bot) edge node[] {$\translabelinstant, \{a\}$} (b0)
				(b0) edge node[] {$\translabelchge, \emptyset$} (b01)
				(b01) edge node[] {$\translabelchge, \{a\}$} (b1)
				(b1) edge node[above]  {$\translabelchge, \emptyset$} (b01-2)
				(b01-2) edge[dotted] node[above]{$\translabelchge, \{a\}$} (fantome)
				;

			\end{scope}
		\end{tikzpicture}
		\caption{$\BeliefControlAut{\exTAopaque}{\metastrat}$: First states of the controlled \beliefAutomaton{} for TA $\exTAopaque$ and \metaStrategy{} $\metastrat$ }
		\label{fig:ex:belief-strat}
	\end{center}
	\end{figure*}
	\noindent
	Runs $\run_1$ and $\run_2$ are feasible in $\BeliefControlAut{\exTAopaque}{\metastrat}$:
	\[\begin{array}{lll}
		\run_1 & = & (\locinit, 0), (1, \edge_1), (\locinit, 0), (0, \mathsf{e_2}), (\locpriv, 0),\\ && \quad (0, \edge_3), (\locfinalpriv, 0) \\
		\run_2 & = & (\locinit, 0), (1, \edge_1), (\locinit, 0), (0, \mathsf{e_4}), (\locfinal, 0)
	\end{array}\]
	\noindent For $\SeqTransitions = (0, \set{a}) \cdot (1, \emptyset) \cdot (1, \set{a})$ a sequence of actions in the automaton $\BeliefControlAut{\exTAopaque}{\metastrat}$, we have $\run_1 \admits \SeqTransitions$ and $\run_2 \admits \SeqTransitions$.
\end{example}

We can now link a run and a \metaStrategy{}.
\begin{restatable}[Strategies and run feasibility]{lemma}{lemmastrategies}\label{lem:strat-runs-equivalence}
	Let $\TA$ be a TA, $\run$ a run of $\TAdup$ and $\metastrat$ a \metaStrategy{}.
	There exist $\strategy \satisfies \metastrat$ such that $\run$ is \compatible{} iff $\run$ is feasible in $\BeliefControlAut{\TA}{\metastrat}$.
\end{restatable}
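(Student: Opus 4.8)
The plan is to establish the two implications separately. In both directions I would argue by induction on the number of transitions of $\run$, mirroring the recursive structure of the $\admits$ relation in \cref{definition:admissible-run}.

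$(\Rightarrow)$ Assume $\strategy \satisfies \metastrat$ and that $\run$ is \compatible{}. Since each state of $\BeliefControlAut{\TA}{\metastrat}$ has exactly one outgoing action (item~\ref{definition:controlled-belief-automaton:item4} of \cref{definition:controlled-belief-automaton}), the sequence $\SeqTransitions$ witnessing feasibility is forced: it is the prefix of $(\translabelinstant,\metastrat_{0,0})\cdot(\translabelchge,\metastrat_{0,1})\cdot(\translabelsame,\metastrat_{0,2})\cdots$ whose $\translabelchge$-letters record exactly the integer boundaries crossed by the tick clock~$\clockextra$ along $\run$. I would build $\SeqTransitions$ by scanning $\run$ transition by transition: for each delay, the values of $\clockextra$ immediately before and after determine whether the matching segment of $\SeqTransitions$ must carry $\translabelinstant$, $\translabelsame$ or $\translabelchge$, which is precisely the case split of conditions~\ref{item:conditio1}--\ref{item:conditio3} (including the subcases \ref{item:conditio2a}--\ref{item:conditio2c}). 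That the action fired at each step belongs to the action set $\SeqTransitions$ assigns to that position is exactly where \compatible{}-ness and $\strategy \satisfies \metastrat$ enter: compatibility puts the action in $\strategy(\tau)\cup\ActionSetU$ at its firing time~$\tau$, and $\strategy \satisfies \metastrat$ identifies $\strategy(\tau)$ with the relevant $\metastrat_{k,i}$ (or with $\metastrat_{k,0}$ at integer times). A straightforward induction then shows, using the region-automaton transitions underlying \cref{def:aut-of-beliefs}, that the state $(\SeqTransitions,\belief{})$ reached in $\BeliefControlAut{\TA}{\metastrat}$ has $\class{\laststate(\run)}\in\belief{}$: every delay/discrete step of $\run$ projects to a path of $\regaut{\TAdup}$ of the shape required in conditions~\ref{item:condition4a}--\ref{item:aut-belief}, so the region of $\laststate(\run)$ survives in the ``largest set'' defining each belief.

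$(\Leftarrow)$ Assume $\run \admits \SeqTransitions$ with $(\SeqTransitions,\belief{})$ reachable in $\BeliefControlAut{\TA}{\metastrat}$ and $\class{\laststate(\run)}\in\belief{}$. I must produce one strategy $\strategy$ with $\strategy \satisfies \metastrat$ for which $\run$ is \compatible{}. The sequence $\SeqTransitions$ already fixes, for each unit interval $(k,k+1)$, the order $\metastrat_{k,1},\dots,\metastrat_{k,m_k}$ of action sets prescribed by $\metastrat$; what is missing is the choice of split points for the ordered partition. The crucial observation, obtained from condition~\ref{item:conditio1}, is that any two transitions of $\run$ firing at the same absolute time are attached to the same position of $\SeqTransitions$, hence to the same action set; consequently transitions assigned to distinct positions fire at strictly increasing times, and one can always slip a partition boundary strictly between two consecutive such times (and place boundaries at the integer endpoints). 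Choosing any such boundaries on the finitely many unit intervals met by $\run$, and filling the remaining sub-intervals --- and all time beyond $\runduration{\run}$ --- by evenly subdividing the space in the order dictated by $\metastrat$, yields a \finitelyvarying{} strategy $\strategy$. By construction $\strategy$ agrees with $\metastrat$ at the integers and realises all its ordered sequences, so $\strategy \satisfies \metastrat$; and at the firing time of each transition $\strategy$ returns exactly the set that the $\admits$-derivation required, so by \cref{definition:admissible-run} the fired action is enabled (or is silent / uncontrollable) and $\run$ is \compatible{}.

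\textbf{Main obstacle.} The delicate part is the bookkeeping around the tick clock~$\clockextra$ and the three labels $\translabelinstant,\translabelsame,\translabelchge$: matching, step by step, the fractional values of $\clockextra$ before and after a delay with the correct subcase among \ref{item:conditio2a}, \ref{item:conditio2b} and~\ref{item:conditio2c}, and checking that this matching remains coherent both with the canonical shape of $\SeqTransitions$ and with \cref{definition:controlled-belief-automaton,definition:reach-belief-controlled-belief-automaton}. In the converse direction the strategy reconstruction is conceptually easy once the ``same firing time $\Rightarrow$ same position'' fact is in hand, but exhibiting the ordered partition explicitly --- in particular across the unit intervals that $\run$ enters but does not fully traverse --- must be done with care.
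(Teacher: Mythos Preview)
Your proposal is correct and follows essentially the same route as the paper's proof: both directions proceed by induction on the transitions of~$\run$, building the canonical sequence~$\SeqTransitions$ in the forward direction via the same case split on the delay (matching conditions~\ref{item:conditio1}--\ref{item:conditio3}), and in the converse direction reconstructing an ordered partition of each unit interval from the firing times of~$\run$ while filling the gaps evenly. The paper's treatment of the latter is slightly more explicit in that it gives a concrete formula for the ``artificial'' split points in intervals where some strategy-choice slots have no associated event, but this is exactly the bookkeeping you flag in your final paragraph.
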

\begin{proof}
	\begin{itemize}
		\item [$\implies$]
	Let $\strategy \satisfies \metastrat$ be a strategy and $\run$ be a \compatible{} run of $\TAdup$ s.t. \mbox{$\run = (\locinit,\ClocksZero), (\timeEdge{0}), \ldots, (\timeEdge{n-1}), (\loci{n}, \clockval_n)$}.
		We build inductively from $\run$ a sequence $\SeqTransitions \in (\BeliefAutActions{\TA})^*$ such that $\run \admits \SeqTransitions$ and the sequence $\SeqTransitions$ can be fired in $\BeliefControlAut{\TA}{\metastrat}$.\\
		As $\run$ is \compatible{}, for all $0 \leq i < n$, $(\loc_i, \clockval_i, \sum_{j < i} \paramd_j) \transitionControlLabel{i}{\strategy} (\loc_{i+1}, \clockval_{i+1}, \sum_{j \leq i} \paramd_j)$.

		First for $\run_0=(\locinit,\ClocksZero)$, we set ${\SeqTransitions}_0=(0,\metastrat([0,0]))=\next{\metastrat}(\silentaction)$. By construction of the controlled \beliefAutomaton, $\next{\metastrat}(\silentaction)$ is precisely the action allowed in the initial state of $\BeliefControlAut{\TA}{\metastrat}$ and there exists a belief $\belief{0}$ such that $(\SeqTransitions_0,\belief{0})$ is reachable in $\BeliefControlAut{\TA}{\metastrat}$.

		Now, let us assume that we have built a sequence ${\SeqTransitions}_i$ such that $\run_i=(\locinit,\ClocksZero), (\timeEdge{0}), \ldots, (\timeEdge{i-1}), (\loci{i}, \clockval_i) \admits {\SeqTransitions}_i$ with  $i<n$, and a belief $\belief{i}$ such that $(\SeqTransitions_i,\belief{i})$ is reachable in $\BeliefControlAut{\TA}{\metastrat}$. We define ${\SeqTransitions}_{i+1}$ by completing ${\SeqTransitions}_{i}$ with all the actions (in order) of the controlled \beliefAutomaton{} corresponding to the changes in the strategy~$\strategy$ between time instants $\sum_{j < i} \paramd_j$ excluded and $\sum_{j \leq i} \paramd_j$ included.

		More precisely, following the definition of admissible sequence, we have three possible cases:
		\begin{itemize}
			\item $\paramd_i=0$, in which case the strategy has not changed since the two consecutive transitions occur at the same time. We thus set ${\SeqTransitions}_{i+1}={\SeqTransitions}_i$. 
			
			\item $\paramd_i=1$, the time between two resets of clock~$z$. In this case $\sum_{j<i} \paramd_j = k \in \setN$\footnote{Since clock~$\clockextra$ is reset precisely every time unit, $\fract{\clockval_{i}(\clockextra)}=\fract{\sum_{j < i} \paramd_j}$ and so $\sum_{j < i} \paramd_j \in \setN$ iff $\fract{\clockval_{i}(\clockextra)}=0$.} and, if $\metastrat((k,k+1)) = \metastrat_{k,1}, \ldots, \metastrat_{k,m}$, we set ${\SeqTransitions}_{i+1}={\SeqTransitions}_i \cdot (\translabelchge,\metastrat_{k,1}), (\translabelsame,\metastrat_{k,2}),\ldots,$ $(\translabelsame,\metastrat_{k,m}),(\translabelchge,\metastrat_{k+1,0})$. This construction respects the properties of \cref{definition:admissible-run} (since $\strategy \satisfies \metastrat$, the last action of $\run_{i+1}$ fired at time $k+1$ has to belong to $\metastrat_{k+1,0}$), so we have $\run_{i+1} \admits {\SeqTransitions}_{i+1}$.

			\item $0<\paramd_i<1$. Let $k$ be the integral part of $\sum_{j < i} \paramd_j$, $\metastrat((k,k+1)) = \metastrat_{k,1}, \ldots, \metastrat_{k,m}$, $\subinterval_1,  \ldots, \subinterval_m$ the ordered partition of $(k,k+1)$ corresponding to the changes of strategy for $\strategy$ in this interval and $\subinterval_0=[k,k]$, $\subinterval_{m+1}=[k+1, k+1]$. Denoting $l$ the index such that $\sum_{j < i} \paramd_j \in \subinterval_l$ and $l'$ the index such that $\sum_{j \leq i} \paramd_j \in \subinterval_{l'}$, then if $l=l'$ we have $\SeqTransitions_{i+1} = \SeqTransitions_{i}$, otherwise  $\SeqTransitions_{i+1} = \SeqTransitions_{i}\cdot (\translabelany_{l+1},\metastrat_{k,l+1}) \cdots (\translabelany_{l'},\metastrat_{k,l'})$ where:
			\begin{itemize}
				\item if $l'=m+1$, we define $\metastrat_{k,m+1}$ as $\metastrat_{k+1,0}$;
				\item $\translabelany_1 = \translabelany_{m+1} = \translabelchge$, and for all values $1<l''<m+1$, $\translabelany_{l''}=\translabelsame$.
			\end{itemize}
			In every case above (whether $l'=m+1$ or not), since $\strategy \satisfies \metastrat$ and given our choice of $l'$ such that the next action takes place during $\subinterval_{l'}$, we ensure that the next action $a_i$ of $\edge_i$ belongs to $\metastrat_{k,l'}$.
		\end{itemize}

		This construction of $\SeqTransitions$ thus respects at each step the definition of admissibility and thus $\rho \admits \SeqTransitions$. Furthermore, our construction also ensures that, for every $\SeqTransitions'$, $\SeqTransitions''$ such that $\SeqTransitions= \SeqTransitions'\cdot(\translabelany,\substrat) \cdot\SeqTransitions''$, $\next{\metastrat}(\SeqTransitions')=(\translabelany,\substrat)$. Thus, by definition of the controlled \beliefAutomaton{}, we have that the actions of $\SeqTransitions$ can be fired in this order in $\BeliefControlAut{\TA}{\metastrat}$ and hence the existence of a belief $\beliefNoArg$ such that $(\SeqTransitions,\belief{})$ is reachable in $\BeliefControlAut{\TA}{\metastrat}$. Since the sequence $\SeqTransitions$ has been built precisely to follow all the strategy's changes that have occured during $\run$, we also get that the belief $\beliefNoArg$ contains the last state of $\run$ and thus $\region = \class{\laststate(\run)} \in \beliefNoArg$.
		\item [$\impliedby$]
		Let $\run$ be a feasible run in $\BeliefControlAut{\TA}{\metastrat}$. By definition there is  a sequence $\SeqTransitions$ such that $\run \admits \SeqTransitions$ and such that all actions of $\SeqTransitions$ can be executed in order in $\BeliefControlAut{\TA}{\metastrat}$. As claimed before, this sequence is of the form $(\translabelinstant,\metastrat_{0,0})\cdot(\translabelchge,\metastrat_{0,1})\cdot(\translabelsame,\metastrat_{0,2}) \cdots (\translabelsame,\metastrat_{0,m_0})\cdot$ $(\translabelchge,\metastrat_{1,0})\cdot(\translabelchge,\metastrat_{1,1})\cdot(\translabelsame,\metastrat_{1,2})\cdots(\translabelsame,\metastrat_{1,m_1})$ where for all $k$, $\metastrat((k,k+1))=\metastrat_{k,1},\ldots,\metastrat_{k,m_k}$ with $m_k \geq 1$. We will now construct a strategy $\strategy$ such that $\strategy \satisfies \metastrat$ and $\run$ is \compatible{}.

		Let $\strategy$ be a strategy such that $\forall k \in \setN$:
		\begin{itemize}
			\item $\strategy(k) = \metastrat([k,k])$;
			\item if no event of $\run$ occur in interval \mbox{$(k,k+1)$} then, with~$m_k$ the number of strategy changes for $\metastrat$ during this interval, we set $\subinterval_{1} = (k,k+1/m_k)$ and for all \mbox{$1<j\leq m_k$}, $\subinterval_j=$\mbox{$[k+(j-1)/m_k,$} \mbox{$k+j/m_k)$}. The strategy $\strategy$ is then defined on each $\subinterval_j$ by $\metastrat_{k,j}$;
			\item if one or more action of $\run$ occur in interval $(k,k+1)$ then we need to constrain the ordered partition so that the resulting $\strategy$ matches those actions. The construction of $\SeqTransitions$ precisely tells to which choice of the \metaStrategy{} correspond each action of $\run$. We thus get, for every $j \leq m_k$, a (possibly empty) set of time instants at which transitions occur in $\run$ for the strategy choice $\metastrat_{k,j}$. If the set is not empty, we note $t_j$ the first of those instants.
			If the set is empty, we define an ``artificial'' $t_j$ to help build the ordered partition afterwards. We denote by $t'_j$ the global time of the last event that occured ``before'', \ie{} associated to a $(\translabelany,\metastrat_{k,j'})$ in $\SeqTransitions$ with $j'<j$ (in case no such event exists within $(k,k+1)$, we set $t'_j=k$ and $j'=0$), by $t''_j$ the global time of the first event that occured ``after'', \ie{} associated to an element $(\translabelany,\metastrat_{k,j''})$ in $\SeqTransitions$ with $j''>j$, (in case no such event exists within $(k,k+1)$, we set $t''_j=k+1$ and $j''=m_k+1$), we can then set \mbox{$t_j=t'_j+(t''_j-t'_j)*(j-j')/(j''-j')$}. This formula gives uniformly distributed artificial $t_j$s. Based on these time instants, we can define an ordered partition $\subinterval_1, \cdots, \subinterval_{m_k}$ with $t_{m_k+1}=k+1$:
			\begin{itemize}
				\item $\subinterval_1 = (k,t_2)$;
				\item for all $1<j \leq m_k$, $\subinterval_j = [t_j,t_{j+1})$. 
			\end{itemize} 
			The strategy $\strategy$ is then defined on each $\subinterval_j$ by $\metastrat_{k,j}$.
		\end{itemize}
		We have thus defined a strategy $\strategy \satisfies \metastrat$ (both make the same choices at every integer time~$k$ and in the same order in every interval of the form $(k,k+1)$) and such that $\run$ is \compatible{}, which concludes the proof.\qedhere{}
	\end{itemize}
\end{proof}

\begin{corollary}\label{cor:encint}
Let $\TA$ be a TA, and $\metastrat$ a \metaStrategy{},
$\beliefencountset{\TA}{\metastrat}=\beliefcontrolset{\TA}{\metastrat}$.
\end{corollary}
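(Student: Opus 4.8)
The plan is to describe each of the two sets run by run and then reduce their equality to \cref{lem:strat-runs-equivalence}. First I would use that $\BeliefControlAut{\TA}{\metastrat}$ is deterministic: by item~\ref{definition:controlled-belief-automaton:item4} of \cref{definition:controlled-belief-automaton}, from any reachable state $(\SeqTransitions,\belief{})$ the only available action is $\next{\metastrat}(\SeqTransitions)$, and the successor belief is then the uniquely determined largest set of \cref{def:aut-of-beliefs}; hence $\BeliefControlAut{\TA}{\metastrat}$ carries a single infinite run, whose label sequence is the one already identified after \cref{definition:feasible}, so that its labels of the form $(\translabelchge,\cdot)$ occur in the pattern ``one reaching an integer $k$, immediately followed by one entering $(k,k+1)$'', together with the first one entering $(0,1)$. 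Unfolding \cref{definition:reach-belief-controlled-belief-automaton} against this run, $\beliefencountset{\TA}{\metastrat}$ consists exactly of the beliefs $\belief{}^{(k)}$ reached after the prefix that ends with the $2k$-th label $(\translabelchge,\cdot)$, for $k\in\setN$ (with $\belief{}^{(0)}$ the target of the first transition), together with, for each $k\in\setN$, the union $\bigcup_{i=1}^{m_k}\belief{k,i}$ of the belief components of the $m_k$ states visited strictly between time $k$ and time $k+1$. As $\beliefcontrolset{\TA}{\metastrat}=\{\beliefcontrol{k}{\metastrat}\mid k\in\setN\}\cup\{\beliefcontrol{k+}{\metastrat}\mid k\in\setN\}$ by definition, it then suffices to prove $\belief{}^{(k)}=\beliefcontrol{k}{\metastrat}$ and $\bigcup_{i=1}^{m_k}\belief{k,i}=\beliefcontrol{k+}{\metastrat}$ for every $k$.

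Both equalities rest on two ingredients. The first is a counting fact, proved by a straightforward induction on the number of transitions of $\run$ through the cases of \cref{definition:admissible-run}: if $\run\admits\SeqTransitions$ then $\SeqTransitions$ contains exactly $2k$ labels of the form $(\translabelchge,\cdot)$ when $\runduration{\run}=k\in\setN$, and exactly $2k+1$ of them when $\runduration{\run}\in(k,k+1)$ (such a label is appended precisely when $\clockextra$ enters or leaves the region $\{0\}$, that is, when absolute time crosses an integer). Combined with determinism, this pins down the prefix a feasible run of a given duration can admit: duration $k$ forces the prefix leading to $\belief{}^{(k)}$, and a duration in $(k,k+1)$ forces one of those leading to $\belief{k,1},\dots,\belief{k,m_k}$. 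The second, and more delicate, ingredient is a completeness statement for the belief automaton: for every reachable state $(\SeqTransitions,\belief{})$ of $\BeliefControlAut{\TA}{\metastrat}$ and every region $\region\in\belief{}$ there is a run $\run$ of $\TAdup$ with $\run\admits\SeqTransitions$ and $\class{\laststate(\run)}=\region$ (such a $\run$ is then feasible by \cref{definition:feasible}). I would prove this by induction along the unique run of $\BeliefControlAut{\TA}{\metastrat}$: by \cref{def:aut-of-beliefs}, the membership of $\region'$ in the successor belief is witnessed by a path of $\regaut{\TAdup}$ from some $\region\in\belief{}$ whose labels respect the action constraints of $\metastrat$ and the $\clockextra$-bookkeeping, and the standard correspondence between the region automaton and concrete runs of $\TAdup$ lets one lift this path to a concrete continuation of the run provided by the induction hypothesis; one then verifies that the resulting run lands in the sub-case of \cref{definition:admissible-run} matching the first label of the path and the value of $\fract{\clockextra}$ at the two endpoints.

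With these two ingredients, fix $k$ and $t\in\{k\}\cup(k,k+1)$. If $r\in\beliefcontrol{t}{\metastrat}$ then, by definition, there are $\strategy\satisfies\metastrat$ and a \compatible{} run $\run$ of $\TAdup$ with $\runduration{\run}=t$ and $\class{\laststate(\run)}=r$; by \cref{lem:strat-runs-equivalence}, $\run$ is feasible, so $r$ lies in the belief reached by the prefix $\run$ admits, which by the counting fact is $\belief{}^{(k)}$ if $t=k$ and one of $\belief{k,1},\dots,\belief{k,m_k}$ if $t\in(k,k+1)$; hence $\beliefcontrol{k}{\metastrat}\subseteq\belief{}^{(k)}$ and $\beliefcontrol{k+}{\metastrat}\subseteq\bigcup_{i=1}^{m_k}\belief{k,i}$. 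Conversely, if $r\in\belief{}^{(k)}$ (resp.\ $r\in\belief{k,i}$ for some $i$), the completeness ingredient gives a feasible run $\run$ with $\class{\laststate(\run)}=r$ admitting the prefix leading to that belief, and the counting fact gives $\runduration{\run}=k$ (resp.\ $\runduration{\run}\in(k,k+1)$); \cref{lem:strat-runs-equivalence} then produces $\strategy\satisfies\metastrat$ making $\run$ \compatible{}, so $r\in\beliefcontrol{\runduration{\run}}{\metastrat}$, which gives the reverse inclusions. Combining them proves $\belief{}^{(k)}=\beliefcontrol{k}{\metastrat}$ and $\bigcup_{i=1}^{m_k}\belief{k,i}=\beliefcontrol{k+}{\metastrat}$ for all $k$, hence $\beliefencountset{\TA}{\metastrat}=\beliefcontrolset{\TA}{\metastrat}$. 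The hard part is the completeness ingredient: the admissibility relation has many sub-cases, and most of the work lies in checking that the run obtained by lifting a region-automaton path falls in exactly the sub-case that makes it admit the intended prefix.
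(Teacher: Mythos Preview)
Your proposal is correct and follows essentially the same approach as the paper: both directions go through \cref{lem:strat-runs-equivalence}, the counting argument linking $\runduration{\run}$ to the number of $(\translabelchge,\cdot)$ labels in~$\SeqTransitions$, and the completeness of the belief construction. You are more explicit than the paper about the determinism of $\BeliefControlAut{\TA}{\metastrat}$ and about the completeness ingredient (which the paper dispatches with ``by following $\SeqTransitions_i$ in the \BeliefAutomaton{}, we can build a feasible run''), but the skeleton and the key lemma are identical.
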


\begin{proof}
	Let $\TA$ be a TA, and $\metastrat$ a \metaStrategy{}. 
	Let $\belief{\interval}\in \beliefcontrolset{\TA}{\metastrat}$.
We focus on the case where there exists $k\in \setN$, such that 
$\belief{\interval}=\belief{k+}^{\metastrat}$
(the interval belief associated to $\interval = (k,k+1)$) as the case where
$\belief{\interval}= \belief{k}^{\metastrat}$ (the interval belief associated to $\{k\}$) is simpler and can be obtained with a similar proof.
Let $(\SeqTransitions_0,\belief{0}),\dots (\SeqTransitions_m,\belief{m})$ be the sequence
used in~\cref{definition:reach-belief-controlled-belief-automaton} where
$\SeqTransitions_1,\dots \SeqTransitions_m$ all contain $2k+1$ elements of the form
$(\translabelchge, \activated)$.
Let $\belief{\mathcal{E}} = \bigcup_{i=1}^m\belief{i}\in \beliefencountset{\TA}{\metastrat}$.
Let us show that $\belief{\mathcal{E}}=\belief{\interval}$.

Given $\region\in \belief{\interval}$, then by definition of $\beliefcontrolset{\TA}{\metastrat}$, there exists a run $\run$ such that $\run$ is \compatible{} with $\strategy \satisfies \metastrat$, $\regionclass{\laststate(\run)} =\region$ 
and $\duration(\run)\in \interval$.
Thus, by \cref{lem:strat-runs-equivalence}, $\run$ is feasible in $\BeliefControlAut{\TA}{\metastrat}$. 
By \cref{definition:feasible},
there exists $\SeqTransitions$ and $\belief{}$ such that $\run \admits \SeqTransitions$, and 
$(\SeqTransitions, \belief{})$ is reachable in $\BeliefControlAut{\TA}{\metastrat}$.
Moreover, $\duration(\run)\in \interval$ is equivalent to $\SeqTransitions$ containing 
exactly $2k+1$ elements of the form $(\translabelchge, \activated)$.
Hence, there exists $1\leq i\leq m$ such that 
$(\SeqTransitions, \belief{})=(\SeqTransitions_i, \belief{i})$.
Hence, $\region\in \belief{i}\subseteq \belief{\mathcal{E}}$.

Conversely, given $\region\in \belief{\mathcal{E}}$, there exists $1\leq i\leq m$ 
such that $\region\in \belief{i}$. By following $\SeqTransitions_i$ in the 
\BeliefAutomaton{}, we can build a feasible run $\rho$ admitting $\SeqTransitions_i$ and
such that $\regionclass{\laststate(\run)} =\region$. 
Thus, by \cref{lem:strat-runs-equivalence}, there exists $\strategy \satisfies \metastrat$, 
such that $\run$ is \compatible{}. Again, as $\SeqTransitions_i$ contains
exactly $2k+1$ elements of the form $(\translabelchge, \activated)$, 
$\duration(\run)\in \interval$. Hence, by definition of $\beliefcontrolset{\TA}{\metastrat}$, $\region\in \belief{\interval}$.
\end{proof}

\subsection{Justifying \metaStrategies{}}

The following example shows why the results in~\cite{ADLL24} were distorted, and why adding the notion of \metaStrategy{} corrects this.

\begin{figure}
	\begin{center}
		\begin{tikzpicture}[ pta,  scale= .75, every node/.style={scale=1}]
			\node[location, initial, initial text =] at (-1, 0)(q0) {$\locinit$};
			\node[location] at (1.8, 2.5) (q1) {$\loci{1}$};
			\node[location] at (4.6, 2.5) (q2) {$\loci{2}$};
			\node[location] at (1.8, 0) (q3) {$\loci{3}$};
			\node[location, private] at (4.6,0) (qpriv) {${\locpriv}$};
			\node[location, final] at (7.4,0) (qfin) {$\locfinal$};

			\path[->] (q0) edge[bend left=30] node[above, sloped, align=right] {\shortstack{${\edgeicolor{1}}$ \\ $0 < \styleclock{x} < 1$}}  node[below,sloped] {$\styleact{a}$ } (q1)
				edge node[above, align=center] {\shortstack{${\edgeicolor{4}}$ \\ $0<\styleclock{x}<1$}}node[below, align=center] {\shortstack{$\styleact{a} $\\  $\styleclock{x} \assign 0$}} (q3)
				edge[bend right=65] node[below] {$\styleact{u_1}$} (qpriv);
			\path[->] (q1) edge node[above, align=center] {\shortstack{${\edgeicolor{2}}$ \\ $1< \styleclock{x} <2$ }} node[below] {$\styleact{b}$} (q2);
			\path[->] (q2) edge[bend left=30] node[above, sloped, align=left] {\shortstack{${\edgeicolor{3}}$ \\ $ \styleclock{x} = 2$}} node[below, sloped] {$\styleact{u_2}$} (qfin);
			\path[->] (q3) edge node [above] {\shortstack{${\edgeicolor{5}}$ \\ $ \styleclock{x} = 1$}} node[below, align=center] {$\styleact{b}$  } (qpriv);
			\path[->] (qpriv) edge node[below] {$\styleact{u_3}$} node[above]{\shortstack{${\edgeicolor{6}}$ \\ $\styleclock{x} = 2$ }} (qfin);

			\node[] at (1.8, -1.4) (e7) {${\edgeicolor{7}}$};

		\end{tikzpicture}
		\caption{TA $\exTAcounterex$}
		\label{fig:aut:counter-example}
	\end{center}
\end{figure}

\begin{example}
Let $\exTAcounterex$ be the TA depicted in \cref{fig:aut:counter-example}. The upper path, through locations $\loci{1}$ and~$\loci{2}$, allows public runs of duration of 2 units of time.
The lower path, through transition~$u_1$ and location~$\locpriv$, allows private runs of duration of 2 time units.
Finally, the middle path, through transitions $a$ and~$b$, and location ${\locpriv}$, allows private runs of durations in $(2,3)$.

Because of the reset of~$x$ on the transition between $\locinit$ and~$\loci{3}$, finding a strategy that makes $\exTAcounterex$ opaque means finding a strategy that prevent reaching ${\locpriv}$ by~$\loci{3}$.
An acceptable strategy allows~$a$ somewhen in $(0,1)$ and $b$ somewhen in $(1,2)$ but prohibits $b$ if $a$ was allowed 1 time unit before.

Then, we can define a strategy~$\strategy$ that makes $\exTAcounterex$ opaque, for example:
\[
    \strategy(\tau) =
    \left\{
        \begin{array}{ll}
            \{a\} & \mbox{ if } \tau \in (0,0.3) \\
            \{b\} & \mbox{ if } \tau \in (1.3,2) \\
            \emptyset & \mbox{ otherwise }
        \end{array}
    \right.
\]

Let $\metastrat$ the \metaStrategy{} such that $\strategy \satisfies \metastrat$. Then, the only path in $\BeliefControlAut{\TA_{counterex}}{\metastrat}$ is $\SeqTransitions = (\translabelinstant, \emptyset) \cdot (\translabelchge, \{a\}) \cdot(\translabelsame, \emptyset) \cdot $ $(\translabelchge, \emptyset) \cdot (\translabelchge, \emptyset) \cdot (\translabelsame, \{b\})$.

But this \metaStrategy{} allows to reach a belief containing a final private region but not a public one. Indeed, the run
$\run = (\locinit, 0), (0.2,\edge{_4}),(\loci{3}, 0),$ $(1, \edge{_5}), (\locpriv, 1), (1, \edge{_6}), (\locfinal, 2)$, with $\runduration{\run}= 2.2$, is feasible in $\BeliefControlAut{\exTAcounterex}{\metastrat}$ but is not \compatible{} (but there exists $\strategy' \satisfies \metastrat$ such that it is $\strategy'$-compatible.)

Finally, the conclusion is that the \metaStrategy{} (named \bStrategy{} in~\cite{ADLL24}) is less precise than the strategy can be, \ie{} the \metaStrategy{} can allow behaviours that the strategy prevent, which is why we are working here only with \metaStrategies{}. 

\end{example}

\section{Solving \opacityText{} problems through the \beliefAutomaton{}}\label{section:solution}

From \cref{theorem:opacity-leaking-belief} and \cref{cor:encint}, a \metaStrategy{} ensures \fullOpacityText{}
if it avoids \badBelief{} encountered beliefs in the controlled \beliefAutomaton{}.
Intuitively, finding such a \metaStrategy{} amounts
to solving a one-player game on the \beliefAutomaton{}, where one needs to infinitely often select an action of the form $(\translabelchge, \activated)$ (in order for time to progress) and avoid the \badBelief{} beliefs. 
We will thus translate this into solving a one-player Büchi game.

More precisely, a one-player Büchi game can be defined by a tuple
$\game = (\gameStateSet,\gameInitState, \gameEvent,\gameTransitions,\gameGood)$ where $\gameStateSet$ is a set of states, $\gameInitState\in \gameStateSet$ is the initial state,
$\gameEvent$ is a set of actions, $\gameTransitions \subseteq \gameStateSet\times \gameEvent\times \gameStateSet$ describes the transitions, and $\gameGood\subseteq \gameEvent$ is a set of ``good'' actions.
Starting from~$\gameInitState$, at each step, the player selects a transition from~$\gameTransitions$ to reach a new state.
The player wins if transitions labelled by actions from $\gameGood$ are taken infinitely often (note that Büchi games usually require that a set of ``good'' \emph{states} is visited infinitely often instead of actions, but both frameworks are trivially equivalent).

\begin{lemma}[\cite{VARDI19941}]\label{lem:solving-belief}
Deciding the existence of a winning strategy in a one-player Büchi game can be done in
\NLOGSPACE. Moreover this strategy, if it exists, can be constructed in polynomial time.
\end{lemma}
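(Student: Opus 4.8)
The plan is to reduce the one-player Büchi objective to a small number of graph-reachability queries (for the \NLOGSPACE{} bound) and to a standard strongly-connected-component computation (for the polynomial-time construction). Since the single player controls every transition, a winning play is simply an infinite path in the directed graph $(\gameStateSet, \gameTransitions)$ that starts in $\gameInitState$ and traverses transitions labelled by actions of $\gameGood$ infinitely often. As $\gameStateSet$ is finite, such a path exists if and only if there is a \emph{lasso}: a finite path from $\gameInitState$ to some state $\gameState$, followed by a cycle through $\gameState$ using at least one transition labelled by an action in $\gameGood$. Equivalently, a winning strategy exists iff there is a transition $(\gameState, a, \gameState') \in \gameTransitions$ with $a \in \gameGood$ such that (i) $\gameState$ is reachable from $\gameInitState$ and (ii) $\gameState$ is reachable from $\gameState'$, so that the good transition lies on a cycle reachable from the start.

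For the complexity bound I would exhibit the following nondeterministic logspace procedure: guess a transition $(\gameState, a, \gameState')$ with $a \in \gameGood$ (this only stores two states and a label, i.e.\ $O(\log|\game|)$ bits), then run in sequence two instances of the textbook \NLOGSPACE{} graph-reachability test, one checking $\gameInitState \leadsto \gameState$ and one checking $\gameState' \leadsto \gameState$. Each test needs to remember only the current vertex and a step counter bounded by $|\gameStateSet|$, hence uses $O(\log|\game|)$ space; running them one after the other and reusing the work tape keeps the whole procedure in \NLOGSPACE{}. Note that only the ``positive'' direction of reachability is used, so nothing as delicate as Immerman--Szelepcsényi is required.

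For the effective construction I would switch to a deterministic polynomial-time algorithm: compute the set $R$ of states reachable from $\gameInitState$ by a linear-time graph search, restrict the graph to~$R$, compute its strongly connected components (e.g.\ Tarjan's algorithm, linear time), and search for a component $C$ containing a transition $(\gameState, a, \gameState')$ with $a \in \gameGood$ and $\gameState, \gameState' \in C$ (including the case of a good self-loop). If there is none, report emptiness; otherwise extract from the search a path $\pi_0$ from $\gameInitState$ to $\gameState$ and, inside $C$, a cycle $\pi_1$ from $\gameState$ back to $\gameState$ going through the good transition. The \emph{ultimately periodic} (lasso) strategy that plays $\pi_0$ once and then repeats $\pi_1$ forever is winning, has polynomial size, and is computed in polynomial time.

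The only real subtlety — what I expect to be the main obstacle — is proving cleanly the equivalence between ``a winning strategy exists'' and ``a good transition lies on a reachable cycle''. The $\Leftarrow$ direction is immediate by exhibiting the lasso play. For $\Rightarrow$, one uses finiteness of $\gameStateSet$: any winning play visits some state $\gameState$ infinitely often, and between two sufficiently far apart visits to $\gameState$ it must, by the Büchi condition, traverse a good transition, which therefore lies on a cycle through $\gameState$ reachable from $\gameInitState$. Everything else is routine graph algorithmics.
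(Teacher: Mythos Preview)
The paper does not prove this lemma: it is stated as a known result with a citation to~\cite{VARDI19941} and no proof is given. Your argument is the standard one for one-player B\"uchi games---reduce winning to the existence of a reachable cycle carrying a $\gameGood$-labelled transition, detect it by guessing the good transition and running two \NLOGSPACE{} reachability tests, and construct the lasso strategy via an SCC decomposition---and it is correct. There is nothing to compare against in the paper itself.
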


As shown in~\cite{ALLMS23}, the \fullOpacityText{} problem for timed automata (without control) is decidable in \NEXPTIME. The ability to control the system slightly increases the complexity:

\begin{restatable}{theorem}{theoremfullproblems}\label{theorem:full-problems}
	The \fullOpacityText{} \metaStrategy{} emptiness problem is decidable
	in \EXPSPACE;
	and the \fullOpacityText{} \metaStrategy{} synthesis problem is solvable in \twoEXPTIME.
\end{restatable}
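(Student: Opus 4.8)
The plan is to reduce both problems to solving a one-player Büchi game built on top of the belief automaton $\BeliefAut{\TA}$, and then invoke \cref{lem:solving-belief}. The starting point is \cref{theorem:opacity-leaking-belief} together with \cref{cor:encint}: a \metaStrategy{} $\metastrat$ makes $\TA$ \fullOpaqueText{} iff no belief in $\beliefencountset{\TA}{\metastrat}$ is \badBelief{} for \fullOpacityText{}. Hence it suffices to decide whether the choices a \metaStrategy{} makes (a set $\substrat \in 2^{\ActionSetC}$ at each integer time, and a finite sequence of such sets on each open unit interval) can be resolved so that every encountered belief is non-leaking. By \cref{definition:controlled-belief-automaton}, a run of the controlled belief automaton is entirely determined by the sequence of action labels from $\{\translabelinstant,\translabelsame,\translabelchge\}\times 2^{\ActionSetC}$ taken, and the $\next{\metastrat}$ function forces these labels into the alternation pattern ``initial $\translabelinstant$, then an integer phase (where only a $\translabelchge$ is legal) and an open-interval phase (a possibly nonempty run of $\translabelsame$ followed by a $\translabelchge$) alternating forever''. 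So in the game the player's moves will be precisely these action labels, subject to this phase discipline.

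Concretely, I would take as arena the product of $\BeliefAut{\TA}$ with (i) a component recording the union $\belief{}^{\cup}$ of the beliefs visited since the last $(\translabelchge,\cdot)$ move, and (ii) a small ``phase'' bit telling whether the current time is an integer or lies in an open interval (to mirror $\next{\metastrat}$ and prevent degenerate/empty successors). From a state $(\belief{},\belief{}^{\cup},\mathit{phase})$: a $(\translabelsame,\activated)$ move is legal only in the open-interval phase and sends $\belief{}$ to its $\BeliefAut{\TA}$-successor $\belief{}'$, updating $\belief{}^{\cup}$ to $\belief{}^{\cup}\cup\belief{}'$; a $(\translabelchge,\activated)$ move is legal only if $\belief{}^{\cup}$ is not \badBelief{} for \fullOpacityText{}, and then it leads to $(\belief{}',\belief{}',\mathit{phase}')$ with the phase flipped; a $(\translabelinstant,\cdot)$ move is only available from the initial state and enters the integer phase at time~$0$. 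The ``good'' actions are the legal $(\translabelchge,\cdot)$ moves. By construction the sets $\belief{}^{\cup}$ committed at $\translabelchge$ moves range exactly over $\beliefencountset{\TA}{\metastrat}$ for the induced \metaStrategy{} $\metastrat$ (the length-one segments giving the integer-time beliefs, the longer ones the open-interval beliefs), and a play uses good actions infinitely often iff time diverges along the corresponding execution. Chaining \cref{lem:strat-runs-equivalence}, \cref{theorem:opacity-leaking-belief} and \cref{cor:encint} then yields: a winning strategy in this one-player Büchi game exists iff a \metaStrategy{} making $\TA$ \fullOpaqueText{} exists, and a winning strategy can be converted into such a \metaStrategy{}.

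For the complexity, note $|\regset{\TAdup}| = 2^{O(\mathrm{poly}(|\TA|))}$, so $|\BeliefAutState{\TA}| = 2^{2^{O(\mathrm{poly}(|\TA|))}}$ and the arena has $N = 2^{2^{O(\mathrm{poly}(|\TA|))}}$ states, each storable in $2^{O(\mathrm{poly}(|\TA|))}$ space (a constant number of subsets of regions). Checking a $\BeliefAut{\TA}$-transition and the ``largest set'' successor, as well as testing whether a belief is \badBelief{}, reduces to reachability in $\regaut{\TAdup}$, which has exponentially many states and is thus handled within \EXPSPACE. Feeding these subroutines into the \NLOGSPACE$(N)$ algorithm of \cref{lem:solving-belief}: it keeps only a constant number of arena states plus counters up to $N$, i.e.\ $\log N = 2^{O(\mathrm{poly}(|\TA|))}$ bits, so the whole procedure runs in nondeterministic — hence, by Savitch, deterministic — \EXPSPACE, giving the emptiness result. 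For synthesis, \cref{lem:solving-belief} produces a (positional) winning strategy in time polynomial in $N$, i.e.\ in \twoEXPTIME; unfolding it from the initial state produces a lasso, whence an ultimately periodic sequence of $\translabelchge$-separated blocks, which is precisely a finitely representable \metaStrategy{} $\metastrat$ such that $\TA$ is \fullOpaqueText{} with $\metastrat$, computed in \twoEXPTIME.

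The main obstacle, beyond the routine complexity bookkeeping (chiefly: never materialising the doubly-exponential belief automaton, and implementing the successor/leak tests inside $\regaut{\TAdup}$), is to set up the ``accumulated union'' and ``phase'' components and the exact timing of the leak-test so that the committed sets coincide with $\beliefencountset{\TA}{\metastrat}$ as defined in \cref{definition:reach-belief-controlled-belief-automaton} — in particular handling the initial $\translabelinstant$ segment, the degenerate open intervals contributing a single belief, and excluding spurious empty-belief loops — and then checking that every Büchi-winning play of the resulting game indeed corresponds to a legitimate \metaStrategy{} and vice versa.
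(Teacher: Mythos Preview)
Your proposal is correct and follows essentially the same route as the paper: a one-player Büchi game on pairs of beliefs (current belief, accumulated union since the last $\translabelchge$), with the leak-test gating the $\translabelchge$ moves and these moves serving as the Büchi ``good'' actions, then invoking \cref{lem:solving-belief} and the size bound on $\BeliefAut{\TA}$ (with Savitch) for the complexity. The paper's arena is literally $(\BeliefAutState{\TA})^2$ with the same transition rule you describe; your extra ``phase'' bit is a harmless clarifying addition (the paper relies on the structure of $\BeliefAut{\TA}$ via~$\clockextra$ to enforce the alternation implicitly), and your on-the-fly complexity bookkeeping is spelled out in more detail than the paper bothers with, but the argument is the same.
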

\begin{proof}
Given a TA~$\TA$, using the \beliefAutomaton{} of~$\TA$ as a basis, we define the
one-player Büchi game $\game = ((\BeliefAutState{\TA})^2, (\beliefInitState,\emptyset), \BeliefAutActions{\TA}, \delta_{\BeliefAutTransitions{\TA}}, \gameGood) $ where
$\gameGood = \{(\translabelany, \activated) \in \BeliefAutActions{\TA} \mid \translabelany=1 \}$ and $((\belief{1},\belief{2}), (\translabelany, \activated), (\belief{3}, \belief{4}))\in \delta_{\BeliefAutTransitions{\TA}}$ iff
\begin{itemize}
\item $(\belief{1}, (\translabelany, \activated), \belief{3})\in \BeliefAutTransitions{\TA}$,
\item $\belief{4} = \belief{3}$ if $\translabelany=\translabelchge$, 
$\belief{4} = \belief{2}\cup \belief{3}$ otherwise,
\item if $\translabelany=\translabelchge$, then $\belief{2}$ is not a \badBelief{} belief for \fullOpacityText{}.
\end{itemize}
In other words, we manipulate pairs of belief, the first corresponding to the current state
of the \beliefAutomaton{}, while the second accumulates the belief.
It is important to note that when an action labelled by a $\translabelchge$ is taken,
this second component contains the encountered belief corresponding to the current interval.
Hence, in~$\game$, we cannot leave the current interval if the
encountered belief is \badBelief{}.

First, assume that a given \TAtext{} \TA{} is \fullOpaqueText{} for a 
\metaStrategy{}~$\metastrat$. By \cref{theorem:opacity-leaking-belief} and
\cref{cor:encint}, it means $\metastrat$ avoids \badBelief{} encountered beliefs in the 
controlled \beliefAutomaton{}. Hence, it can be applied within $\game$, as in
the controlled \beliefAutomaton{}, and the removal of the transitions from \badBelief{} \intervalBeliefs{} does not affect it.
By definition of a \metaStrategy{}, $\metastrat$ selects infinitely often an action from~$\gameGood$, ensuring the strategy is winning.
Conversely, since a winning strategy of~$\game$ has to take ``good'' actions infinitely often, the number of consecutive ``non good'' actions (matching the changes in the meta strategy within an interval $(k,k+1)$) is finite, and it directly entails a \metaStrategy{}~$\metastrat$ which, being winning in~$\game$, does not encounter a \badBelief{} belief. Hence,
again by \cref{theorem:opacity-leaking-belief} and
\cref{cor:encint}, \TA{} is \fullOpaqueText{} with~$\metastrat$.

	By \cref{lem:solving-belief}, the existence of a winning strategy in this game is decidable in \NLOGSPACE in the size of the game. As $\game$ is based on the \beliefAutomaton{}, it consists in a form of determinisation of the labelled Region Automaton. The latter is exponential in the size of the \TAtext{}, and the determinisation
can produce a second exponential, hence $\game$ is at most doubly exponential.
Hence, solving $\game$ is in \EXPSPACE (the non-determinism can freely be removed thanks to Savitch theorem, which implies that \EXPSPACE = \NEXPSPACE). Moreover, through the computation of a solution in~$\game$, one directly obtains that the \fullOpacityText{} \metaStrategy{} synthesis problem is solvable in \twoEXPTIME.
	\end{proof}
\section{Weak and existential \opacityText{}}\label{section:weak-existential-opacity}

In this Section as well as in \cref{section:robust-extensions}, we will consider 
a few variants opacity notions. In most cases, the structure of the proofs are similar to 
the one for \fullOpacityText{} and as such is not repeated here.

More precisely, in this Section we study two other versions of opacity~\cite{ALLMS23}, namely \weakOpacityText{} (in which it is harmless that the attacker deduces that the private location was \emph{not} visited) and \existentialOpacityText{} (in which we are simply interested in the \emph{existence} of one execution time for which opacity is ensured). 
In \cref{section:robust-extensions} we will consider variants relaxing the accuraty of the measure of the attacker.

\subsection{Definitions}\label{appendix:def-weak-existential}

We recall definitions of weak and existential \opacityText{} from \cite{ALLMS23}.

\begin{definition}[Weak \opacityText{}]\label{def:weak-opacity}
	A \TAtext{}~$\TA$ is \emph{\weakOpaqueText{}} when $\PrivDurVisit{\TA} \subseteq \PubDurVisit{\TA}$.
\end{definition}
\begin{definition}[Existential \opacityText{}]\label{def:exists-opacity}
	A \TAtext{}~$\TA$ is \emph{existentially \opaqueText{}} (or \emph{\existentialOpaqueText{}}) when $\PrivDurVisit{\TA} \cap \PubDurVisit{\TA} \neq \emptyset$.
\end{definition}

That is, the TA is \weakOpaqueText{} whenever, for any run of duration~$\paramd$ reaching a final location after visiting~$\locpriv$, there exists another run of the same duration reaching a final location but not visiting the private location.
In addition, whenever there is at least one private run such that there exists a public run of the same duration, the TA is \existentialOpaqueText{}.

\begin{example}\label{example:weak-opacity}
	We have seen in \cref{example:full-weak-opacity} that $\TAi{1}$ (given in \cref{fig:aut:non-strategy-opaque}) is not \fullOpaqueText{}.
	However, $\TAi{1}$ is \existentialOpaqueText{} since, for example, $1 \in \PrivDurVisit{\TA} \cap \PubDurVisit{\TA} \neq \emptyset$.
	Furthermore, since we can reach $\locfinal$ at any time without visiting~$\locpriv$ (and therefore $\PubDurVisit{\TAi{1}} = [0, \infty)$), it holds that $\PrivDurVisit{\TAi{1}} \subseteq \PubDurVisit{\TAi{1}}$ and $\TAi{1}$ is therefore \weakOpaqueText{}.
\end{example}
\begin{example}[\weakOpaqueText{} TA]
	Consider again the TA~$\exTAopaque$ in \cref{fig:aut:strategy-opaque}.
	Recall from \cref{example:opaque-TA} that strategy~$\strategy_1$ is such that
	$\forall \tau \in \setRgeqzero , \strategy(\tau) = \{ a \}$, \ie{} $a$ is allowed anytime.
	Recall that $\PrivDurVisitStrat{\TA}{\strategy_1} = \setN$ while $\PubDurVisitStrat{\TA}{\strategy_1} = \setRgeqzero$.
	Therefore $\PrivDurVisitStrat{\TA}{\strategy_1} \subseteq \PubDurVisitStrat{\TA}{\strategy_1}$, and hence $\exTAopaque$ is \weakOpaqueText{} with~$\strategy_1$.
\end{example}
\paragraph*{Strategies and weak and existential \opacityText{}}

We lift \cref{def:opacity-strategy} to weak and existential \opacityText{}.

\begin{definition}[Weak and existential \mbox{\opacityText{}} with a strategy]\label{def:opacity-strategy-weak}
	Given a strategy~$\strategy$, a~TA~$\TA$ is \emph{\weakOpaqueText{} with~$\strategy$} whenever $\PrivDurVisitStrat{\TA}{\strategy} \subseteq \PubDurVisitStrat{\TA}{\strategy}$.

	In addition, $\TA$ is \emph{\existentialOpaqueText{} with~$\strategy$} whenever
	$\PrivDurVisitStrat{\TA}{\strategy} \cap \PubDurVisitStrat{\TA}{\strategy} \neq \emptyset$.
\end{definition}

Similarly, we now lift \cref{def:opacity-metastrategy} to weak and existential \opacityText{}.

\begin{definition}[Weak and existential \opacityText{} with a \metaStrategy{}]\label{def:opacity-metastrategy-weak}
	Given a \metaStrategy{}~$\metastrat$, a TA~$\TA$ is \emph{\weakOpaqueText{} with~$\metastrat$} whenever
	$\PrivDurVisitStrat{\TA}{\metastrat} \subseteq \PubDurVisitStrat{\TA}{\metastrat}$.

	In addition, $\TA$ is \emph{\existentialOpaqueText{} with~$\metastrat$} whenever
	$\PrivDurVisitStrat{\TA}{\metastrat} \cap \PubDurVisitStrat{\TA}{\metastrat} \neq \emptyset$.
\end{definition}
\paragraph*{Problems}
We consider the following variations of the emptiness problems defined in \cref{paragraph:problems}.

\defProblem
	{\WeakOpacityText{} (resp.\ \ExistentialOpacityText{}) \metaStrategy{} emptiness problem}
	{A TA $\TA$}
	{Decide the emptiness of the set of \metaStrategies{} $\metastrat$ such that $\TA$ is \weakOpaqueText{} (resp.\ \existentialOpaqueText{}) with~$\metastrat$.}

Similarly, we define as follows their synthesis counterpart:

\defProblem
	{\WeakOpacityText{} (resp.\ \ExistentialOpacityText{}) \metaStrategy{} synthesis problem}
	{A TA $\TA$}
	{Synthesize a \metaStrategy{} $\metastrat$ such that $\TA$ is \weakOpaqueText{} (resp.\ \existentialOpaqueText{}) with~$\metastrat$.}
\subsection{Results for \existentialOpacityText{}}\label{appendix:results-existential}

First, we focus on \existentialOpacityText{} for which control, as understood here, is useless.
Indeed, the strategy of the controller can only \emph{prevent} some behaviour, \ie{} remove possible executions.
However, \existentialOpacityText{} wonders whether there \emph{exists} an opaque time in the~TA, so adding a controller (that only removes execution times) cannot change the result.
The following theorem proves this claim.

\begin{theorem}
	Let $\TA$ be a \TAtext{}.
	$\TA$ is \existentialOpaqueText{} iff
	there exists a strategy $\strategy$ such that $\TA$ controlled by strategy~$\strategy$ is \existentialOpaqueText{}.
\end{theorem}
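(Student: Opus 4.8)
The statement is an equivalence. The interesting direction is that control cannot help, so I would prove both implications directly, with the bulk of the work being the ``only if'' direction. The key intuition, already stated in the paragraph preceding the theorem, is that a strategy only \emph{restricts} behaviour: adding a controller removes runs but never creates new ones. So for the hard direction I would start from a strategy $\strategy$ witnessing \existentialOpacityText{} with $\strategy$ and argue that the same witnessing pair of durations already exists in the uncontrolled TA.

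\textbf{Proof plan.}

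\emph{($\Longleftarrow$).} Suppose there is a strategy $\strategy$ such that $\TA$ controlled by $\strategy$ is \existentialOpaqueText{}, i.e.\ $\PrivDurVisitStrat{\TA}{\strategy} \cap \PubDurVisitStrat{\TA}{\strategy} \neq \emptyset$. Pick $\tau$ in this intersection. Then there is a private \compatible{} run $\run_1$ with $\runduration{\run_1} = \tau$ and a public \compatible{} run $\run_2$ with $\runduration{\run_2} = \tau$. The first step is to observe that every \compatible{} run is in particular a run of $\TA$ (the semantics of the controlled TA only adds the side condition $\action \in \strategy(\tau)\cup\ActionSetU$ to discrete transitions and the absolute-time component to states; dropping these yields a genuine run of $\semantics{\TA}$ of the same duration, visiting the same locations). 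Hence $\run_1 \in \PrivVisit{\TA}$ and $\run_2 \in \PubVisit{\TA}$ with $\runduration{\run_1} = \runduration{\run_2} = \tau$, so $\tau \in \PrivDurVisit{\TA} \cap \PubDurVisit{\TA}$, giving \existentialOpacityText{} of $\TA$.

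\emph{($\Longrightarrow$).} Suppose $\TA$ is \existentialOpaqueText{}, so there is $\tau \in \PrivDurVisit{\TA} \cap \PubDurVisit{\TA}$, witnessed by a private run $\run_1$ and a public run $\run_2$, both of duration $\tau$. I want a single strategy $\strategy$ under which both runs remain \compatible{}. Here the natural choice is the maximally permissive strategy $\strategy$ with $\strategy(\tau') = \ActionSetC$ for all $\tau' \in \setRgeqzero$, i.e.\ every controllable action is always enabled. Then the side condition $\action \in \strategy(\tau')\cup\ActionSetU$ is vacuously satisfied at every timestamp, so every run of $\TA$ is \compatible{}; in particular $\run_1$ and $\run_2$ are \compatible{}, witnessing $\tau \in \PrivDurVisitStrat{\TA}{\strategy} \cap \PubDurVisitStrat{\TA}{\strategy}$. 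Thus $\TA$ controlled by $\strategy$ is \existentialOpaqueText{}.

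\textbf{Main obstacle.} There isn't really a deep obstacle: the argument is essentially bookkeeping about the two semantics. The only point needing care is the correspondence between runs of $\semantics{\TA}$ and \compatible{} runs of the controlled semantics — one must check that stripping (resp.\ adding) the absolute-time component and the $\strategy$-enabledness condition preserves being a private/public run and preserves the duration, which is immediate from \cref{definition-TA-semantics} and the definition of the controlled semantics. One might also note in passing that if one wants to stay within \finitelyvarying{} strategies, the constant strategy $\strategy(\tau')=\ActionSetC$ is trivially \finitelyvarying{}, so the equivalence holds verbatim in that restricted setting as well.
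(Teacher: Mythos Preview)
Your proof is correct and follows essentially the same approach as the paper: the maximally permissive strategy $\strategy(\tau')=\ActionSetC$ for the forward direction, and the observation that every \compatible{} run is already a run of the uncontrolled $\TA$ for the converse. Your closing remark about \finitelyvarying{} strategies even anticipates the corollary that the paper states immediately after this theorem.
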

\begin{proof}
	\begin{itemize}
		\item[$\implies$]  Assume that $\TA$ is \existentialOpaqueText{}.
		Therefore, a strategy enabling all controllable actions at all times does not restrict the behaviour, and therefore the controlled TA remains \existentialOpaqueText{}.
		Concretely, let $\strategy$ be such that $\forall \tau \in \setRgeqzero : \strategy(\tau) = \ActionSetC$.
		Then, $\PrivDurVisitStrat{\TA}{\strategy} = \PrivDurVisit{\TA}$ and $\PubDurVisitStrat{\TA}{\strategy} = \PubDurVisit{\TA}$, and hence because $\TA$ is \existentialOpaqueText{}, $\PrivDurVisit{\TA} \cap \PubDurVisit{\TA} \neq \emptyset \implies \PrivDurVisitStrat{\TA}{\strategy} \cap \PubDurVisitStrat{\TA}{\strategy} \neq \emptyset$, hence the $\TA$ controlled by strategy~$\strategy$ is \existentialOpaqueText{}.

		\item[$\impliedby$]
		Let $\TA$ be a \TAtext{} and $\strategy$ a strategy such that $\TA$ controlled by strategy~$\strategy$ is \existentialOpaqueText{}.
		From \cref{def:opacity-strategy-weak}, we have $\PrivDurVisitStrat{\TA}{\strategy} \cap \PubDurVisitStrat{\TA}{\strategy} \neq \emptyset$.
		That is, there exists a duration $\paramd$ such that there exist a private run~$\run$ and a public run~$\run'$, both of duration~$\paramd$.
		Let $\run = (\locinit, \ClocksZero), (\timeEdge{0}), \cdots, (\timeEdge{{n-1}}), (\loci{n}, \clockval)$.
		For all $0 \leq i < n$, $e_i = (\loc_i, \guard_i, \action_i, \resets_i, \loc_i')$,
		with $\action_i \in \strategy(\sum_{j = 0}^{j \leq i}d_j) \cup \ActionSetU$.
		As $ \strategy(\sum_{j = 0}^{j \leq i}d_j) \cup \ActionSetU \subseteq \ActionSetC \cup \ActionSetU = \ActionSet $, the action $\action_i$ is available when $\TA$ is not controlled and $\run \in \PrivVisit{\TA}$.
		With the same reasoning for~$\run'$, we have that $\run' \in \PubVisit{\TA}$---and therefore $\TA$ is \existentialOpaqueText{}.
	\end{itemize}
\end{proof}

The same reasoning can apply to \nonFinitelyvarying{} strategies, as well as to \metaStrategies{}:

\begin{corollary}
	Let $\TA$ be a \TAtext{}.
	\begin{itemize}
		\item $\TA$ is \existentialOpaqueText{} iff
	there exists a \nonFinitelyvarying{} strategy $\strategy$ such that $\TA$ controlled by strategy~$\strategy$ is \existentialOpaqueText{}.
		\item $\TA$ is \existentialOpaqueText{} iff
	there exists a \metaStrategy{}~$\metastrat$ such that $\TA$ is \existentialOpaqueText{} with \metaStrategy~$\metastrat$.
	\end{itemize}
\end{corollary}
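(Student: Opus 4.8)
The plan is to obtain both equivalences by replaying the argument of the theorem just proven, keeping track of where (if anywhere) the implicit \finitelyvarying{} assumption was actually used. The guiding observation is the one already exploited there: a (meta\nobreakdash-)strategy can only \emph{forbid} behaviours, never create new ones, so controlling a TA can never turn a non-\existentialOpaqueText{} TA into an \existentialOpaqueText{} one; and conversely the degenerate ``enable everything'' (meta\nobreakdash-)strategy leaves every run of $\TA$ available.

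For the first item, the backward direction carries over verbatim: given \emph{any} strategy $\strategy$ (whether or not \finitelyvarying{}) such that $\TA$ controlled by $\strategy$ is \existentialOpaqueText{}, every $\strategy$-compatible run is in particular a run of $\TA$, hence $\PrivDurVisitStrat{\TA}{\strategy} \subseteq \PrivDurVisit{\TA}$ and $\PubDurVisitStrat{\TA}{\strategy} \subseteq \PubDurVisit{\TA}$; a duration witnessing $\PrivDurVisitStrat{\TA}{\strategy} \cap \PubDurVisitStrat{\TA}{\strategy} \neq \emptyset$ therefore also witnesses $\PrivDurVisit{\TA} \cap \PubDurVisit{\TA} \neq \emptyset$. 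This step uses nothing about the strategy's behaviour over time. For the forward direction, the constant strategy $\strategy$ with $\strategy(\tau) = \ActionSetC$ for all $\tau \in \setRgeqzero$ is itself \finitelyvarying{}, hence in particular a strategy in the broader, unrestricted class, and it satisfies $\PrivDurVisitStrat{\TA}{\strategy} = \PrivDurVisit{\TA}$ and $\PubDurVisitStrat{\TA}{\strategy} = \PubDurVisit{\TA}$, so it transfers existential opacity of $\TA$ to the controlled system. Thus the theorem's proof applies unchanged.

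For the second item, I would use the ``enable everything'' \metaStrategy{} $\metastrat$ defined by $\metastrat([n,n]) = \ActionSetC$ and $\metastrat((n,n+1)) = \ActionSetC$ (the length-one sequence) for every $n \in \setN$. The constant strategy $\strategy$ with $\strategy(\tau)=\ActionSetC$ satisfies $\strategy \satisfies \metastrat$, so $\PrivDurVisit{\TA} = \PrivDurVisitStrat{\TA}{\strategy} \subseteq \PrivDurVisitStrat{\TA}{\metastrat}$ and likewise for the public durations; if $\TA$ is \existentialOpaqueText{}, a duration in $\PrivDurVisit{\TA} \cap \PubDurVisit{\TA}$ then lies in $\PrivDurVisitStrat{\TA}{\metastrat} \cap \PubDurVisitStrat{\TA}{\metastrat}$, giving the forward direction. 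For the converse I would unfold the definition $\PrivDurVisitStrat{\TA}{\metastrat} = \{\tau \mid \exists \strategy \satisfies \metastrat,\ \tau \in \PrivDurVisitStrat{\TA}{\strategy}\}$ (and symmetrically for the public side): a witness $\tau \in \PrivDurVisitStrat{\TA}{\metastrat} \cap \PubDurVisitStrat{\TA}{\metastrat}$ is realised by a private run compatible with some $\strategy_1 \satisfies \metastrat$ and by a public run compatible with some (possibly different) $\strategy_2 \satisfies \metastrat$; both are runs of $\TA$, hence $\tau \in \PrivDurVisit{\TA} \cap \PubDurVisit{\TA}$.

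The only point requiring a little care — and the nearest thing to an obstacle, though it is really just bookkeeping — is that in the \metaStrategy{} setting the private and public witnesses of existential opacity may come from \emph{two different} strategies satisfying $\metastrat$, so one cannot literally reuse the ``one run along one fixed strategy'' phrasing of the theorem's proof; instead one argues at the level of the sets $\PrivDurVisitStrat{\TA}{\metastrat}$ and $\PubDurVisitStrat{\TA}{\metastrat}$ directly from their definition. No other difficulty arises, since neither direction relies on any structural property of the \beliefAutomaton{}.
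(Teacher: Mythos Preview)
Your argument is correct and matches the paper's own justification, which consists solely of the sentence ``The same reasoning can apply to \nonFinitelyvarying{} strategies, as well as to \metaStrategies{}''; you have simply fleshed that sentence out, including the one genuinely new observation for the \metaStrategy{} case that the private and public witnesses may come from two different strategies satisfying~$\metastrat$. One small caveat: you read the first item as quantifying over the \emph{unrestricted} class of strategies, whereas literally it asks for a strategy that is \emph{not} \finitelyvarying{}---under that literal reading the constant strategy $\strategy(\tau)=\ActionSetC$ does not qualify, but the fix is trivial (perturb $\strategy$ on a sequence of isolated instants avoiding the finitely many action times of the two witnessing runs), and the paper's own ``same reasoning'' remark indicates your interpretation is the intended one.
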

\subsection{Results for \weakOpacityText{}}\label{appendix:result-weak}

We now address \weakOpacityText{} which can be derived from our analysis of  \fullOpacityText{}.
We adapt to \weakOpacityText{} the concept of \badBelief{} belief from \cref{definition:leaking}.

\begin{definition}[\BadBelief{} belief for \weakOpacityText{}]
	A belief $\belief{}$ is \emph{\badBelief{} for \weakOpacityText} when
	\begin{itemize}
		\item $(\belief{} \cap \finalclass{\TA} \cap \secret{\TA} \neq \emptyset)$, and
		\item $(\belief{} \cap \finalclass{\TA} \cap \notsecret{\TA} = \emptyset)$.
	\end{itemize}
\end{definition}

We now adapt \cref{theorem:opacity-leaking-belief} to \weakOpacityText{}.

\begin{lemma}[Beliefs characterization for \weakOpacityText{}]\label{lem:beliefs-weak-opacity}
	A TA $\TA$ is \weakOpaqueText{} with a \metaStrategy{} $\metastrat$ whenever, for all $\belief{} \in \beliefcontrolset{\TA}{\metastrat}$, $\belief{}$ is not \badBelief{} for \weakOpacityText{}.
\end{lemma}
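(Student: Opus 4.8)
The proof will follow the structure of the proof of \cref{theorem:opacity-leaking-belief}, which characterises \fullOpacityText{} in terms of \badBelief{} \intervalBeliefs{}, but only one implication is needed here since \weakOpacityText{} is a one-sided inclusion $\PrivDurVisitStrat{\TA}{\metastrat} \subseteq \PubDurVisitStrat{\TA}{\metastrat}$. First I would fix a \metaStrategy{}~$\metastrat$, assume that no \intervalBelief{} in $\beliefcontrolset{\TA}{\metastrat}$ is \badBelief{} for \weakOpacityText{}, and take an arbitrary $\tau \in \PrivDurVisitStrat{\TA}{\metastrat}$. By definition there is a strategy $\strategy \satisfies \metastrat$ and a \compatible{} private run $\run \in \PrivVisitStrat{\TA}{\strategy}$ with $\runduration{\run} = \tau$. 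Writing $\region = \class{\laststate(\run)}$, this region lies in $\finalclass{\TA} \cap \secret{\TA}$ and (by \cref{cor:encint}) is contained in the \intervalBelief{} $\beliefcontrol{k}{\metastrat}$ or $\beliefcontrol{k+}{\metastrat}$ corresponding to~$\tau$. Since this \intervalBelief{} is \emph{not} \badBelief{} for \weakOpacityText{}, and it meets $\finalclass{\TA} \cap \secret{\TA}$, the definition forces it to also meet $\finalclass{\TA} \cap \notsecret{\TA}$: there is a region $\region' \in \belief{} \cap \finalclass{\TA} \cap \notsecret{\TA}$.

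The next step is to turn the existence of such a~$\region'$ into an actual public duration equal to~$\tau$. If $\tau = k \in \setN$, then $\region' \in \beliefcontrol{k}{\metastrat} \cap \finalclass{\TA} \cap \notsecret{\TA}$, so by definition of $\beliefcontrol{k}{\metastrat}$ there exist $\strategy' \satisfies \metastrat$ and a \compatiblearg{\strategy'}{} public run $\run'$ of duration $k = \tau$ ending in~$\region'$, whence $\tau \in \PubDurVisitStrat{\TA}{\metastrat}$. If $\tau \in (k, k+1)$, we only obtain directly a public run $\run''$ with $\class{\laststate(\run'')} = \region' \in \beliefcontrol{k+}{\metastrat}$, hence of duration in $(k,k+1)$ but not necessarily equal to~$\tau$. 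At this point I would reuse verbatim the $\shrink$ construction from the proof of \cref{theorem:opacity-leaking-belief}: define $\shrink$ (and its inverse) from the fractional parts $\fract{\runduration{\run''}}$ and $\fract{\runduration{\run}}$, build $\strategy'(t) = \strategy''(\shrink(t))$ and a rescaled run $\run'$ that mimics $\run''$ with transition times pushed through $\shrink^{-1}$; the same arguments show $\strategy' \satisfies \metastrat$, that $\run'$ is a genuine run of $\TAdup$ (guards and invariants are preserved because $\shrink^{-1}$ preserves integer-threshold comparisons), that $\run'$ is \compatiblearg{\strategy'}{}, that $\run'$ avoids $\locpriv$ (since it visits the same locations as~$\run''$), and that $\runduration{\run'} = \runduration{\run} = \tau$. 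Hence $\tau \in \PubDurVisitStrat{\TA}{\metastrat}$ in this case too, establishing $\PrivDurVisitStrat{\TA}{\metastrat} \subseteq \PubDurVisitStrat{\TA}{\metastrat}$.

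The bulk of the technical work — the $\shrink$ rescaling and the verification that it yields a valid, compatible, private-location-avoiding run of the same duration — is identical to what already appears in the proof of \cref{theorem:opacity-leaking-belief}, so in the write-up I would simply invoke it rather than reprove it; the only genuinely new observation is the elementary fact that for \weakOpacityText{} we need only the forward direction of the belief characterisation, and correspondingly the \badBelief{} condition becomes asymmetric (private region present, public region absent). The main obstacle, such as it is, is purely bookkeeping: making sure that when quoting the $\shrink$ argument one is careful that $\run''$ (and hence $\run'$) is a \emph{public} run, which is immediate since the construction only reparametrises time and never alters which locations are visited. No genuinely new difficulty arises beyond what \cref{theorem:opacity-leaking-belief} already handles.
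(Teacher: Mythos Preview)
Your proposal is correct and takes essentially the same approach as the paper, whose proof simply reads ``similarly to the proof of \cref{theorem:opacity-leaking-belief}''; you have spelled out precisely that adaptation, including the reuse of the $\shrink$ rescaling argument. One small remark: your justification that ``only one implication is needed since \weakOpacityText{} is a one-sided inclusion'' conflates two different things---the one-sided inclusion explains why, within the proof, you only need to match private durations by public ones (not vice versa), but it does not by itself say anything about whether the characterisation lemma should be an ``if'' or an ``iff''; the statement as written uses ``whenever'' so your single direction suffices, but note that the converse (if \weakOpaqueText{} then no \badBelief{} \intervalBelief{}) is equally easy and is implicitly needed when the lemma is invoked in \cref{theorem:weak-problems}.
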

\begin{proof}
	This proof can be achieved similarly to the proof of \cref{theorem:opacity-leaking-belief}. %
 \end{proof}
\begin{theorem}\label{theorem:weak-problems}
	The \weakOpacityText{} \metaStrategy{} emptiness problem is decidable in \EXPSPACE; and the \weakOpacityText{} \metaStrategy{} synthesis problem is solvable in \twoEXPTIME.
\end{theorem}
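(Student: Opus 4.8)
The plan is to mirror the proof of \cref{theorem:full-problems} almost verbatim, the only change being that the construction of the associated one-player Büchi game is gated by the notion of \badBelief{} belief for \weakOpacityText{} instead of for \fullOpacityText{}. Concretely, given a TA~$\TA$, I would build the game $\game = ((\BeliefAutState{\TA})^2, (\beliefInitState,\emptyset), \BeliefAutActions{\TA}, \delta_{\BeliefAutTransitions{\TA}}, \gameGood)$ with $\gameGood = \{(\translabelany, \activated) \in \BeliefAutActions{\TA} \mid \translabelany = \translabelchge\}$ and $((\belief{1},\belief{2}), (\translabelany,\activated), (\belief{3},\belief{4})) \in \delta_{\BeliefAutTransitions{\TA}}$ iff $(\belief{1}, (\translabelany,\activated), \belief{3}) \in \BeliefAutTransitions{\TA}$, $\belief{4} = \belief{3}$ if $\translabelany = \translabelchge$ and $\belief{4} = \belief{2} \cup \belief{3}$ otherwise, and, when $\translabelany = \translabelchge$, the accumulated belief $\belief{2}$ is \emph{not} \badBelief{} for \weakOpacityText{}. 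As in the \fullOpacityText{} case, the first component tracks the current state of the \beliefAutomaton{} and the second accumulates the current \intervalBelief{}, so a $\translabelchge$-move is forbidden exactly when the \intervalBelief{} just traversed leaks.

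The correctness argument is then identical in shape to the one for \fullOpacityText{}, but invokes \cref{lem:beliefs-weak-opacity} (the \weakOpacityText{} analogue of \cref{theorem:opacity-leaking-belief}) together with \cref{cor:encint}. If $\TA$ is \weakOpaqueText{} with a \metaStrategy{}~$\metastrat$, then $\metastrat$ avoids encountered beliefs that are \badBelief{} for \weakOpacityText{}, hence $\metastrat$ can be applied within~$\game$ (the removal of transitions out of \badBelief{} \intervalBeliefs{} does not affect it) and, selecting an action of $\gameGood$ infinitely often, yields a winning strategy. Conversely, a winning strategy of~$\game$ takes ``good'' actions infinitely often, so only finitely many ``non-good'' actions occur between two consecutive $\translabelchge$-moves; this yields a \metaStrategy{} which, being winning, never encounters a belief that is \badBelief{} for \weakOpacityText{}, and hence by \cref{lem:beliefs-weak-opacity} and \cref{cor:encint} makes $\TA$ \weakOpaqueText{}.

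For the complexity, the analysis of \cref{theorem:full-problems} carries over unchanged: $\game$ is built from the \beliefAutomaton{}, itself a determinisation of the (exponential-size) labelled region automaton, so $\game$ has at most doubly exponential size; by \cref{lem:solving-belief} deciding the existence of a winning strategy is in \NLOGSPACE in the size of~$\game$, hence in \EXPSPACE (Savitch's theorem lets us drop the non-determinism, since \EXPSPACE $=$ \NEXPSPACE). Extracting a winning strategy of~$\game$ in polynomial time in its size then gives a \metaStrategy{} in \twoEXPTIME, settling the \weakOpacityText{} \metaStrategy{} synthesis problem.

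The only point requiring a separate argument is \cref{lem:beliefs-weak-opacity} itself, which the excerpt states and whose proof is claimed to follow that of \cref{theorem:opacity-leaking-belief}; I do not expect a genuine obstacle there. The one-directional inclusion $\PrivDurVisitStrat{\TA}{\metastrat} \subseteq \PubDurVisitStrat{\TA}{\metastrat}$ is precisely the direction handled by the $\impliedby$ part of the proof of \cref{theorem:opacity-leaking-belief}---including the $\shrink$-function construction transforming a private run of non-integer duration into a public run of the same duration---so that argument transports verbatim, while the converse direction is vacuous for weak opacity. Thus the main care needed is simply to verify that the condition defining \badBelief{} beliefs for \weakOpacityText{} (which, unlike the \fullOpacityText{} one, is not symmetric in $\secret{\TA}$ and $\notsecret{\TA}$) is the correct gate in the game, which is exactly what \cref{lem:beliefs-weak-opacity} guarantees.
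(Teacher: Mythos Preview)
Your proposal is correct and follows exactly the approach the paper takes: the paper's own proof of \cref{theorem:weak-problems} is a one-line remark that the argument of \cref{theorem:full-problems} goes through verbatim once \cref{lem:beliefs-weak-opacity} replaces \cref{theorem:opacity-leaking-belief}. You have simply spelled out the details of that replacement (game construction, correctness via \cref{cor:encint}, and complexity analysis) more explicitly than the paper does.
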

\begin{proof}
	This proof can be achieved similarly to the proof of \cref{theorem:full-problems}, using \cref{lem:beliefs-weak-opacity} in place of \cref{theorem:opacity-leaking-belief}. 
\end{proof}
\section{Extension: robust definitions of ET-opacity}\label{section:robust-extensions}

So far, the attacker needed to measure the execution time with an infinite precision---this is often unrealistic in practice~\cite{DDMR04,BMS13}.
We therefore consider \mbox{variants} of opacity where intervals of non-opaque execution times can be considered acceptable as long as they are hard to detect, for instance by being of size~0, \ie{} reduced to a point (note that there can be an infinite number of them).

In order to formally define these new notions, we introduce new notations:
given a set~$S$, then let $\closed{S}$ denote the \emph{closure} of~$S$ (\ie{} the smallest closed set containing~$S$)
and let $\open{S}$ denote the \emph{interior} of~$S$ (\ie{} the largest open set contained in~$S$).
Let $\oplus$ denotes the exclusive \texttt{OR} operator such that, for two sets $A$ and $B$, $A \oplus B = \set{v \mid v \in (A \cup B) \setminus (A \cap B)}$.

We introduce two new notions of opacity:
\begin{oneenumerate}%
	\item \emph{\almostFullOpacityText{}}, where every punctual opacity violation is ignored, and
	\item \emph{\closedFullOpacityText{}}, where a punctual violation is ignored only if it is followed or preceded by an opaque interval.
\end{oneenumerate}%
\subsection{Definitions and problems}
\subsubsection{\almostFullOpacityText{}}

Let us first define \emph{\almostFullOpacityText{}}, where every punctual opacity violation is ignored.
That is, a \TAtext{} is \almostFullOpaqueText{} whenever
all non-opaque durations are isolated from each other;
that is, since these non-opaque durations must be punctual, then the interior of intervals of non-opaque durations must be empty.
\begin{definition}[\AlmostFullOpacityText{}]\label{def:almost-full-opacity}
	A \TAtext{} \TA{} is \almostFullOpaqueText{} when
	$\open{\PrivDurVisit{\TA} \oplus \PubDurVisit{\TA}} = \emptyset$.
\end{definition}
\subsubsection{\closedFullOpacityText{}}

Let us now define \emph{\closedFullOpacityText{}}, where a punctual violation is ignored only if it is followed or preceded by an opaque interval.
That is, we say a \TAtext{} is \closedFullOpaqueText{} when the closure of the private durations equals the closure of the public durations.
\begin{definition}[\ClosedFullOpacityText{}]\label{def:closed-full-opacity}
	A \TAtext{} \TA{} is \closedFullOpaqueText{} when
	$\closed{\PrivDurVisit{\TA}} = \closed{\PubDurVisit{\TA}}$.
\end{definition}

A difference between both definitions is, for example, whenever a non-opaque duration is such that the immediately neighbouring durations do not correspond to any accepting run (neither public nor private).
In that case, this non-opaque duration will be left out by \almostFullOpacityText{}, but will still be considered non-opaque by \closedFullOpacityText{}.

We can consider \AlmostFullOpacityText{} and \ClosedFullOpacityText{} with a 
\metaStrategy{} $\metastrat$ by replacing 
$\PrivDurVisit{\TA}$ by $\PrivDurVisitStrat{\TA}{\metastrat}$
and $\PubDurVisit{\TA}$ by $\PubDurVisitStrat{\TA}{\metastrat}$.

\newcommand{\Aquatre}{\ensuremath{\TAi{3}}}
\subsubsection{Example}
\begin{example}\label{example:almost-opacity}
	Let $\TAi{2}$ be a \TAtext{} such that $\PrivDurVisit{\TAi{2}} = [0,2]$ and $\PubDurVisit{\TAi{2}} = (0,1) \cup (1,2)$. $\TAi{2}$ is not \fullOpaqueText{} (but it is \weakOpaqueText{}).
	But	$\open{\PrivDurVisit{\TAi{2}} \oplus \PubDurVisit{\TAi{2}}} = \open{\set{0} \cup \set{1} \cup \set{2}} = \emptyset$, so $\TAi{2}$ is \almostFullOpaqueText{}.
	Note that look at the interior of private and public interval would not be equivalent: $\open{\PrivDurVisit{\TAi{2}}} \neq \open{\PubDurVisit{\TAi{2}}}$ as $(0,2) \neq ((0,1) \cup (1,2))$.
	Moreover, $\closed{\PrivDurVisit{\TAi{2}}} = \closed{\PubDurVisit{\TAi{2}}} = [0,2]$ so $\TAi{2}$ is \closedFullOpaqueText{}.

	Now, let $\Aquatre$ be a \TAtext{} such that $\PrivDurVisit{\Aquatre} = (0,1) \cup \set{2}$ and $\PubDurVisit{\Aquatre} = (0,1)$. $\Aquatre$ is not \fullOpaqueText{}.
	$\open{\PrivDurVisit{\Aquatre} \oplus \PubDurVisit{\Aquatre}} = \emptyset$ so $\Aquatre$ is \almostFullOpaqueText{}.
	$\closed{\PrivDurVisit{\Aquatre}} = [0,1] \cup \set{2} \neq \closed{\PubDurVisit{\Aquatre}} = [0,1]$ so $\Aquatre$ is not \closedFullOpaqueText{}.
\end{example}

\begin{figure}[tb]
	\begin{center}
	   \begin{subfigure}{0.22\textwidth}
		\begin{tikzpicture}[scale=.8, every node/.style={scale=0.8}]
			\node[below] (a) at (0,0) {0};
			\node[below] (b) at (1,0) {1};
			\node[below] (c) at (2,0) {2};
			\node[below] (d) at (3,0) {3};

			\draw[->] (0,0) -- (4,0);
			\draw (0,.05) -- (0,-.05);
			\draw (1,.05) -- (1,-.05);
			\draw (2,.05) -- (2,-.05);
			\draw (3,.05) -- (3,-.05);

			\draw[very thick,red] (0,.4) -- (2,.4);
			\draw[red] (.05,.5) -- (0,.5) -- (0,.3) -- (.05, .3);
			\draw[red] (1.95,.5) -- (2,.5) -- (2,.3) -- (1.95, .3);

			\draw[very thick,blue] (0.05,-.5) -- (.95,-.5);
			\draw[blue] (0,-.6) -- (0.05,-.6) -- (0.05,-.4) -- (0, -.4);
			\draw[blue] (.99,-.6) -- (.95,-.6) -- (.95,-.4) -- (.99, -.4);

			\draw[very thick,blue] (1.05,-.5) -- (1.95,-.5);
			\draw[blue] (1.01,-.6) -- (1.05,-.6) -- (1.05,-.4) -- (1.01, -.4);
			\draw[blue] (2,-.6) -- (1.95,-.6) -- (1.95,-.4) -- (2, -.4);
			\end{tikzpicture}
		   \caption{}
		   \label{fig:time-bar-closed}
	   \end{subfigure}
	   \begin{subfigure}{.22\textwidth}
		\begin{tikzpicture}[scale=.8, every node/.style={scale=0.8}]
		\node[below] (a) at (0,0) {0};
		\node[below] (b) at (1,0) {1};
		\node[below] (c) at (2,0) {2};
		\node[below] (d) at (3,0) {3};

		\draw[->] (0,0) -- (4,0);
		\draw (0,.05) -- (0,-.05);
		\draw (1,.05) -- (1,-.05);
		\draw (2,.05) -- (2,-.05);
		\draw (3,.05) -- (3,-.05);

		\draw[very thick,red] (0,.4) -- (1,.4);
		\draw[red] (-.05,.5) -- (0,.5) -- (0,.3) -- (-.05, .3);
		\draw[red] (1.05,.5) -- (1,.5) -- (1,.3) -- (1.05, .3);
		\draw[thick,red] (2, .5) -- (2, .3);
		\draw[red] (1.9,.5) -- (2.1, .5);
		\draw[red] (1.9,.3) -- (2.1, .3);

		\draw[very thick,blue] (0,-.5) -- (1,-.5);
		\draw[blue] (-.05,-.6) -- (0,-.6) -- (0,-.4) -- (-.05, -.4);
		\draw[blue] (1.05,-.6) -- (1,-.6) -- (1,-.4) -- (1.05, -.4);
		\end{tikzpicture}
	\label{fig:time-bar-almost}
	\caption{}
	\end{subfigure}
	\caption{Private (above in red) and public (below in blue) durations in $\TAi{2}$ (a) and \Aquatre{} (b)}
	\label{fig:time-bar}
\end{center}
\end{figure}

\subsubsection{Problems}

We are interested in the same problems as before, this time in the context of \closedOpacityText{} or \almostOpacityText{}.
Formally:

\defProblem
	{\closedFullOpacityText{} (resp.\ \almostFullOpacityText{}) \metaStrategy{} emptiness problem}
	{A TA $\TA$}
	{Decide the emptiness of the set of \metaStrategies{} $\metastrat$ such that $\TA$ is \closedFullOpaqueText{} (resp.\ \almostFullOpaqueText{}) with~$\metastrat$.}
\defProblem
	{\closedFullOpacityText{} (resp.\ \almostFullOpacityText{}) \metaStrategy{} synthesis problem}
	{A TA $\TA$}
	{Synthesize a \metaStrategy{} $\metastrat$ such that $\TA$ is \closedFullOpaqueText{} (resp.\ \almostFullOpaqueText{}) with~$\metastrat$.}
\subsection{Characterization}

A single belief is not sufficient to characterize a \TAtext{} that is not \almostOpaqueText{} (resp.\ closed).
Indeed, suppose a time~$t$ such that $\belief{t}$ is \badBelief{} for \fullOpacityText{}.
This means that a punctual violation of opacity exists.
This kind of violation can be allowed in the context of almost and \closedFullOpacityText{}.
It is problematic if the times around it are also a violation of opacity.

More specifically, a violation to \almostFullOpacityText{} corresponds to a succession of \badBelief{} beliefs, \ie{} every punctual violation is ignored.
On the other hand, a violation to \closedFullOpacityText{} corresponds either to a succession of \badBelief{} beliefs, or to a unique \badBelief{} belief surrounded by beliefs that do not contain any final region.
Intuitively, a punctual violation is ignored if it belongs to an interval where private and public final states can be reached.

When considering \metaStrategies{}, this issue is partially lifted as the behaviour of the system within an open interval is the same:
given $k\in\setN$, and $t,t'\in (k,k+1)$, $\beliefcontrol{t}{\metastrat}=\beliefcontrol{t'}{\metastrat}$.
This a consequence of the shrinking argument within
the proof of \cref{theorem:opacity-leaking-belief}. As a consequence, only a belief representing an interval can be \badBelief{} for \almostFullOpacityText{}.

We define formally \badBelief{} belief for \almostFullOpacityText{} and \closedFullOpacityText{} for a \metaStrategy{} as follows.
\begin{definition}[\BadBelief{} belief for \almostFullOpacityText{}]\label{def:leaking-belief-almost-full}
Let $\metastrat$ a \metaStrategy{}, a belief $\belief{} \in \beliefcontrolset{\TA}{\metastrat}$ is \badBelief{} for \almostFullOpacityText{} when it is \badBelief{} for \fullOpacityText{} and there is $k \in \setN$ such that $\belief{} = \belief{k+}$.  
\end{definition}
\begin{definition}[\BadBelief{} belief for \closedFullOpacityText{}]\label{def:leaking-belief-closed-full}
Let $\metastrat$ a \metaStrategy{}, a belief $\belief{} \in \beliefcontrolset{\TA}{\metastrat}$ is \badBelief{} for \closedFullOpacityText{} when either: 
\begin{itemize}
	\item it is \badBelief{} for \almostOpacityText{}, or
	\item there is $k \in \setN$ such that $\belief{} = \belief{k}$, and 
	\begin{itemize}
		\item $\belief{(k-1)+} \cup \finalclass{\TA} = \emptyset$,
		\item $\belief{(k+1)+} \cup \finalclass{\TA} = \emptyset$ and
		\item \belief{} is \badBelief{} for \fullOpacityText{}
	\end{itemize} 
\end{itemize}
\end{definition}

As previously, the \almostFullOpacityText{} and \closedFullOpacityText{} of a TA can be characterized thanks to the interval beliefs.
Both lemma can be achieved with a proof similar to the proof of \cref{theorem:opacity-leaking-belief}.

\begin{lemma}[Beliefs characterization for \almostFullOpacityText{}]\label{lem:beliefs-almost-full-opacity}
	A TA $\TA$ is \almostFullOpaqueText{} with a \metaStrategy{} $\metastrat$ iff, for all $\belief{} \in \beliefcontrolset{\TA}{\metastrat}$, $\belief{}$ is not \badBelief{} for \almostFullOpacityText{}.
\end{lemma}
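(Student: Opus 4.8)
The plan is to reduce the statement to a clean interval-by-interval dichotomy about which durations are private or public, and then read this dichotomy off the interval beliefs. The starting point is a regularity property already contained in the proof of \cref{theorem:opacity-leaking-belief} and in \cref{cor:encint}: for a fixed \metaStrategy{}~$\metastrat$ and any $k \in \setN$, the belief $\beliefcontrol{k+}{\metastrat}$ is the common value of $\beliefcontrol{t}{\metastrat}$ for all $t \in (k,k+1)$, and the $\shrink$ construction transports a compatible run of non-integer duration $t \in (k,k+1)$ ending in a region $\region$ into, for every $t' \in (k,k+1)$, a compatible run of duration $t'$ ending in a region of the same ``type'' (final-and-secret, or final-and-public). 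Concretely, this yields: $\PrivDurVisitStrat{\TA}{\metastrat} \cap (k,k+1)$ equals $(k,k+1)$ if $\beliefcontrol{k+}{\metastrat} \cap \finalclass{\TA} \cap \secret{\TA} \neq \emptyset$, and equals $\emptyset$ otherwise; symmetrically for $\PubDurVisitStrat{\TA}{\metastrat}$ using $\notsecret{\TA}$.

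Next, I would analyse $\open{\PrivDurVisitStrat{\TA}{\metastrat} \oplus \PubDurVisitStrat{\TA}{\metastrat}}$ piece by piece. At an integer point $\{k\}$, the symmetric difference contributes at most the single point $k$, whose interior is empty, hence integer times never affect the interior of the whole set. On an open unit interval $(k,k+1)$, by the previous paragraph each of the two duration sets restricted to $(k,k+1)$ is either all of $(k,k+1)$ or disjoint from it, so their symmetric difference restricted to $(k,k+1)$ is either $(k,k+1)$ (exactly when precisely one of the two sets is full) or empty. Therefore $\open{\PrivDurVisitStrat{\TA}{\metastrat} \oplus \PubDurVisitStrat{\TA}{\metastrat}} \neq \emptyset$ if and only if there exists $k \in \setN$ such that exactly one of $\beliefcontrol{k+}{\metastrat} \cap \finalclass{\TA} \cap \secret{\TA} \neq \emptyset$ and $\beliefcontrol{k+}{\metastrat} \cap \finalclass{\TA} \cap \notsecret{\TA} \neq \emptyset$ holds, i.e.\ if and only if $\beliefcontrol{k+}{\metastrat}$ is \badBelief{} for \fullOpacityText{} in the sense of \cref{definition:leaking}.

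Finally, I would match this equivalence with \cref{def:leaking-belief-almost-full}: a belief of $\beliefcontrolset{\TA}{\metastrat} = \beliefencountset{\TA}{\metastrat}$ is \badBelief{} for \almostFullOpacityText{} exactly when it is of the form $\beliefcontrol{k+}{\metastrat}$ and is \badBelief{} for \fullOpacityText{}, while the integer-time beliefs $\beliefcontrol{k}{\metastrat}$ are by definition never \badBelief{} for \almostFullOpacityText{}. Combining with the previous step, $\TA$ fails to be \almostFullOpaqueText{} with~$\metastrat$ if and only if some interval belief in $\beliefcontrolset{\TA}{\metastrat}$ is \badBelief{} for \almostFullOpacityText{}, which is the contrapositive of the lemma. (The ``$\impliedby$'' direction of the original \cref{theorem:opacity-leaking-belief}, with its explicit $\shrink$-based reconstruction of runs, is exactly what underlies the all-or-nothing claim above, so it can be cited rather than redone.)

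The main point requiring care — more bookkeeping than genuine difficulty — is the ``all-or-nothing'' behaviour of the private (resp.\ public) durations on each open unit interval: one must invoke the $\shrink$ argument of \cref{theorem:opacity-leaking-belief} to transport a single private (resp.\ public) run of non-integer duration into private (resp.\ public) runs of every other duration in the same open unit interval while preserving $\metastrat$-compatibility. Once that is in place, the remainder is a routine computation of interiors and a direct comparison with \cref{def:leaking-belief-almost-full}.
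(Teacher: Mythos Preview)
Your proposal is correct and follows essentially the same approach as the paper, which simply states that the proof ``can be achieved similarly to the proof of \cref{theorem:opacity-leaking-belief}''. You have in fact spelled out explicitly what that similarity amounts to: the $\shrink$ argument gives the all-or-nothing behaviour of $\PrivDurVisitStrat{\TA}{\metastrat}$ and $\PubDurVisitStrat{\TA}{\metastrat}$ on each open unit interval (as the paper itself notes just before \cref{def:leaking-belief-almost-full}), whence the interior of the symmetric difference is nonempty exactly when some $\beliefcontrol{k+}{\metastrat}$ is \badBelief{} for \fullOpacityText{}.
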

\begin{lemma}[Beliefs characterization for \closedFullOpacityText{}]\label{lem:beliefs-closed-full-opacity}
	A TA $\TA$ is \closedFullOpaqueText{} with a \metaStrategy{} $\metastrat$ iff, for all $\belief{} \in \beliefcontrolset{\TA}{\metastrat}$, $\belief{}$ is not \badBelief{} for \closedFullOpacityText{}.
\end{lemma}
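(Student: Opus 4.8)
The plan is to mirror the two-directional argument of \cref{theorem:opacity-leaking-belief}, reusing the same machinery (the controlled \beliefAutomaton{}, \cref{lem:strat-runs-equivalence}, and \cref{cor:encint} which gives $\beliefencountset{\TA}{\metastrat}=\beliefcontrolset{\TA}{\metastrat}$), but now tracking \emph{closures} of duration sets rather than plain equality. The single observation that makes everything go through is the one isolated just before the statement: for a fixed \metaStrategy{}~$\metastrat$ and fixed $k$, the point beliefs $\beliefcontrol{t}{\metastrat}$ coincide for all $t\in(k,k+1)$ and equal $\belief{k+}$ — this is exactly the shrinking argument inside the proof of \cref{theorem:opacity-leaking-belief}. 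Hence a whole open interval $(k,k+1)$ is contained in $\PrivDurVisitStrat{\TA}{\metastrat}$ (resp.\ $\PubDurVisitStrat{\TA}{\metastrat}$) as soon as $\belief{k+}$ contains one final private (resp.\ public) region, and is disjoint from it otherwise. A second elementary remark is that a belief which is \emph{not} \badBelief{} for \fullOpacityText{} but contains some final region must contain both a final private and a final public region.

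For the ``$\Leftarrow$'' direction, assume no interval belief in $\beliefcontrolset{\TA}{\metastrat}$ is \badBelief{} for \closedFullOpacityText{}; by symmetry it suffices to prove $\closed{\PrivDurVisitStrat{\TA}{\metastrat}}\subseteq\closed{\PubDurVisitStrat{\TA}{\metastrat}}$. Pick $\tau\in\closed{\PrivDurVisitStrat{\TA}{\metastrat}}$ and distinguish: (i) $\tau\in(k,k+1)$ is itself a private duration — then $\belief{k+}$ has a final private region, is not \badBelief{} for \almostFullOpacityText{} hence not for \fullOpacityText{}, so it also has a final public region, and the whole interval, hence $\tau$, is public; (ii) $\tau=k\in\setN$ is a private duration — then $\belief{k}$ has a final private region; if $\belief{k}$ is not \badBelief{} for \fullOpacityText{} we are done immediately, otherwise $\belief{k}$ being not \badBelief{} for \closedFullOpacityText{} forces (via \cref{def:leaking-belief-closed-full}) one of the two interval beliefs surrounding $k$ to contain a final region, which (being itself non-\badBelief{} for full) contains a final public region, so $k$ is a limit from that side of public durations; (iii) $\tau$ is a proper limit point of $\PrivDurVisitStrat{\TA}{\metastrat}$ — a sequence of private durations converging to $\tau$ must, from some index on, lie inside a single interval $(k,k+1)$ when $\tau\notin\setN$, or inside $(k-1,k)\cup\{k\}\cup(k,k+1)$ when $\tau=k$, and each subcase reduces to the cases above, placing $\tau$ in $\closed{\PubDurVisitStrat{\TA}{\metastrat}}$.

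For the ``$\Rightarrow$'' direction I would argue by contraposition: if some $\belief{}\in\beliefcontrolset{\TA}{\metastrat}$ is \badBelief{} for \closedFullOpacityText{}, exhibit a point separating the two closures. If $\belief{}=\belief{k+}$ is \badBelief{} for \almostFullOpacityText{} (i.e.\ \badBelief{} for \fullOpacityText{}), say with a final private but no final public region, then $(k,k+1)\subseteq\PrivDurVisitStrat{\TA}{\metastrat}$ while $(k,k+1)\cap\PubDurVisitStrat{\TA}{\metastrat}=\emptyset$, and since $(k,k+1)$ is open it even misses $\closed{\PubDurVisitStrat{\TA}{\metastrat}}$; any interior point, e.g.\ $k+\tfrac{1}{2}$, is the required witness. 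If instead $\belief{}=\belief{k}$ with both surrounding interval beliefs containing no final region and $\belief{k}$ \badBelief{} for \fullOpacityText{} (WLOG a final private but no final public region), then $k\in\PrivDurVisitStrat{\TA}{\metastrat}$, $k\notin\PubDurVisitStrat{\TA}{\metastrat}$, and $(k-1,k)$ and $(k,k+1)$ carry no public duration, so $k\notin\closed{\PubDurVisitStrat{\TA}{\metastrat}}$ — again a witness. The symmetric cases (public instead of private) are identical, so $\TA$ is not \closedFullOpaqueText{} with~$\metastrat$.

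The main obstacle — and the reason this is not a verbatim copy of \cref{theorem:opacity-leaking-belief} — is the bookkeeping around limit points and the ``rescue by a neighbouring interval'' clause of \cref{def:leaking-belief-closed-full}: one must check that the only way a duration enters a closure is through one of the finitely many configurations listed above, and that the non-\badBelief{} hypothesis is strong enough in each of them. This is precisely where the ``constant belief on an open interval'' fact does all the work; for arbitrary (non-\metaStrategy{}) strategies the shrinking argument fails, the neighbouring intervals would no longer be uniformly opaque, and the clean case analysis would collapse. Once this setup is in place, each individual case is a short computation of the type already carried out in the proof of \cref{theorem:opacity-leaking-belief}, combined with \cref{cor:encint} to move freely between $\beliefencountset{\TA}{\metastrat}$ and $\beliefcontrolset{\TA}{\metastrat}$.
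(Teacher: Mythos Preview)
Your proposal is correct and follows essentially the same approach as the paper, which simply states that the lemma ``can be achieved with a proof similar to the proof of \cref{theorem:opacity-leaking-belief}''. You have fleshed out that sketch in detail, correctly exploiting the key observation (constancy of the point beliefs on each open interval $(k,k+1)$ under a \metaStrategy{}) to reduce the closure comparison to a finite case analysis on the integer and interval beliefs.
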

\subsection{Results}

We can conclude positively for the problems on \closedFullOpacityText{}
and \almostFullOpacityText{}.

As for \fullOpacityText{}, this result is due to the equivalence between finding a \metaStrategy{} for a \TAtext{} and finding a strategy in a variation of the corresponding \beliefAutomaton{}, and to the fact that such a strategy corresponds to a winning strategy in a one-player Büchi game.
Hence, both following result can be achieved similarly to the proof of \cref{theorem:full-problems}, using \cref{lem:beliefs-closed-full-opacity} in place of \cref{theorem:opacity-leaking-belief}. 
Note though that the game that one needs to build when considering 
\closedFullOpacityText{} is slightly expanded compared to \cref{theorem:full-problems}:
a Boolean must be included to note whether a singleton is violating opacity and the previous
interval belief did not contain a region associated to a final location. 
This information strengthens the constraint on the next interval belief, requiring that it
must contain a region associated to a final location.

\begin{theorem}\label{theorem:almost-full}
	The \almostFullOpacityText{} \finitelyvarying{} controller emptiness problem is decidable; the \almostFullOpacityText{} \finitelyvarying{} controller synthesis problem is solvable.
\end{theorem}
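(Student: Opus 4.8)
The plan is to mirror the argument of \cref{theorem:full-problems}, swapping the characterisation of \cref{theorem:opacity-leaking-belief} for that of \cref{lem:beliefs-almost-full-opacity}. By \cref{lem:beliefs-almost-full-opacity} together with \cref{cor:encint}, a \TAtext{}~$\TA$ is \almostFullOpaqueText{} with a \metaStrategy{}~$\metastrat$ iff no interval belief in $\beliefcontrolset{\TA}{\metastrat}$ is \badBelief{} for \almostFullOpacityText{}; and by \cref{def:leaking-belief-almost-full} this amounts to requiring that no \emph{interval} belief $\belief{k+}$ (the one accumulated while time ranges over an open interval $(k,k+1)$) is \badBelief{} for \fullOpacityText{}, whereas \emph{punctual} beliefs $\belief{k}$ reached at integer timestamps are permitted to leak. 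So the task reduces, exactly as before, to deciding/constructing a winning strategy in a one-player Büchi game, which we solve via \cref{lem:solving-belief}.

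First I would reuse the game $\game$ of \cref{theorem:full-problems} — whose states pair the current belief of $\BeliefAut{\TA}$ with an accumulator of the beliefs visited since the last $(\translabelchge,\cdot)$ action, and whose good actions are those of the form $(\translabelchge,\cdot)$ — but add one bit of memory, a flag that alternates on every $(\translabelchge,\cdot)$ action (and is left unchanged by $(\translabelsame,\cdot)$ and $(\translabelinstant,\cdot)$), initialised so that it marks the current segment as ``a punctual, integer time'' right after the initial $(\translabelinstant,\cdot)$ action. Since the $\translabelchge$-labelled transitions of the belief automaton strictly alternate between ``crossing an integer from below'' and ``leaving an integer from above'', this flag tells, whenever a $(\translabelchge,\cdot)$ action is about to be taken, whether the segment just completed was an open interval or a single integer point. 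I would keep the winning condition unchanged (infinitely many good actions) and weaken the side condition of \cref{theorem:full-problems}: a $(\translabelchge,\cdot)$ transition is forbidden only when the flag marks the completed segment as an open interval \emph{and} the accumulator is \badBelief{} for \fullOpacityText{}; punctual segments impose no constraint. As in \cref{theorem:full-problems}, the accumulator reached on such a transition is, by \cref{cor:encint}, precisely the interval belief $\belief{k+}$ of the controlled belief automaton, so $\game$ forbids exactly the leaks that \almostFullOpacityText{} forbids.

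Correctness then proceeds as in \cref{theorem:full-problems}: a \metaStrategy{} making $\TA$ \almostFullOpaqueText{} unwinds into a winning play of $\game$ (the removed $\translabelchge$-transitions are never needed, since the interval beliefs are not \badBelief{}, and infinitely many good actions occur because $\metastrat$ changes its choice finitely often per time unit), and conversely a winning strategy of $\game$ takes good actions infinitely often, hence has only finitely many consecutive non-good actions, and therefore yields a \metaStrategy{} whose encountered interval beliefs are not \badBelief{} for \fullOpacityText{}, so $\TA$ is \almostFullOpaqueText{} with it by \cref{lem:beliefs-almost-full-opacity} and \cref{cor:encint}. Decidability of the emptiness problem and solvability of the synthesis problem then follow from \cref{lem:solving-belief} (and, as in \cref{theorem:full-problems}, with the same \EXPSPACE{}/\twoEXPTIME{} bounds, since $\game$ remains at most doubly exponential and the non-determinism is removed by Savitch's theorem). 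The part I expect to be delicate is precisely the bookkeeping separating interval from punctual beliefs: one must check, using the ``shrinking'' argument underlying \cref{cor:encint}, that the accumulator examined on an ``arrive at an integer'' $\translabelchge$-transition really coincides with the interval belief $\belief{k+}$, and in particular that letting the punctual beliefs $\belief{k}$ leak is genuinely equivalent to the set-theoretic condition $\open{\PrivDurVisitStrat{\TA}{\metastrat} \oplus \PubDurVisitStrat{\TA}{\metastrat}} = \emptyset$ of \cref{def:almost-full-opacity}.
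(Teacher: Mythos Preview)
Your proposal is correct and follows essentially the same approach as the paper: the paper's proof simply says to mirror \cref{theorem:full-problems} using the characterisation of \cref{lem:beliefs-almost-full-opacity} in place of \cref{theorem:opacity-leaking-belief}, and your parity flag distinguishing open-interval segments from punctual ones is precisely the (unspecified) ``variation of the corresponding \beliefAutomaton{}'' the paper alludes to. If anything, you give more detail than the paper does, including the complexity bounds, which the theorem statement itself omits.
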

\begin{theorem}\label{theorem:closed-problem}
	The \closedFullOpacityText{} \metaStrategy{} controller emptiness problem is decidable;  the \closedFullOpacityText{} \metaStrategy{} controller synthesis problem is solvable.
\end{theorem}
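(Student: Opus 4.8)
The plan is to mirror the proof of \cref{theorem:full-problems} almost verbatim, substituting the characterization \cref{lem:beliefs-closed-full-opacity} for \cref{theorem:opacity-leaking-belief}, and adjusting the one-player Büchi game $\game$ to account for the slightly richer notion of a \badBelief{} belief for \closedFullOpacityText{}. Recall that the construction in \cref{theorem:full-problems} maintains pairs of beliefs $(\belief{1},\belief{2})$, where $\belief{1}$ tracks the current state of the \beliefAutomaton{} and $\belief{2}$ accumulates the encountered belief for the current interval; a move labelled by $(\translabelchge,\activated)$ is permitted only when the accumulated belief $\belief{2}$ is not \badBelief{}, and at that point the accumulator is reset. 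By \cref{cor:encint}, the encountered beliefs of the controlled \beliefAutomaton{} are exactly the interval beliefs $\beliefcontrolset{\TA}{\metastrat}$, so forbidding \badBelief{} interval beliefs in $\game$ is equivalent to $\TA$ being \closedFullOpaqueText{} with the corresponding \metaStrategy{}. The player wins iff it takes a $\gameGood$-action (\ie{} a $\translabelchge$-action) infinitely often while never being blocked, and \cref{lem:solving-belief} gives decidability in \NLOGSPACE{} in the size of $\game$; since $\game$ is doubly exponential in $\TA$ (one exponential for the region automaton, a second for the subset construction), emptiness lands in \EXPSPACE{} (using Savitch, as $\EXPSPACE=\NEXPSPACE$), and the synthesis of a winning strategy—hence of a \metaStrategy{}—is in \twoEXPTIME{}.

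The one genuinely new ingredient, as the excerpt already flags, is that a singleton interval belief $\belief{k}$ can be \badBelief{} for \closedFullOpacityText{} even when it is \badBelief{} for \fullOpacityText{}, provided neither the surrounding interval beliefs $\belief{(k-1)+}$ nor $\belief{(k+1)+}$ contain a final region (\cref{def:leaking-belief-closed-full}). To encode this in a finite game I would add a single Boolean flag to the game state, set when the player passes through a singleton interval belief that (i) is \badBelief{} for \fullOpacityText{} and (ii) whose preceding open-interval belief contained no final region. When that flag is set, the transition that would close the \emph{next} open interval (the next $\translabelchge$ at an integer boundary, corresponding to $\belief{(k+1)+}$) is permitted only if the just-accumulated interval belief $\belief{(k+1)+}$ \emph{does} contain a region associated to a final location; otherwise the play is blocked. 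This keeps the state space of $\game$ within a constant factor of the one used for \cref{theorem:full-problems}, so the complexity bounds are unchanged, and the theorem as stated only claims decidability and solvability, which the argument delivers comfortably.

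The remaining work is purely bookkeeping: verifying that the modified game faithfully realizes \cref{lem:beliefs-closed-full-opacity}. In the forward direction, a \metaStrategy{} $\metastrat$ making $\TA$ \closedFullOpaqueText{} avoids all \closedFullOpacityText{}-\badBelief{} interval beliefs, so it never triggers a blocked transition in $\game$, and by \cref{definition:metastrategy} it selects a $\translabelchge$-action infinitely often, yielding a winning play. Conversely, a winning strategy in $\game$ takes $\translabelchge$-actions infinitely often, so between consecutive integer boundaries only finitely many $\translabelsame$-actions occur; this finiteness is exactly what is needed to read off a bona fide \metaStrategy{} (\cref{definition:metastrategy} requires $\metastrat((n,n+1))\in(2^{\ActionSetC})^*$, a finite word), and by construction that \metaStrategy{} never encounters a \closedFullOpacityText{}-\badBelief{} belief, so \cref{lem:beliefs-closed-full-opacity} (together with \cref{cor:encint}) gives \closedFullOpacityText{}. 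The main obstacle—such as it is—is getting the Boolean-flag transition discipline right so that it checks $\belief{(k-1)+}$, $\belief{k}$, and $\belief{(k+1)+}$ in the correct temporal order within a single finite-memory game; once that is pinned down, everything else is a transcription of the proof of \cref{theorem:full-problems}, and likewise for \cref{theorem:almost-full}, which is simpler still since there the only \badBelief{} beliefs are interval beliefs of the form $\belief{k+}$ and no extra memory is required at all.
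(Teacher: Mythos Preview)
Your proposal is correct and follows essentially the same approach as the paper: reduce to a one-player B\"uchi game as in \cref{theorem:full-problems}, invoke \cref{lem:beliefs-closed-full-opacity} via \cref{cor:encint}, and enlarge the game state with a single Boolean flag recording that a singleton belief was \badBelief{} for \fullOpacityText{} while the preceding open-interval belief contained no final region, so that the next open-interval belief is then required to contain one. Your write-up is in fact more detailed than the paper's own sketch, which merely points to \cref{theorem:full-problems} and notes the need for exactly this Boolean.
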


\begin{remark}
	We can extend these notions of almost and \closedFullOpacityText{} to their weak counterparts, with similar results.
\end{remark}

\section{Conclusion}\label{section:conclusion}
We addressed here the control of a system modelled by a TA to make it \fullOpaqueText{} (execution-time opaque).
On the one hand, we showed that the strategy emptiness problem is undecidable.
On the other hand, we showed that not only the strategy emptiness problem becomes decidable when considering \metaStrategies{} (\ie{} in which we specify the order of---a finite number of---strategy changes within interval time units, without fixing their actual changing time),
but also we can effectively solve the controller synthesis problem, by building such a controller.

In addition, we studied two other versions of opacity from~\cite{ALLMS23}, namely
	\existentialOpacityText{} (in which we are simply interested in the \emph{existence} of one execution time for which opacity is ensured),
	and
	\weakOpacityText{} (in which it is harmless that the attacker deduces that the private location was \emph{not} visited).

\sectionEight{%
We also addressed two extensions (\closedFullOpacityText{} and \almostFullOpacityText{}) which can relate to a \emph{robust} setting where the attacker cannot have an infinite precision.
}

\paragraph{Future works}

A natural next step will be to introduce timing parameters \emph{à la}~\cite{ALLMS23}, and address control in that setting.

Addressing the control for the definition of opacity (based on languages) as in~\cite{Cassez09} would be interesting in two settings:
\begin{oneenumerate}%
	\item the general setting, where the controller synthesis will be undecidable but may terminate for some semi-algorithms, and
	\item decidable subclasses that remain to be exhibited, presumably one-clock TAs, as in~\cite{ADL24}.
\end{oneenumerate}%
Moreover, an analysis of the strategy obtained to ensure opacity might lead to only a static modification of the structure (\eg{} deletion of a transition)---which will be interesting to study.

Finally, the implementation of this work is on our agenda.
While implementing the beliefs directly would be straightforward, it would probably result in an unnecessary blowup, and therefore an adaptation with structures such as zones~\cite{BBLP06} (which does not seem immediate) should be designed.

\subsection*{Acknowledgments}

	This work is partially supported by ANR BisoUS (ANR-22-CE48-0012) and by ANR TAPAS (ANR-24-CE25-5742).
	\newcommand{\CCIS}{Communications in Computer and Information Science}
	\newcommand{\ENTCS}{Electronic Notes in Theoretical Computer Science}
	\newcommand{\FAC}{Formal Aspects of Computing}
	\newcommand{\FundInf}{Fundamenta Informaticae}
	\newcommand{\FMSD}{Formal Methods in System Design}
	\newcommand{\IJFCS}{International Journal of Foundations of Computer Science}
	\newcommand{\IJSSE}{International Journal of Secure Software Engineering}
	\newcommand{\IPL}{Information Processing Letters}
	\newcommand{\JAIR}{Journal of Artificial Intelligence Research}
	\newcommand{\JLAP}{Journal of Logic and Algebraic Programming}
	\newcommand{\JLAMP}{Journal of Logical and Algebraic Methods in Programming} %
	\newcommand{\JLC}{Journal of Logic and Computation}
	\newcommand{\LMCS}{Logical Methods in Computer Science}
	\newcommand{\LNCS}{Lecture Notes in Computer Science}
	\newcommand{\RESS}{Reliability Engineering \& System Safety}
	\newcommand{\RTS}{Real-Time Systems}
	\newcommand{\SCP}{Science of Computer Programming}
	\newcommand{\SOSYM}{Software and Systems Modeling ({SoSyM})}
	\newcommand{\STTT}{International Journal on Software Tools for Technology Transfer}
	\newcommand{\TCS}{Theoretical Computer Science}
	\newcommand{\TOPLAS}{{ACM} Transactions on Programming Languages and Systems ({ToPLAS})}
	\newcommand{\ToPNoC}{Transactions on {P}etri Nets and Other Models of Concurrency}
	\newcommand{\TOSEM}{{ACM} Transactions on Software Engineering and Methodology ({ToSEM})}
	\newcommand{\TSE}{{IEEE} Transactions on Software Engineering}

\newpage
	\bibliographystyle{alphaurl} %
	\bibliography{biblio}

\newpage
\appendix

\section{Table of notations}\label{appendix:notations}
	\begin{tabular}{p{.075\textwidth} p{.35\textwidth}}
	\rowcolor{USPNsable}
	\multicolumn{2}{l}{
		\bfseries{}Timed automaton
	}\\
	\hline
	\TA & a timed automaton\\
	$\ActionSet$ & a finite set of actions of a TA\\
	$\LocSet$ & a finite set of locations of a TA\\
	$\locinit$ & the initial location of a TA\\
	$\locpriv$ & the private location of a TA\\
	$\LocFinalSet$ & the set of final locations of a TA \\
	$\ClockSet$ & a finite set of clocks \\
	$\clocki{i}$ & the $i^{th}$ clock \\
	$\ClockCard$ & the number of clocks \\
	$\invariantof{\loc}$ & the invariant of location $\loc$ \\
	$\EdgeSet$& the finite set of edges of a TA \\
	$\edge$ & an edge \\
	$\resets$ & a set of clocks to be reset\\
	$\guard$& a guard\\
	$\clockval$ & a clock valuation\\
	$\paramd$ & a delay\\
	$\clockextra$ & extra clock\\
	$\constantmax{i}$ & largest constant for clock~$\clock_i$\\
	\hline
	\end{tabular}

\bigskip

\noindent
\begin{tabular}{p{.075\textwidth} p{.35\textwidth}}
\rowcolor{USPNsable}
\multicolumn{2}{l}{
		\bfseries{}Semantics
}\\
\hline
$\semantics{\TA}$ & the semantics of TA \TA \\
$\TTSStates$ & the set of states in $\semantics{\TA}$\\
$\TTSinit$ & the initial state in $\semantics{\TA}$\\
$\TTSstate$ & a state in $\semantics{\TA}$ \\
$\TTStransition$ & transition function of $\semantics{\TA}$ \\
$\transitionWith{\edge}$ & a discrete transition with edge $\edge$\\
$\transitionWith{\paramd}$ & a delay transition with delay $\paramd$\\
$\run$ & a run\\
$\laststate(\run)$ & last state of run $\run$\\
\hline
\end{tabular}

\bigskip

\noindent
\begin{tabular}{p{.075\textwidth} p{.35\textwidth}}
	\rowcolor{USPNsable}
	\multicolumn{2}{l}{
		\bfseries{}Regions
}\\
\hline
$\region$ & a region\\
$\class{\TTSstate}$ & equivalence class of $\TTSstate$\\
$\regset{\TA}$ & regions set of~$\TA$\\
$\finalclass{\TA}$ & set of final regions of \TA\\
$\regaut{\TA}$ & region automaton of~$\TA$\\
$\RegAutActions$ & set of actions of $\regaut{\TA}$\\
$\RegAutTransitions$ & transition function of $\regaut{\TA}$\\
$\RegAutTransitionWith{}$ & a transition in $\regaut{\TA}$ \\
\hline
\end{tabular}

\bigskip

\noindent
\begin{tabular}{p{.12\textwidth} p{.305\textwidth}}
	\rowcolor{USPNsable}
\multicolumn{2}{l}{
	\bfseries{}Opacity
}\\
\hline
$\PrivVisit{\TA}$ & set of private runs of \TA\\
$\PubVisit{\TA}$ & set of public runs of \TA\\
$\PrivDurVisit{\TA}$ & set of durations of private runs of \TA\\
$\PubDurVisit{\TA}$ & set of durations of public runs of \TA\\
$\PrivDurVisitStrat{\TA}{\strategy}$ & set of durations of private and  \compatible{} runs \\ 
$\PubDurVisitStrat{\TA}{\strategy}$ & set of durations of public and \compatible{}  runs \\ 
$\PrivDurVisitStrat{\TA}{\metastrat}$ & set of durations of private and  \compatible{} runs, $\strategy \satisfies \metastrat$ \\ 
$\PubDurVisitStrat{\TA}{\metastrat}$ & set of durations of public and \compatible{}  runs, such that $\strategy \satisfies \metastrat$ \\ 
\hline
\end{tabular}

\bigskip

\noindent
\begin{tabular}{p{.075\textwidth} p{.35\textwidth}}
	\rowcolor{USPNsable}
	\multicolumn{2}{l}{
		\bfseries{}Strategies
}\\
\hline
$\ActionSetC$ & controllable actions \\
$\ActionSetU$ & uncontrollable actions \\
$\strategy$ & a strategy \\
$\metastrat$ & a \metaStrategy{}\\
$\substrat$ & a strategy within an interval \\
$\subinterval$ & a subinterval \\ 
$\TTStransitionControl$ & transition function of the semantics of~$\TA$ controlled by strategy~$\strategy$\\
$\transitionWithControl{}$ & a discrete or delay transition in the semantics of~$\TA$ controlled by strategy~$\strategy$\\
$\xrightarrow{\paramd,e}_{\strategy}$ & a transition with delay $\paramd$ and edge $\edge$ in the semantics of~$\TA$ controlled by strategy~$\strategy$\\
$\activated$ & set of activated controllable actions \\
$\strategy \satisfies \metastrat$ & $\strategy$ satisfies $\metastrat$\\ 
\hline
\end{tabular}

\bigskip

\noindent
\begin{tabular}{p{.075\textwidth} p{.35\textwidth}}
	\rowcolor{USPNsable}
	\multicolumn{2}{l}{
		\bfseries{}Duplicated TA
}\\
\hline
$\TAdup$ & a duplicated TA\\
$\LocSet_{priv}$ & set of private states \\
$\LocSet_{pub}$ & set of public states \\
$\locpriv$& a private state \\
$\secret{\TA}$ & set of regions reachable on a run visiting $\locpriv$\\
$\notsecret{\TA}$ & set of regions not reachable on a run visiting $\locpriv$\\
\hline
\end{tabular}

\bigskip

\noindent
\begin{tabular}{p{.075\textwidth} p{.35\textwidth}}
	\rowcolor{USPNsable}
	\multicolumn{2}{l}{
		\bfseries{}Beliefs
}\\
\hline
$\belief{t}$ & belief for time $t$\\
$\belief{k+}$ & belief for the interval $(k,k+1)$\\  
$\translabelinstant, \translabelsame, \translabelchge, \translabelany$ & evolution of $\fract{\clockextra}$\\ 
$\BeliefAut{\TA}$ & \beliefAutomaton{} of~$\TA$\\
$\BeliefAutState{\TA}$ & states of $\BeliefAut{\TA}$\\
$\BeliefAutActions{\TA}$ & actions of $\BeliefAut{\TA}$\\
$\SeqTransitions$ & a sequence of actions of $\BeliefAut{\TA}$\\
$\run \admits \SeqTransitions$ & $\run$ admits $\SeqTransitions$\\ 
$\BeliefAutTransitions{\TA}$ & transition function of $\BeliefAut{\TA}$\\
$\beliefcontrolset{\TA}{\metastrat}$ & set of interval beliefs reachable by a \metaStrategy{} $\metastrat$ \\
$\BeliefControlAut{\TA}{\metastrat}$ & controlled \beliefAutomaton{} of~$\TA$ and \metaStrategy{} $\metastrat$\\
$\BeliefControlAutState{\TA}{\metastrat}$ & states of controlled \beliefAutomaton{} \\
$\BeliefControlAutActions{\TA}{\metastrat}$ & actions of controlled \beliefAutomaton{} \\
$\BeliefControlAutTransitions{\TA}{\metastrat}$ & transition function of controlled \beliefAutomaton{} \\
$\beliefencountset{\TA}{\metastrat}$ & set of beliefs encountered by $\BeliefControlAut{\TA}{\metastrat}$\\ 
\hline
\end{tabular}

\end{document}